\newcommand{\mathsym}[1]{{}}
\newcommand{\unicode}[1]{{}}
\newtheorem{thm}{Theorem}[section]
\newtheorem{cor}[thm]{Corollary}
\newtheorem{lem}[thm]{Lemma}
\newtheorem{prop}[thm]{Proposition}
\theoremstyle{definition}
\newtheorem{notation}[thm]{Notation}
\newtheorem{defn}[thm]{Definition}
\newtheorem{rem}[thm]{Remark}
\newtheorem*{defn*}{Definition}
\newtheorem*{rems*}{Remarks}
\newtheorem*{rem*}{Remark}
\numberwithin{equation}{section}
\begin{document}

\title [The Wigner caustic on shell ] {The Wigner caustic on shell and \\ singularities of odd functions}

\author[Domitrz, Manoel, Rios]{Wojciech Domitrz, Miriam Manoel and Pedro de M Rios}
\address{Warsaw University of Technology, Faculty of Mathematics and Information Science, Plac Politechniki 1, 00-661 Warszawa, Poland}
\email{domitrz@mini.pw.edu.pl}
\address{Departamento de Matem\'atica, ICMC, Universidade de S\~ao Paulo; S\~ao Carlos, SP, 13560-970, Brazil}
\email{miriam@icmc.usp.br}
\address{Departamento de Matem\'atica, ICMC, Universidade de S\~ao Paulo; S\~ao Carlos, SP, 13560-970, Brazil}
\email{prios@icmc.usp.br}

\thanks{W. Domitrz was supported by FAPESP/Brazil and by Polish MNiSW grant no. N N201 397237 during his visits to ICMC-USP, S\~ao Carlos. P. de M. Rios received partial support by
FAPESP/Brazil  for his visits to Warsaw.}

\subjclass{58K40, 53D12, 81Q20, 58K70, 58K50.}

\keywords{Semiclassical dynamics, Symplectic geometry, Lagrangian singularities, Simple singularities, Symmetric singularities}

\maketitle

\begin{abstract} We study  the Wigner caustic on shell of a Lagrangian submanifold $L$ of affine symplectic space. We present the physical motivation for studying singularities of the Wigner caustic on shell and present its mathematical definition in terms of a generating family. Because such a generating family is an odd deformation of an odd function, we study simple singularities in the category of odd functions and their odd versal deformations, applying these results to classify the singularities of the Wigner caustic on shell,  interpreting these singularities in terms of the local geometry of $L$.

\end{abstract}

\section{Introduction}

The Wigner caustic of a smooth convex closed curve $L$ on  affine symplectic plane was first introduced by Berry, in his celebrated 1977 paper \cite{Ber} on the semiclassical limit of Wigner's phase-space representation of quantum states. Thus, when $L$ is the classical correspondence of a pure quantum state, the Wigner function of this state takes on high values, in the semiclassical limit, at points in a neighborhood of $L$ and also in a neighborhood of  a singular closed curve in its interior, generically formed by an odd number of cusps: the Wigner caustic of $L$.

Some years later, Ozorio de Almeida and Hannay studied  the Wigner caustic of a smooth Lagrangian torus $L$ on affine symplectic $4$-space \cite{OH}. Since their main object of study was the geometrical place where the amplitude of the Wigner function of the pure quantum state corresponding to $L$ rises considerably, in the semiclassical limit, they considered $L$ itself as part of the Wigner caustic and focused some attention on the part of the Wigner caustic that is close to and contains $L$.

From a purely geometrical point of view, the Wigner caustic of $L$, hereby denoted ${\bf E}_{{1}/{2}}(L)$, is defined as the locus of midpoints of segments
connecting pairs of points on $L$ with  ``parallel'' affine tangent spaces.
Here, parallelism is taken in a broad sense, also allowing for {\it weak} parallelism, when the direct sum of the tangent spaces of $L$ at the two points do not span the whole  $\mathbb R^{2m}$.
However, as mentioned above, from the perspective of applications of Wigner caustics in quantum physics, it is interesting to consider
an even broader definition of parallelism, when a single point of $L$ is identified as a pair of points with parallel affine tangent spaces (in this case {\it strongly} parallel spaces). Then, with this extended notion in the geometrical definition,
the submanifold $L$ itself is a subset of  ${\bf E}_{{1}/{2}}(L)$. The part of  ${\bf E}_{{1}/{2}}(L)$ that is close to $L$ and that contains $L$ is called the {\it Wigner caustic on shell}.

In this paper, we study the Wigner caustic on shell of a smooth Lagrangian submanifold $L$ of the affine symplectic space $(\mathbb R^{2m}, \omega)$, focusing on its Lagrangian-stable singularities when $L$ is a curve or a surface. Its definition in terms of a generating family reveals the fact that the Wigner caustic on shell has a (hidden) symmetry under the action of $\mathbb Z_2$, because its generating family is an odd deformation of an odd function of the variables. No such symmetry exists for the part of the Wigner caustic that is away from $L$, whose  simple stable Lagrangian singularities have been studied in a previous paper \cite{DRs}.

Now, our interest in studying singularities of the Wigner caustic stems from semiclassical dynamics. Because the amplitude of the Wigner function rises sharply along the Wigner caustic, in the semiclassical limit, there is where uniform asymptotic expressions must be used. However, the kind of uniform asymptotic expression for the semiclassical Wigner function in a neighborhood of a point varies according to the kind of singularity of the Wigner caustic at that point \cite{Ber}. Thus, for a finer treatment of  the dynamics of the semiclassical Wigner function of a pure quantum state \cite{RO1}, it is important to classify the singularities of the Wigner caustic (off and on shell) of a Lagrangian submanifold,  which are stable under the group of symplectomorphisms  of
$(\mathbb R^{2m}, \omega)$.

Because such singularities are described by generating families, here we focus attention on simple singularities of function-germs (simple here in the classical notion of
absence of modal parameters \cite{AGV}) and their versal deformations. Thus, for the Wigner caustic on shell,
our first aim is to obtain the list of all simple singularities  in the category of odd-functions. This paper is, therefore, divided in three parts.

The first part, Section 2, presents the motivation and definition
of the Wigner caustic on shell of a Lagrangian submanifold.

The second part, Section 3, is independent of the other sections and  is devoted to the
classification of simple singularities  of odd functions and their odd deformations.  By odd function-germs at $0\in\mathbb R^m$ we mean $\mathbb Z_2$-equivariant smooth function-germs, with $\mathbb Z_2$ action on the source: $(x_1,\cdots,x_m) \mapsto (-x_1,\cdots, -x_m)$ and on the target: $y \mapsto -y.$ We classify odd function-germs using classical $\mathcal R$-equivalence (composition with germs of diffeomorphisms on the source) restricted to the  subgroup of odd diffeomorphism-germs, which is natural in this context. We prove  there are no simple odd singularities if the dimension of the source is greater than two and classify all simple odd function-germs in dimensions one and two, presenting their odd mini-versal deformations. Although this could be considered as a classical subject in singularity theory, surprisingly no such classification list of simple odd singularities has been found by the authors in the literature.

In one variable the simple odd singularities are of type
that we shall denote $A_{2k/2}$, which have codimension $k$ in the category of odd function-germs and which coincide with an intersection of the classical $\mathcal R$-orbit of $A_{2k}$ singularities of codimension $2k$ with the module of odd function-germs. In two variables, the simple odd singularities are divided in two groups: the first one  of types hereby denoted $D^{\pm}_{2k/2}$ and $E_{8/2}$, of  odd codimensions $k$ and $4$ respectively, which are the intersections of classical $\mathcal R$-orbits  of types  $D^{\pm}_{2k}$ and $E_{8}$, of codimensions $2k$ and $8$ respectively, with the module of odd function-germs. The second group consists of the singularities of types hereby denoted  $J_{10/2}^{\pm}$ and $E_{12/2}$, of respective odd codimensions $5$ and $6$, these notations chosen because they are $\mathcal R$-equivalent to singularities  $J_{10}$ and $E_{12}$ of respective codimensions $10$ and $12$,  these later being unimodal in Arnold's classification.

The third part, Section 4, applies the results of Section 3. For Lagrangian curves, we give the conditions for realizing the odd deformations of  singularities  $A_{2/2}$ and $A_{4/2}$ as generating families for simple stable Lagrangian singularities of the Wigner caustic on shell, and describe these singularities. For Lagrangian surfaces, we present the realization conditions for  the singularities of the Wigner caustic on shell of  types $D^{\pm}_{2k/2}$, $k=2,3,4$, and $E_{8/2}$. Because  the odd codimension in this context can be at most $4$, these are all the simple singularities that can be realized as  simple stable Lagrangian singularities of the Wigner caustic on shell.
Finally, we also interpret the realization condition of each of these singularities of the Wigner caustic on shell in terms of the local geometry of the Lagrangian curve or  the Lagrangian surface.

While working on this paper, we benefitted from discussions with F. Tari and specially with M. A. S. Ruas, to whom both we are grateful.

\section{The Wigner caustic on shell}

\subsection{Physical origins of the Wigner caustic on shell} \label{subseq:physical} The following presentation is sketchy and  can be found expanded in various textbooks and research papers (see \cite{Ber, OH, RO1}, for instance).

We recall that, in non-relativistic quantum mechanics, a {\it pure state of the system} is usually defined as a normalized vector $\Psi$ in a Hilbert space $\mathcal H$. In many simple cases, ${\mathcal H}= L^2_{\mathbb C}(\mathbb R^m)$, the space of complex-valued square-integrable functions on $\mathbb R^m$. Here,  $\mathbb R^m$ is commonly interpreted either as the {\it configuration-space} $Q$ or the {\it momentum-space} $P$ and $m\in \mathbb N$ is the number of {\it degrees of freedom} of the system.

The Fourier transform $\mathcal F: L_{\mathbb C}^2(\mathbb R^m) \to L_{\mathbb C}^2(\mathbb R^m)$ relates configuration-space and momentum-space representations of a state $\Psi$, by
$$\psi(q)\mapsto {\mathcal F}_{\psi}(p)=\frac{1}{(2\pi\hbar)^{m}}\int_{\mathbb R^m} \psi(q)\exp{(ipq/\hbar)} \ dq \ ,$$
where $i=\sqrt{-1}$ and $\hbar$ is a positive constant, called Planck's constant, which provides a scale for comparing quantum to classical phenomena.

On the other hand, in classical conservative dynamics, the concept of a {\it phase-space} $\Pi$ is predominant. In the simple cases when $Q=P=\mathbb R^m$,  $\Pi=P\times Q=\mathbb R^{2m}$, endowed with the symplectic form $\omega=\sum_{i=1}^m dp_i
\wedge dq_i$, is an affine-symplectic space.

The Wigner transform ${\mathcal W}: L_{\mathbb C}^2(\mathbb R^m)\to L^1_{\mathbb R}(\mathbb R^{2m}, \omega)$ defines a phase-space representation of a pure state $\Psi$, called its {\it Wigner function}, from the configuration-space representation of $\Psi$, by $$\psi(q)\mapsto {\mathcal W}_{\psi}(p,q)=\frac{1}{(\pi\hbar)^m}\int_{\mathbb R^m} \psi^*(q-\zeta)\psi(q+\zeta)\exp{(2ip\zeta/\hbar)} \ d\zeta  \ .$$
The Wigner function satisfies reality and Liouville-normalization,
$${\mathcal W}_{\psi}(p,q)={\mathcal W}^*_{\psi}(p,q) \ , \ \int_{\mathbb R^{2m}} {\mathcal W}_{\psi}(p,q) dpdq = 1 \ , \ dpdq=\omega^m/m! $$
and, although ${\mathcal W}_{\psi}(p,q)$ can be negative, its partial integrals are not,
$$\int_{\mathbb R^m} {\mathcal W}_{\psi}(p,q) dp = |\psi(q)|^2\geq 0 \ , \ \int_{\mathbb R^m} {\mathcal W}_{\psi}(p,q) dq = |{\mathcal F}_{\psi}(p)|^2\geq 0 \ ,$$
so that ${\mathcal W}_{\psi}$ can be seen as a pseudo probability distribution on phase-space $(\mathbb R^{2m}, \omega)$, while $|\psi|^2$ and $ |{\mathcal F}_{\psi}|^2$ are actual probability distributions on configuration-space and momentum-space, respectively.

In various instances, one is mostly interested in a pure state $\Psi$ which is eigenstate of one or more self-adjoint operators on $\mathcal H = L^2_{\mathbb C}(\mathbb R^m)$. If $F\in {\mathcal B}(\mathcal H)$ is self-adjoint, its classical correspondence is a real function $f\in \mathcal C^{\infty}_{\mathbb R}(\mathbb R^{2m}, \omega)$ so that, if $F(\Psi)=\alpha\Psi, \alpha\in\mathbb R$, then $\Psi$ corresponds classically to the level set $\Lambda=\{x=(p,q)\in \mathbb R^{2m}:f(x)=\alpha\}$, which for many values of $\alpha$ is a smooth hypersurface in phase-space (a smooth Lagrangian curve $\Lambda =L$ for systems with one degree of freedom).

For systems with $m>1$ degrees of freedom, two linearly independent functions  $f_1, f_2\in   \mathcal C^{\infty}_{\mathbb R}(\mathbb R^{2m}, \omega)$ are said to be in {\it involution} if $X_{f_1}(f_2)=X_{f_2}(f_1)=0$, where $X_{f_j}$ is the vector field defined by Hamilton's equation $df_j+X_{f_j}\lrcorner\omega=0$. If there exist $m$ linearly independent functions $f_j$ in mutual involution, the classic dynamical system is integrable and each level set $L=\{x\in \mathbb R^{2m}:f_j(x)=\alpha_j \in\mathbb R, j=1,\dots, m \}$ is a Lagrangian submanifold of $(\mathbb R^{2m}, \omega)$. Such $L$ may correspond to a pure state $\Psi$ which is eigenstate of $m$ linearly independent commuting self-adjoint operators  $F_j\in {\mathcal B}(\mathcal H)$, $[F_i,F_j]=0$, $F_j(\Psi)=\alpha_j\Psi$.

The semiclassical approximation of $\Psi$ can be formally seen as the asymptotic expansion on $\hbar << 1$ of some representation of $\Psi$. Let's start with the  crude expression for the semiclassical approximation  of the
Wigner function of a pure state in one  degree of freedom \cite{Ber}:
\begin{equation}\label{WF1}
{\mathcal W}_{\psi}(x)\approx \sum_k{\mathcal A}_k^{\hbar}(x)\cos{(S_k(x)/\hbar -\pi/4)} \ ,
\end{equation}
where $S_k(x)$ is the symplectic area enclosed by the curve $L=\{x'\in \mathbb R^{2}:f(x')=\alpha\}$ and the $k$-th chord connecting two points $x^+_k$ and $x^-_k$ on $L$, whose midpoint is $x$ (for $x$ close to $L$, such a chord is often unique, or does not exist). Each amplitude function
${\mathcal A}_k^{\hbar}(x)$ in (\ref{WF1}) satisfies
\begin{equation}\label{AWF}
{\mathcal A}_k^{\hbar}(x)\propto \frac{1}{|\omega(X_f^{+k}(x),X_f^{-k}(x))|^{1/2}} \ ,
\end{equation}
where $X_f^{\pm k}(x)$ is the Hamiltonian vector field $X_f$ evaluated at the endpoint $x^{\pm}_k\in L$ of the $k$-th chord, parallel translated to its centre $x$.

The number of chords centered on $x$ connecting pairs of points on $L$ varies, as $x$ varies, and its bifurcation set is given by
\begin{equation}\label{WCWF}
{\bf E}_{{1}/{2}}(L)= \{x\in \mathbb R^{2} \ : \ \exists k \ \ \omega(X_f^{+k}(x),X_f^{-k}(x))=0\} \ .
\end{equation}

 It is clear from (\ref{WCWF}) that
${\bf E}_{{1}/{2}}(L)$ can be defined as the set of midpoints of chords connecting points on $L$ whose tangent vectors to $L$ at these endpoints are parallel.
${\bf E}_{{1}/{2}}(L)$ is called the {\it Wigner caustic} of $L$ and is precisely the set where some ${\mathcal A}_k^{\hbar}$ blows up to infinity, see (\ref{AWF}).

In fact, in a neighborhood of ${\bf E}_{{1}/{2}}(L)$, the crude expression (\ref{WF1}) is inappropriate and must be substituted by uniform approximations that do not blow up to infinity on ${\bf E}_{{1}/{2}}(L)$ if $\hbar\neq 0$ but, nonetheless, take on very high values at ${\bf E}_{{1}/{2}}(L)$ for $\hbar << 1$. However, the kind of uniform approximation to be used will depend on the kind of singularity of the Wigner caustic. Thus, where the Wigner caustic corresponds to a fold singularity, the uniform approximation of the Wigner function is written in terms of Airy functions but, where the Wigner caustic has cusp singularities, Pearcey functions must be used (see \cite{Ber}).

Now, it is obvious from (\ref{WCWF}) that $L\subset {\bf E}_{{1}/{2}}(L)$, so that ${\mathcal W}_{\psi}$ peaks at $L$ for $\hbar << 1$. On the other hand, as $x\to L$, $S(x)\to 0$ and $\nabla S(x)\to 0$, so that  ${\mathcal W}_{\psi}$  is not highly oscillatory in a small neighborhood of $L$,  for $\hbar << 1$. This contrasts sharply with the situation when $x$ is far from $L$ where, even if $x\in  {\bf E}_{{1}/{2}}(L)$,  ${\mathcal W}_{\psi}$  is highly oscillatory for $\hbar << 1$ and tends on average to $0$ in any small neighborhood of $x$, as $\hbar\to 0$. Thus, as $\hbar\to 0$, the pseudo probability distribution  ${\mathcal W}_{\psi}$ tends on average to the singular probability distribution which is zero everywhere but on $L$, where ${\mathcal W}_{\psi}$ tends to infinity. In this way, $L$ can be seen as the classical correspondence of the pure state $\Psi$.

The less oscillatory behavior of the Wigner function ${\mathcal W}_{\psi}$ in a neighborhood of $L$ makes it convenient to separate the Wigner caustic of $L$  in a part which is away from $L$ and another which is very close to $L$ and contains $L$. This latter  is called the {\it Wigner caustic on shell}.

The situation for integrable systems with more degrees of freedom is similar: the crude semiclassical expression for  the Wigner function is
\begin{equation}\label{WFm}
{\mathcal W}_{\psi}(x)\approx \sum_k{\tilde{\mathcal A}}_k^{\hbar}(x)\cos{(\tilde{S}_k(x)/\hbar -n_k\pi/4)} \ ,
\end{equation}
where $\tilde{S}_k(x)$ is the symplectic area of any surface bounded by a curve formed by taking any arc of the Lagrangian submanifold $L=\{x'\in\mathbb R^{2m}:f_j(x')=\alpha_j, j=1,...,m\}$ and closing it with the $k$-th chord connecting two points $x^+_k$ and $x^-_k$ on $L$, with midpoint $x$, and where
\begin{equation}\label{AWFm}
\tilde{{\mathcal A}}_k^{\hbar}(x)\propto \frac{1}{|\det[\omega(X_{f_i}^{+k}(x),X_{f_j}^{-k}(x))]|^{1/2}} \ ,
\end{equation}
with $X_{f_j}^{\pm k}(x)$ being the Hamiltonian vector field $X_{f_j}$ evaluated at the endpoint $x^{\pm}_k\in L$ of the $k$-th chord, parallel translated to its centre $x$.  Also, the integer $n_k$ in (\ref{WFm}) is  the signature of the $m\times m$ matrix
$[\omega(X_{f_i}^{+k}(x),X_{f_j}^{-k}(x))]$.
Therefore, in this case,
\begin{equation}\label{WCWFm}
{\bf E}_{{1}/{2}}(L)= \{x\in \mathbb R^{2m} \ : \ \exists k \ \ \det[\omega(X_{f_i}^{+k}(x),X_{f_j}^{-k}(x))]=0\} \
\end{equation}
and  can be identified with the set  of midpoints of chords connecting points on $L$ whose tangent spaces to $L$ at these endpoints are {\it weakly parallel}, in other words, do not span the whole $\mathbb R^{2m}$, see \cite{OH}. Again, uniform approximations must be used instead of (\ref{WFm}) in a neighborhood of ${\bf E}_{{1}/{2}}(L)$ and,
for $\hbar << 1$, ${\mathcal W}_{\psi}$ is not highly oscillatory in a small neighborhood of $L$, which is the classical correspondence of $\Psi$, and it is therefore natural to single out the Wigner caustic on shell.

\subsection{Mathematical definition of the Wigner caustic on shell} \label{subseq:mathematical}
 Let $L$ be a smooth Lagrangian submanifold  of the
affine symplectic space $(\mathbb R^{2m},\omega=\sum_{i=1}^m dp_i
\wedge dq_i)$.   Let $a, b$ be points of $L$ and let
$\tau_{a-b}:\mathbb R^{2m} \ni x\mapsto x+(a-b) \in \mathbb R^{2m}$ be the
translation by the vector $(a-b)$.

\begin{defn}\label{parallelism} A pair of points $a, b \in L$ is a
{\bf weakly parallel} pair if
$$T_aL + \tau_{a-b}(T_bL)\ne \mathbb R^{2m}.$$

A weakly parallel pair $a, b \in L$ is
called {\bf $k$-parallel} if
$$\dim(T_aL \cap \tau_{b-a}(T_bL))=k.$$
If $k=m$ the pair $a, b \in L$ is called {\bf strongly parallel},
or just {\bf parallel}. \end{defn}
\begin{defn}
A {\bf chord} passing through a pair $a,b$, is the line
$$
l(a,b)=\{x\in \mathbb R^n:x=\eta a + (1-\eta) b, \eta \in
\mathbb R\}.
$$
\end{defn}
\begin{defn} For a given $\eta$, an {\bf affine
$\eta$-equidistant} of $L$,  denoted
${\bf E}_{\eta}(L)$, is the set of all
$x\in \mathbb R^{2m}$ s.t. $x=\eta a + (1-\eta) b$, for
all weakly parallel pairs $a,b \in L$.
Note that, for any
$\eta$, ${\bf E}_{\eta}(L)={\bf E}_{1-\eta}(L)$ and in particular
${\bf E}_0(L)={\bf E}_1(L)=L$. Thus, the case $\eta=1/2$ is special.
\end{defn}
\begin{defn} The set ${\bf E}_{{1}/{2}}(L)$ is the {\bf Wigner caustic} of $L$.
\end{defn}

Consider  $\mathbb R^{2m}\times \mathbb R^{2m}$ with coordinates $(x^+,x^-)$
and the tangent bundle to $\mathbb R^{2m}$,  $T\mathbb
R^{2m}=\mathbb R^{2m}\times \mathbb R^{2m}$, with coordinates
$(x,\dot{x})$ and standard projection
$\pi: T\mathbb R^{2m}\ni (x,\dot{x})\rightarrow x \in \mathbb R^{2m}$.
Consider the linear map
$$\Phi_{1/2}:\mathbb R^{2m}\times \mathbb
R^{2m}\ni(x^+,x^-)\mapsto \left(\frac{x^+ + x^-}{2}, \ \frac{x^+ - x^-}{2}\right)=(x,\dot{x}) \in T\mathbb R^{2m}.$$

On the product affine symplectic space, consider the symplectic form
$$\delta_{1/2}\omega=\frac{1}{2}\left(
\pi_1^{\ast}\omega-\pi_2^{\ast}\omega\right) \ ,$$
$\pi_i$ the $i$-th projection $\mathbb R^{2m}\times \mathbb R^{2m}\to\mathbb R^{2m}$.
Canonical  relations correspond to Lagrangian submanifolds  of $(\mathbb R^{2m}\times \mathbb R^{2m}, \delta_{1/2}\omega)$. Then,
$$\left(\Phi_{1/2}^{-1}\right)^{\ast}(\delta_{1/2}\omega)
 \ = \ \dot{\omega} \ ,$$
where $\dot{\omega}$
is the canonical symplectic form on  $T\mathbb{R}^{2m}$, which is defined by $\dot{\omega}(x,\dot{x})=d\{\dot{x}\lrcorner\omega\}(x)$ or, in Darboux coordinates for $\omega$, by
$$\dot{\omega}=\sum_{i=1}^m
d\dot{p_i}\wedge dq_i+ dp_i\wedge d\dot{q_i} \ .$$

If $L$ is a Lagrangian submanifold of $(\mathbb{R}^{2m},\omega)$, then $L\times L$ is a Lagrangian submanifold of $(\mathbb{R}^{2m}\times\mathbb{R}^{2m},\delta_{1/2} \omega)$ and $\mathcal L= \Phi_{1/2}(L\times L)$ is a Lagrangian submanifold of $(T\mathbb{R}^{2m},\dot\omega)$, which  can be locally described by a generating function of the midpoints $x=\pi\circ\Phi_{1/2}(x^+,x^-)$, $(x^+,x^-)\in L\times L$, when  $\mathcal L$ projects regularly to the zero section \cite{Poi}\cite{RO2}.

We recall basic definitions of the theory of Lagrangian singularities (see \cite{AGV}, \cite{DRs}).
First, $(T\mathbb{R}^{2m},\dot\omega)$ with canonical projection $\pi: T\mathbb{R}^{2m}\rightarrow \mathbb R^{2m}$ is a {\it Lagrangian fibre bundle} and $\pi|_{\mathcal L}:\mathcal L\rightarrow \mathbb R^{2m}$ is a {\it Lagrangian map} .  Let $\tilde {\mathcal L}$ be another Lagrangian submanifold of $(T\mathbb{R}^{2m},\dot\omega)$. Two Lagrangian maps $\pi|_{\mathcal L}:\mathcal L\rightarrow \mathbb R^{2m}$ and $\pi|_{\tilde{\mathcal L}}:\tilde {\mathcal L}\rightarrow \mathbb R^{2m}$ are {\it Lagrangian equivalent} if there exists a symplectomorphism of   $(T\mathbb{R}^{2m},\dot\omega)$ taking  fibres of  $\pi$  to fibres and mapping $\mathcal L$ to $\tilde{\mathcal L}$. A Lagrangian map is {\it stable} if every nearby Lagrangian map (in the Whitney topology) is Lagrangian equivalent to it. The
set of critical values of a Lagrangian map is called a {\it caustic}.
Then, we have the following result:
\begin{prop}[\cite{DRs}]\label{caustic}
The caustic of the Lagrangian map $\pi|_{\mathcal L}:\mathcal L\rightarrow \mathbb R^{2m}$ is the Wigner caustic ${\bf E}_{{1}/{2}}(L)$.
\end{prop}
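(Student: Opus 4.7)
The plan is to identify the critical points of $\pi|_{\mathcal L}$ with weakly parallel pairs of $L$ via a direct computation of the differential, and then recognize the associated critical values as the chord midpoints that constitute ${\bf E}_{1/2}(L)$.

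First, I would parameterize $\mathcal L$ through $\Phi_{1/2}$: a point $(x,\dot x)\in\mathcal L$ corresponds to the pair $(x^+,x^-)\in L\times L$ with $x^+=x+\dot x$ and $x^-=x-\dot x$. Since $\Phi_{1/2}$ is linear, $T_{(x,\dot x)}\mathcal L = d\Phi_{1/2}\bigl(T_{x^+}L\oplus T_{x^-}L\bigr)$, and its tangent vectors take the form $\bigl((v^++v^-)/2,\,(v^+-v^-)/2\bigr)$ with $v^\pm\in T_{x^\pm}L$, where each $T_{x^\pm}L$ is viewed as a subspace of $\mathbb R^{2m}$ via the canonical identification provided by the affine structure.

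Next, since $\pi(x,\dot x)=x$, the image of $(d\pi|_{\mathcal L})_{(x,\dot x)}$ is exactly the subspace $T_{x^+}L+T_{x^-}L\subset\mathbb R^{2m}$, which under that identification coincides with $T_{x^+}L+\tau_{x^+-x^-}(T_{x^-}L)$. By Definition~\ref{parallelism}, this sum fails to be all of $\mathbb R^{2m}$ precisely when $(x^+,x^-)$ is a weakly parallel pair. Since $\mathcal L$ and $\mathbb R^{2m}$ have the same dimension, $(x,\dot x)$ is a critical point of $\pi|_{\mathcal L}$ if and only if the corresponding pair $(x^+,x^-)$ is weakly parallel.

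Finally, the critical value at such a point is $x=(x^++x^-)/2$, the midpoint of the chord $l(x^+,x^-)$, and conversely every midpoint of a chord through a weakly parallel pair of $L$ arises this way by reversing the construction; therefore the caustic of $\pi|_{\mathcal L}$ equals ${\bf E}_{1/2}(L)$. The only mildly subtle step is the bookkeeping with the affine identification of tangent spaces so that the non-surjectivity of $d\pi|_{\mathcal L}$ matches Definition~\ref{parallelism} verbatim; once this identification is in place, the proposition follows immediately.
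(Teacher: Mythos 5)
Your argument is correct and complete: the linearity of $\Phi_{1/2}$ gives the tangent space of $\mathcal L$ as stated, the image of $d(\pi|_{\mathcal L})$ at $(x,\dot x)$ is indeed $T_{x^+}L+\tau_{x^+-x^-}(T_{x^-}L)$ under the affine identification, and since $\dim\mathcal L=2m$ the critical points are exactly the images of weakly parallel pairs (including the diagonal pairs $(a,a)$, which correctly yields $L\subset{\bf E}_{1/2}(L)$). The paper itself states this proposition without proof, citing \cite{DRs}; your computation is the standard direct verification one would expect to find there.
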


In this paper, we study
${\bf E}_{{1}/{2}}(L)$ in a neighborhood $L$. For this reason, we consider pairs of points of the type $(a,a)\in L\times L$ as strongly parallel pairs. In other words, in Definition \ref{parallelism} {\it we did not impose the restriction $a\neq b$ on the pair of points of $L$ to be considered a parallel pair}. This broader definition of parallel pairs is suitable for studying the part of the Wigner caustic that is close to $L$, because then $L$ is itself part of the Wigner caustic. This broader definition of the Wigner caustic is also natural from its origin in quantum physics, as shown by  equations (\ref{WCWF}) and (\ref{WCWFm}). On the other hand, imposing the restriction $a\neq b$ in Definition \ref{parallelism} allows for a neater definition of the Wigner caustic as a centre symmetry set, as in \cite{DRs} (see also \cite{Gib2}, where, for a curve $L$ and $a\neq b$, ${\bf E}_{{1}/{2}}(L)$ is called the area evolute of $L$).
\begin{defn}
The germ at $a$ of the {\bf Wigner caustic on shell} is the germ of Wigner caustic ${\bf E}_{{1}/{2}}(L)$ at the point $a\in L$.
\end{defn}

Now let $L$ be a germ at $0$ of a smooth Lagrangian submanifold of $(\mathbb{R}^{2m},\omega)$, generated by the function-germ $S\in \mathcal
E_{m}$ in the usual way,
\begin{equation}\label{usual}
L=\left\{(p,q)\in \mathbb R^{2m}: p_i=\frac{\partial S}{\partial
q_i}(q) \ \text{for} \ i=1,\cdots,m \right\} \ .
\end{equation}
Then, $\mathcal L$ is the germ at $0$ of a submanifold of $(T\mathbb R^{2m}, \dot\omega)$
described as
\begin{equation}
 \dot{p}=\frac{1}{2}\left(\frac{\partial S}{\partial
q}(q+\dot{q})-\frac{\partial S}{\partial
q}(q-\dot{q})\right),\label{L1}
\end{equation}
\begin{equation}
p=\frac{1}{2}\left(\frac{\partial S}{\partial
q}(q+\dot{q})+\frac{\partial S}{\partial
q}(q-\dot{q})\right).\label{L2}
\end{equation}
By Proposition \ref{caustic}, the germ at $0\in L$ of the Wigner caustic on shell ${\bf E}_{{1}/{2}}(L)$ is described as
\begin{eqnarray}
\exists \dot{q}\in \mathbb R^m \  s.t.  \ (\ref{L2}) \ \ \mbox{is satisfied, and} \nonumber
\\
\det\left[\frac{\partial^2 S}{\partial q^2}(q+\dot{q})-\frac{\partial^2 S}{\partial q^2}(q-\dot{q})\right]=0. \label{WC2}
\end{eqnarray}

Thus, putting $\dot{q}=0$ in (\ref{L2})-(\ref{WC2}) we obtain the obvious fact:

\begin{prop}
$L$ is contained in ${\bf E}_{{1}/{2}}(L)$.
\end{prop}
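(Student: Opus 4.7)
The statement is essentially a sanity check on the broader definition of parallel pairs adopted in Definition \ref{parallelism}, and the excerpt itself already signals the route: set $\dot q = 0$ in (\ref{L2})--(\ref{WC2}). So my plan is to verify directly that, under the extended convention allowing $a=b$, every point of $L$ arises as the midpoint corresponding to a critical value of the Lagrangian map $\pi|_{\mathcal L}$, and hence lies in the Wigner caustic by Proposition \ref{caustic}.

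Concretely, let $a = (p_0, q_0) \in L$, so that $p_0 = \partial S/\partial q(q_0)$ by the generating-function description (\ref{usual}). I would check that the pair $\dot q = 0$, $q = q_0$, $p = p_0$ satisfies both conditions that define the germ of ${\bf E}_{1/2}(L)$. For (\ref{L2}), substituting $\dot q = 0$ yields $p = \partial S/\partial q(q)$, which is exactly the condition $a \in L$, so (\ref{L2}) holds at $(q_0, 0)$ with $p=p_0$. For the determinant condition (\ref{WC2}), with $\dot q = 0$ the matrix inside the determinant is the zero matrix, so its determinant vanishes trivially. Therefore $a \in {\bf E}_{1/2}(L)$, and since $a$ was arbitrary, $L \subset {\bf E}_{1/2}(L)$.

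There is no real obstacle here; the only conceptual point worth flagging is that this argument hinges on the convention adopted in Section \ref{subseq:mathematical}, namely that pairs $(a,a) \in L \times L$ are considered strongly parallel. Without that convention the chord $l(a,a)$ would be undefined and the inclusion would fail. Under the convention, the argument is a one-line computation, and no further work is required beyond noting the trivial solution $\dot q = 0$ of the defining equations.
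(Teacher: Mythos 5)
Your proof is correct and is exactly the argument the paper intends: setting $\dot q=0$ reduces (\ref{L2}) to the defining equation of $L$ and makes the matrix in (\ref{WC2}) identically zero, so every point of $L$ lies in ${\bf E}_{1/2}(L)$. Your remark about the convention on pairs $(a,a)$ being essential matches the paper's own discussion preceding the proposition.
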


Now, let us consider the reflection
\begin{equation}\label{zeta}\zeta:T\mathbb R^{2m}\ni (\dot{p},\dot{q},p,q)\mapsto
(-\dot{p},-\dot{q},p,q)\in T\mathbb R^{2m}\end{equation} whose mirror is the
zero section $\{\dot{p}=\dot{q}=0\}\subset T\mathbb R^{2m}$. Note that $\{ id, \zeta \}$ generates an action of $\mathbb Z_2$ on $T\mathbb R^{2m}$.
Using
(\ref{L1}) we obtain
\begin{prop}\label{mirror}
$\mathcal L$ is $\mathbb Z_2$-symmetric, that is,
$\zeta(\mathcal L)=\mathcal L$. \end{prop}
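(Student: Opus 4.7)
The plan is to check the invariance directly from the explicit description of $\mathcal{L}$ given by equations (\ref{L1}) and (\ref{L2}). The submanifold $\mathcal{L}$ is parametrized by $(q,\dot{q})\in \mathbb R^m\times \mathbb R^m$ via these two equations, so it suffices to show that the substitution $\dot{q}\mapsto -\dot{q}$ (keeping $q$ fixed) acts on the corresponding points of $\mathcal{L}$ as the reflection $\zeta$.

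First, I would exploit the algebraic structure of (\ref{L1})--(\ref{L2}): the right-hand side of (\ref{L1}) is, up to the factor $1/2$, the antisymmetric combination $\partial_q S(q+\dot{q})-\partial_q S(q-\dot{q})$, which is an odd function of $\dot{q}$, while the right-hand side of (\ref{L2}) is the symmetric combination $\partial_q S(q+\dot{q})+\partial_q S(q-\dot{q})$, which is an even function of $\dot{q}$. Therefore, if $(\dot p,\dot q,p,q)\in\mathcal L$ is the point arising from the parameter pair $(q,\dot q)$, then the point arising from $(q,-\dot q)$ has coordinates $(-\dot p,-\dot q,p,q)=\zeta(\dot p,\dot q,p,q)$.

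This gives $\zeta(\mathcal{L})\subseteq \mathcal{L}$; since $\zeta$ is an involution, equality follows immediately. Locally around $0$ the parametrization by $(q,\dot q)$ is defined for all $\dot q$ in a neighborhood of the origin, so the replacement $\dot q\mapsto -\dot q$ stays in the domain, and the argument yields the claim for the germ at $0$.

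The only subtlety worth mentioning is that the parametrization $(q,\dot q)\mapsto (\dot p,\dot q,p,q)$ genuinely traces out $\mathcal{L}$ (so one is not missing branches); this follows from the construction $\mathcal L=\Phi_{1/2}(L\times L)$ together with (\ref{usual}), since any point of $L\times L$ near $(0,0)$ is of the form $(x^+,x^-)$ with $x^{\pm}=(p^{\pm},q^{\pm})$, $p^{\pm}=\partial_q S(q^{\pm})$, and setting $q=(q^++q^-)/2$, $\dot q=(q^+-q^-)/2$ recovers equations (\ref{L1})--(\ref{L2}). There is no real obstacle: the entire content of the statement is the observation that the generating data depends on $\dot q$ through one even and one odd combination, a manifest symmetry of the construction that mirrors the symmetry $(x^+,x^-)\leftrightarrow (x^-,x^+)$ of $L\times L$ under $\Phi_{1/2}$.
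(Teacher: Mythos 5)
Your proof is correct and follows exactly the route the paper intends: the paper's entire justification is the phrase ``Using (\ref{L1}) we obtain,'' i.e.\ the observation that the right-hand side of (\ref{L1}) is odd in $\dot q$ while that of (\ref{L2}) is even, so the substitution $\dot q\mapsto-\dot q$ realizes $\zeta$ on $\mathcal L$. Your version merely spells out the details (involutivity of $\zeta$, surjectivity of the parametrization) that the paper leaves implicit.
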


We shall study singularities of  ${\bf E}_{{1}/{2}}(L)$ via generating families of $\mathcal L$.
\begin{defn}
The germ of a {\bf generating family} of $\mathcal L$ is the smooth function-germ  $F:\mathbb R^k\times \mathbb R^{2m}\ni (\beta,p,q)\mapsto F(\beta,p,q)\in \mathbb R$ such that
\begin{equation}\label{genL}
\mathcal L=\left\{(\dot{p},\dot{q},p,q)\in T\mathbb
R^{2m}: \ \exists \ \beta \in \mathbb R^k \ \  \dot{p}=\frac{\partial F}{\partial q}, \ \dot{q}=-\frac{\partial F}{\partial p}, \ \frac{\partial F}{\partial \beta}=0\right\}.
\end{equation}
\end{defn}
\begin{rem}\label{Lag-stable}When there are no symmetries, two Lagrangian map-germs on the same Lagrangian fibre bundle are Lagrangian equivalent if and only if their generating families are stably (fibred) $\mathcal R^+$-equivalent. Moreover the Lagrangian map-germ given by the generating family $F(\beta,p,q)$ with parameters $(p,q)$ is Lagrangian stable if and only if $F(\beta,p,q)$ is a $\mathcal R^+$-versal deformation of $f(\beta)=F(\beta,0,0)$ (see \cite{AGV}).\end{rem}

Now, in the $\mathbb Z_2$-symmetric context,  the following Theorem, whose proof is a straightforward computation from (\ref{genL}) to (\ref{L1})-(\ref{L2}), is a particular case of the more general result presented in \cite{DRs}:
\begin{thm}[\cite{DRs}]\label{genfam}
The germ at $0\in L$ of the Wigner caustic on shell
is the germ of a caustic of the germ of a Lagrangian
submanifold $\mathcal L$ in the Lagrangian fibre bundle $T\mathbb
R^{2m}\ni (\dot{p},\dot{q},p,q)\mapsto (p,q)\in \mathbb R^{2m}$
with the symplectic form $\dot{\omega}=\sum_{i=1}^m
d\dot{p}_i\wedge dq_i+dp_i \wedge d\dot{q}_i$ and
generating family
\begin{equation}\label{gf}
F(\beta,p,q)\equiv
\frac{1}{2}S(q+\beta)-\frac{1}{2}S(q-\beta)-\sum_{i=1}^m p_i
\beta_i.
\end{equation}
\end{thm}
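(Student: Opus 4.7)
The plan is to verify that the specific function $F(\beta,p,q)$ proposed in (\ref{gf}) is indeed a generating family for the Lagrangian germ $\mathcal L$ described by (\ref{L1})--(\ref{L2}), in the sense of the defining equation (\ref{genL}). Once this identification is established, the theorem follows immediately from Proposition \ref{caustic}, which identifies the caustic of $\pi|_{\mathcal L}$ with the Wigner caustic ${\bf E}_{1/2}(L)$.

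The heart of the argument is therefore a direct computation of the three partial derivatives of $F$. First, I would compute $\partial F/\partial p_i = -\beta_i$, so that the condition $\dot q = -\partial F/\partial p$ in (\ref{genL}) forces the identification $\beta_i = \dot q_i$; this is the mechanism by which the auxiliary variable $\beta$ becomes the fibre variable $\dot q$. Second, I would compute
\[
\frac{\partial F}{\partial q_i} \;=\; \frac{1}{2}\frac{\partial S}{\partial q_i}(q+\beta) - \frac{1}{2}\frac{\partial S}{\partial q_i}(q-\beta),
\]
so that the relation $\dot p = \partial F/\partial q$, after substituting $\beta=\dot q$, reproduces precisely (\ref{L1}). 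Third, the constraint $\partial F/\partial\beta_i = 0$ gives
\[
\frac{1}{2}\frac{\partial S}{\partial q_i}(q+\beta) + \frac{1}{2}\frac{\partial S}{\partial q_i}(q-\beta) - p_i = 0,
\]
which, under the same substitution, is exactly (\ref{L2}). Thus the set cut out by (\ref{genL}) using the $F$ of (\ref{gf}) coincides, as a germ at the origin, with $\mathcal L$ as described by (\ref{L1})--(\ref{L2}).

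Having verified that $F$ generates $\mathcal L$ in the Lagrangian fibre bundle $T\mathbb R^{2m}\to\mathbb R^{2m}$ with the symplectic form $\dot\omega$, I would then conclude by invoking Proposition \ref{caustic}: the critical values of $\pi|_{\mathcal L}$ form the Wigner caustic ${\bf E}_{1/2}(L)$, and restriction to the germ at $0\in L$ yields the Wigner caustic on shell as in the stated theorem.

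There is no real obstacle: the computation is entirely mechanical once the correspondence $\beta \leftrightarrow \dot q$ is noticed. The only point worth highlighting is to be careful that, even though (\ref{WC2}) involves the determinantal condition on the Hessian difference, this condition is not part of the generating family itself but appears as the caustic (critical-value) condition, which is handled for us by Proposition \ref{caustic}. The $\mathbb Z_2$-symmetry of $F$ under $\beta\mapsto -\beta$ (reflecting Proposition \ref{mirror}) is automatic from the formula and will play its role in the singularity analysis of later sections rather than in the present verification.
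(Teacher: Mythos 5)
Your proposal is correct and follows exactly the route the paper indicates: the paper itself describes the proof as ``a straightforward computation from (\ref{genL}) to (\ref{L1})--(\ref{L2}),'' which is precisely your verification that $\partial F/\partial p=-\beta$ identifies $\beta$ with $\dot q$, that $\partial F/\partial q$ reproduces (\ref{L1}), and that $\partial F/\partial\beta=0$ reproduces (\ref{L2}), followed by an appeal to Proposition \ref{caustic}. Your closing remark correctly separates the determinantal caustic condition (\ref{WC2}) from the generating-family data itself.
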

For any $\beta$, $p$, $q$, the generating family (\ref{gf}) satisfies
\begin{equation}\label{oddF}
F(-\beta,p,q)\equiv-F(\beta,p,q)
\end{equation}
It implies that $F$ is a deformation of an {\it odd function-germ}
\begin{equation}\label{f}
f(\beta)\equiv F(\beta,0,0)\equiv \frac{1}{2}(S(\beta)-S(-\beta)).
\end{equation}
We call $F$ which satisfies (\ref{oddF}) an {\it odd deformation} of an odd function-germ $f$ (see Definitions \ref{oddgerm} and \ref{odddeform}, below).
Thus, in order to study singularities of the Wigner caustic on shell,
we must consider classification  of odd function-germs and their odd deformations.
\begin{rem}\label{hidden} Theorem \ref{genfam}
implies that singularities of the Wigner caustic on shell are $\mathbb Z_2$-symmetric singularities (see Proposition \ref{mirror}, above, and Remark~\ref{remark equivariance}, below). However, at the level of a germ of the Wigner caustic on shell ${\bf E}_{{1}/{2}}(L)\subset\mathbb R^{2m}$, this $\mathbb Z_2$-symmetry is a {\it hidden symmetry} which is only actually revealed  in $\mathcal L \subset T\mathbb R^{2m}$. \end{rem}
\begin{rem} The form (\ref{gf}) for the generating family of the Wigner caustic on shell of a Lagrangian submanifold of the affine-symplectic space was already presented in \cite{OH}, and its odd character was remarked. However, the classification used there, borrowed from Arnold's, was not performed in the $\mathbb Z_2$-symmetric context. Furthermore, albeit respecting that $f(\beta)=F(\beta,0,0)$ is odd,
the authors did not take into account that $F(\beta,p,q)$ must be an odd deformation of $F(\beta,0,0)$. \end{rem}

\section{Singularities of odd functions}

\subsection{Preliminaries} \label{subsec:machinery}

We recall basic definitions.
\begin{defn}\label{oddgerm}
A smooth function-germ $f$ at $0$ on $\mathbb R^m$ is {\bf even} if $f(-x)\equiv f(x)$ and it is
 is {\bf odd} if $f(-x)\equiv -f(x)$.
\end{defn}
\begin{notation} Let us denote by $\mathcal E_m^{even}$ the ring of even smooth
function-germs $f : (\mathbb R^m, 0) \to {\mathbb R}$ and by
$\mathcal E_m^{odd}$ the set of odd smooth
function-germs $g : (\mathbb R^m, 0) \to ({\mathbb R},0)$, which has a
module structure over $\mathcal E_m^{even}$. \end{notation}

\begin{rem} \label{remark equivariance}
Consider the diagonal action of $\mathbb Z_2=\{1,-1\}$
on ${\mathcal R}^m$:
\begin{equation}\label{action}
\begin{array}{ccc}
\mathbb Z_2\times \mathbb R^m & \to &  \mathbb R^m \\
\bigl(\gamma, (x_1, \ldots, x_m)\bigr) & \mapsto & (\gamma x_1, \ldots, \gamma x_m).
\end{array}
\end{equation}
Hence, $\mathcal E_m^{even}$ is the ring of $\mathbb Z_2$-invariant germs under this action on
source. Also, $\mathcal E_m^{odd}$
is the module of $\mathbb Z_2$-equivariant germs, with same action on source
and on target - take (\ref{action}) for $m=1$.
\end{rem}

We now set up the equivalence relation  in $\mathcal E_m^{odd}$.
Changes of coordinates
shall preserve $\mathbb Z_2$-equivariance, so we  consider  the following:
\begin{defn}  A diffeomorphism-germ $\Phi: (\mathbb R^m,0) \rightarrow (\mathbb R^m,0)$ is {\bf odd} if  $\Phi(-x)\equiv -\Phi(x)$. Denote by $\mathcal D_m^{odd}$ the group of odd diffeomorphism-germs $(\mathbb R^m,0) \rightarrow (\mathbb R^m,0)$.
\end{defn}

\begin{defn}
Let $f, g\in\mathcal E_m^{odd}$. We say that $f$ and $g$ are {\bf $\mathcal R^{odd}$-equivalent} if
there exists $\Phi \in \mathcal D_m^{odd}$ such that $f=g\circ \Phi$.
\end{defn}

Following standard notation, denote by  $L \mathcal R^{odd} g$  the
tangent space to the $\mathcal R^{odd}$-orbit of $g$ at $g$, given by elements of the form
$\frac{d}{dt}\left|_{t=0}\right.\left(g\circ
\Phi^t\right)=\sum_{i=1}^m \frac{\partial g}{\partial x_i}\frac{d
\phi^t_i}{dt}\left|_{t=0}\right.$,
where $g\circ \Phi^t$
is a path in the $\mathcal R^{odd}$-orbit of $g$, with $\Phi^t=(\phi_1^t,\cdots,\phi_m^t)$ in $\mathcal
D_m^{odd}$ such that $\Phi^0=I$.
Now,  $\phi_i^t=\sum_{j=1}^m x_j h_{ij}^t$, with $h_{ij}^t\in \mathcal E_m^{even}$, so that $\frac{d}{dt}\left|_{t=0}\right.\left(g\circ
\Phi^t\right)=\sum_{i,j=1}^m x_j\frac{\partial g}{\partial
x_i}\frac{d h^t_{ij}}{dt}\left|_{t=0}\right.$,
$i,j=1,\cdots,m$. Since $h_{ij}^t\in \mathcal E_m^{even}$, so does $\frac{d
h^t_{ij}}{dt}\left|_{t=0}\right.$.  We have:

\begin{prop}
Let $g\in \mathcal E_m^{odd}$. The tangent space $L \mathcal R^{odd}g$ to the $\mathcal
R^{odd}$-orbit of $g$ at $g$ is the $\mathcal E_m^{even}$-module generated by
$\left\{x_j\frac{\partial g}{\partial x_i}:i,j=1,\cdots,m\right\}$.
\end{prop}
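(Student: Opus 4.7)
The plan is to prove the two inclusions separately. The excerpt already carries out most of the inclusion $L\mathcal{R}^{odd}g \subseteq M$, where $M$ denotes the $\mathcal{E}_m^{even}$-module generated by $\{x_j \partial g/\partial x_i\}$. The one missing justification is the assertion that every odd diffeomorphism-germ $\Phi^t$ fixing $0$ admits a decomposition $\phi_i^t(x) = \sum_j x_j h_{ij}^t(x)$ with $h_{ij}^t \in \mathcal{E}_m^{even}$. I would invoke Hadamard's lemma: since $\Phi^t$ is odd, $\phi_i^t(0) = 0$, so $\phi_i^t(x) = \sum_j x_j h_{ij}^t(x)$ with $h_{ij}^t(x) = \int_0^1 \partial_j \phi_i^t(sx)\,ds$. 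Because $\phi_i^t$ is odd, each partial derivative $\partial_j \phi_i^t$ is even, and the integral formula then shows $h_{ij}^t$ is even. Plugging into $\frac{d}{dt}\big|_{t=0}(g\circ\Phi^t)$ as in the excerpt completes this inclusion, since each derivative $\frac{d h_{ij}^t}{dt}\big|_{t=0}$ inherits evenness as a pointwise limit of even functions.

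For the reverse inclusion $M \subseteq L\mathcal{R}^{odd}g$, I would argue by explicit construction. Given an arbitrary family of coefficients $\eta_{ij} \in \mathcal{E}_m^{even}$, define
\[
\Phi^t(x) \;=\; \left(x_1 + t\sum_{j=1}^m x_j\, \eta_{1j}(x),\ \ldots,\ x_m + t\sum_{j=1}^m x_j\, \eta_{mj}(x)\right).
\]
Each coordinate is a sum of products of odd and even germs, hence odd, so $\Phi^t(-x) \equiv -\Phi^t(x)$. Since $\Phi^0 = \mathrm{id}$ and $D\Phi^t(0) = I + O(t)$, the inverse function theorem ensures that $\Phi^t \in \mathcal{D}_m^{odd}$ for all sufficiently small $t$. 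Differentiating $g\circ\Phi^t$ at $t=0$ recovers exactly $\sum_{i,j} \eta_{ij}\, x_j\, \partial g/\partial x_i$, proving that the generic element of $M$ is tangent to the $\mathcal{R}^{odd}$-orbit at $g$.

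The main obstacle is really only the bookkeeping in the first inclusion: verifying that the Hadamard decomposition of an odd germ can be taken with even coefficients. This is the single point where oddness of $\Phi^t$ enters essentially, since it ensures both that the constant term vanishes (so Hadamard applies) and that the resulting coefficients lie in $\mathcal{E}_m^{even}$. The reverse inclusion requires no deep ingredient beyond the inverse function theorem, and no preparation-theorem or finite-determinacy machinery is needed at this stage.
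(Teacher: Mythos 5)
Your proposal is correct and follows essentially the same route as the paper: the paper's argument is the paragraph preceding the proposition, which computes $\frac{d}{dt}\big|_{t=0}(g\circ\Phi^t)$ after writing $\phi_i^t=\sum_j x_j h_{ij}^t$ with $h_{ij}^t\in\mathcal E_m^{even}$. You merely make explicit two points the paper leaves implicit --- that the Hadamard coefficients $h_{ij}^t(x)=\int_0^1\partial_j\phi_i^t(sx)\,ds$ are even because the partials of an odd germ are even, and the reverse inclusion via the explicit path $\Phi^t=\mathrm{id}+t(\ldots)$ --- both of which are handled correctly.
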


\begin{defn}\label{odddeform}
A function-germ $F\in \mathcal E_{m+k}$ is an {\bf odd deformation} of $f\in \mathcal
E_m^{odd}$ if $F|_{\mathbb R^m\times \{0\}}=f$ and for any fixed
$\lambda \in \mathbb R^k$ the function-germ $F|_{\mathbb R^m\times
\{\lambda\}}\in \mathcal E_m^{odd}$. The space $\mathbb R^k$ is
called the {\bf base} of the odd deformation $F$ and  $k$ is its {\bf
dimension}.
\end{defn}

\begin{defn}
The odd deformation $F\in \mathcal E_{m+k}$ is {\bf $\mathcal
R^{odd}$-versal} if every odd deformation of $f$ is $\mathcal
R^{odd}$-isomorphic to one induced from $F$ i.e. any odd deformation
$G\in \mathcal E_{m+l}$ of $f$ is representable in the form
$$
G(x,\lambda) \equiv F(\Phi(x,\lambda),\Lambda(\lambda)),
$$
 $\Phi:(\mathbb R^{m+k},0) \to (\mathbb R^m,0)$, $\Lambda:(\mathbb
R^k,0)\rightarrow (\mathbb R^l,0)$ smooth map-germs s.t.
$$
\Phi|_{\mathbb R^m\times \{\lambda\}} \in \mathcal D_m^{odd}, \
\Phi(x,0)\equiv x.
$$
An $\mathcal R^{odd}$-versal deformation $F\in \mathcal E_{m+k}$ of
$f\in \mathcal E_m^{odd}$ is {\bf $\mathcal R^{odd}$-miniversal} if the
dimension of the base has  its least possible value. This minimum value being
 the (odd) codimension of $f$.
\end{defn}

The group $\mathcal D_m^{odd}$ is a geometric subgroup in the sense of Damon (see \cite{Damon2}). The following theorem is a particular case of  \cite[Theorem 3.7]{BM}:

\begin{thm} \label{inf-versal}
Let $g \in \mathcal E_m^{odd}$. Then

\noindent (a)  A $k$-parameter deformation $G$ of $g$ is $\mathcal R^{odd}$-versal if and only if
\[\mathcal E_m^{odd}=\mathcal E_m^{even} \left\{x_j\frac{\partial g}{\partial
x_i}:i,j=1,\cdots,m\right\}+\mathbb R \left\{\frac{\partial
G}{\partial \lambda_\ell}|_{\mathbb R^m\times
\{0\}}:\ell=1,\cdots,k\right\}. \]

\noindent (b) If $W \subset \mathcal E_m^{odd} $ is a finite dimensional vector space
such that $\mathcal E_m^{odd} = L \mathcal R^{odd} g \oplus W$, and if $h_1, \ldots, h_s\in \mathcal E_m^{odd} $ is a basis
for $W$, then  $G(x,\lambda) \equiv g(x)+\sum_{j=1}^{s} \lambda_j h_j(x)$ is a $\mathcal R^{odd}$-miniversal deformation of $g$.
\end{thm}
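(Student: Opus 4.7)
The strategy is to deduce the statement directly from the cited general versality theorem \cite[Theorem 3.7]{BM} for deformations under geometric subgroups of Damon's contact group framework, once the abstract hypotheses have been matched to the concrete data for $\mathcal{R}^{odd}$ acting on $\mathcal{E}_m^{odd}$. The paper already asserts that $\mathcal{D}_m^{odd}$ qualifies as a geometric subgroup in Damon's sense, so the task reduces to identifying three objects in the present setting: the natural module of infinitesimal deformations, the tangent space to the orbit, and the space of initial velocities of the deformation $G$.

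First, I would observe that by Definition \ref{odddeform} the admissible deformations of an odd germ $g$ are precisely the elements of the module $\mathcal{E}_m^{odd}$, because odd deformations of $g$ are exactly those $F \in \mathcal{E}_{m+k}$ whose restrictions to fixed parameter values lie in $\mathcal{E}_m^{odd}$. Second, the preceding proposition has already computed the tangent space
\[
L\mathcal{R}^{odd} g \;=\; \mathcal{E}_m^{even}\left\{x_j\frac{\partial g}{\partial x_i} : i,j = 1,\ldots,m\right\},
\]
using that any odd diffeomorphism-germ has components $\phi_i(x) = \sum_j x_j h_{ij}(x)$ with $h_{ij}\in\mathcal{E}_m^{even}$. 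Third, the initial velocities of a $k$-parameter odd deformation $G$ are the odd germs $\partial G/\partial\lambda_\ell\big|_{\mathbb R^m\times\{0\}}$, spanning a finite-dimensional $\mathbb R$-subspace of $\mathcal{E}_m^{odd}$. The general infinitesimal versality theorem \cite[Theorem 3.7]{BM} specialized to the geometric subgroup $\mathcal{D}_m^{odd}$ then says that $G$ is $\mathcal{R}^{odd}$-versal if and only if
\[
\mathcal{E}_m^{odd} \;=\; L\mathcal{R}^{odd} g \;+\; \mathbb R\!\left\{\frac{\partial G}{\partial\lambda_\ell}\bigg|_{\mathbb R^m\times\{0\}} : \ell = 1,\ldots,k\right\},
\]
which is exactly the assertion of part (a).

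For part (b), the hypothesis $\mathcal{E}_m^{odd} = L\mathcal{R}^{odd}g \oplus W$ with $W$ finite-dimensional implies that $g$ has finite odd codimension equal to $s=\dim W$. Taking $G(x,\lambda) = g(x) + \sum_{j=1}^s \lambda_j h_j(x)$, one has $\partial G/\partial\lambda_j\big|_{\mathbb R^m\times\{0\}} = h_j$, so the span of the initial velocities is exactly $W$, and the criterion of (a) is satisfied by the direct-sum assumption; hence $G$ is $\mathcal{R}^{odd}$-versal. The base dimension $s$ equals the odd codimension of $g$, which is the minimum possible by (a), so $G$ is in fact $\mathcal{R}^{odd}$-miniversal. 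The only real obstacle anywhere in the argument is the verification that $\mathcal{D}_m^{odd}$ satisfies Damon's axioms for a geometric subgroup (unipotent filtration, finite generation of the tangent module over an adequate ring, preservation under naturality); since the excerpt states this as given, the proof amounts to translation of \cite[Theorem 3.7]{BM} into the notation above.
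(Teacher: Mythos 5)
Your proposal matches the paper's treatment: the paper gives no separate proof, simply noting that $\mathcal D_m^{odd}$ is a geometric subgroup in Damon's sense and that the theorem is a particular case of \cite[Theorem 3.7]{BM}, which is exactly the translation you carry out (with the tangent-space computation already supplied by the preceding proposition). Your added detail for part (b) is the standard specialization and raises no issues.
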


We introduce the equivalence relation between odd deformations.

\begin{defn} Odd deformations $F, G\in \mathcal E_{m+k}$ are {\bf fibred  $\mathcal R^{odd}$-equivalent} if
there exists a fibred diffeomorphism-germ $\Psi \in \mathcal
D_{m+k}$ s.t. $\Psi(x,\lambda)\equiv
(\Phi(x,\lambda),\Lambda(\lambda))$, $\Phi|_{\mathbb R^m\times
\{\lambda\}} \in \mathcal D_m^{odd}$, $\forall\lambda \in \mathbb
R^k$, and
$
F=G\circ \Psi.
$
\end{defn}

\begin{notation} Let $\mathcal M_m^{k({odd})}$ denote the $\mathcal E_m^{even}$-submodule of $\mathcal E_m^{odd}$ generated by $x_1^{k_1}\cdots x^{k_m}, \forall k_1,\cdots,k_m\geq 0$,  s.t. $k_1+\cdots+k_m=k$.\end{notation}
Obviously, these are nontrivial submodules precisely when $k$ is odd. It follows the finite determinacy result for our particular case (see \cite{Damon2}, \cite{Rob1}-\cite{Rob2}):

\begin{prop}
$g\in \mathcal E_m^{odd}$ is finitely $\mathcal R^{odd}$-determined if and only if $\mathcal M_m^{k({odd})} \subset L \mathcal R^{odd} g$ for some odd positive integer $k$ .
\end{prop}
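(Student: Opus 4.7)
My plan is to derive both directions by invoking the general finite-determinacy theorem for geometric subgroups (Damon, \cite{Damon2}; see also \cite{Rob1}, \cite{Rob2}), whose hypotheses are already known to hold for $\mathcal R^{odd}$ acting on $\mathcal E_m^{odd}$, as noted above. The only content specific to the odd setting is to identify the correct $\mathbb{Z}_2$-equivariant replacement for the classical condition that some power of the maximal ideal lie inside $L\mathcal R g$. Since $\mathcal M_m^{k(odd)}$ is precisely the odd part of the $k$-th power of the maximal ideal, and is nonzero exactly when $k$ is odd, it is the natural candidate, and the proposition then follows from the general theorem.

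For sufficiency, I would assume $\mathcal M_m^{k(odd)}\subset L\mathcal R^{odd}g$ with $k$ odd, pick $N$ slightly larger than $k$, and run Mather's homotopy method in the odd category. Given $h\in\mathcal E_m^{odd}$ with $j^N h=j^N g$, the path $g_t=(1-t)g+th$ lies entirely in $\mathcal E_m^{odd}$, and the construction of a smooth family $\Phi_t\in\mathcal D_m^{odd}$ with $\Phi_0=\mathrm{id}$ and $g_t\circ\Phi_t=g$ reduces to writing $h-g=\sum_{i,j}x_j(\partial g_t/\partial x_i)\,\xi_{ij}^t$ with $\xi_{ij}^t\in\mathcal E_m^{even}$ depending smoothly on $t$. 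A parametric Nakayama--Malgrange preparation argument, within the $\mathcal E_m^{even}$-module $\mathcal E_m^{odd}$, upgrades the hypothesis at $t=0$ to the corresponding inclusion $\mathcal M_m^{K(odd)}\subset L\mathcal R^{odd}g_t$ for all small $t$; the resulting time-dependent vector field $\sum_{i,j}x_j\,\xi_{ij}^t\,\partial_{x_i}$ is odd by construction, so its flow stays in $\mathcal D_m^{odd}$, establishing $N$-determinacy.

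For necessity, if $g$ is $N$-determined, I would pick any odd $k>N$ and any $\eta\in\mathcal M_m^{k(odd)}$; the one-parameter family $g+s\eta$ has the same $N$-jet as $g$ for all $s$, so each member is $\mathcal R^{odd}$-equivalent to $g$. The trivialisation theorem inside the geometric subgroup $\mathcal R^{odd}$ then produces a smooth family $\Phi_s\in\mathcal D_m^{odd}$ with $\Phi_0=\mathrm{id}$ and $(g+s\eta)\circ\Phi_s=g$, and differentiating $g\circ\Phi_s^{-1}=g+s\eta$ at $s=0$ yields $\eta\in L\mathcal R^{odd}g$, so $\mathcal M_m^{k(odd)}\subset L\mathcal R^{odd}g$. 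The principal technical obstacle is the homotopy-equation step in the sufficiency direction: one must verify that the infinitesimal solvability equation can be solved with \emph{even} coefficients $\xi_{ij}^t$ smoothly in $t$, and that the flow of the resulting odd vector field is well defined on a common neighbourhood of the origin. But this is exactly what the geometric-subgroup formalism is designed to handle, and after checking that $\mathcal R^{odd}$ fits that framework the proposition reduces to Damon's theorem.
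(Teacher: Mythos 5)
Your proposal is correct and follows essentially the same route as the paper, which gives no proof at all but simply presents the proposition as a consequence of the general finite-determinacy theory for geometric subgroups, citing Damon and Roberts exactly as you do. Your additional sketch of the homotopy-method details and the necessity direction is a faithful unpacking of what that citation implicitly relies on, not a different approach.
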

\begin{thm}\label{simpleless3} Let $g\in \mathcal E_m^{odd}$ with a singular point at $0$. If  $m\ge 3$,  then $g$ is not $\mathcal R^{odd}$-simple.
\end{thm}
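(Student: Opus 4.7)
The plan is to reduce the problem to the linear action of $GL(m)$ on homogeneous cubic forms and then produce an explicit modulus by a dimension count.

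First, since $g$ is odd its Taylor series contains only odd-degree monomials, so the degree-two part vanishes automatically, and since $g$ is singular at $0$ the linear part also vanishes. Hence the $3$-jet $c := j^3 g$ is a (possibly zero) homogeneous cubic form in $V_3 := \mathrm{Sym}^3(\mathbb R^m)^{\ast}$. Next, every $\Phi \in \mathcal D_m^{odd}$ has linear part $A := D\Phi(0) \in GL(m)$, with $\Phi(x) - Ax$ of order at least $3$. A direct Taylor expansion, using that $g$ itself starts at degree $3$, yields $j^3(g \circ \Phi) = c \circ A$. Consequently $\mathcal R^{odd}$-equivalent germs have $GL(m)$-equivalent $3$-jets, so the number of $\mathcal R^{odd}$-orbits meeting any neighborhood of $g$ is at least the number of $GL(m)$-orbits meeting the corresponding neighborhood of $c$ in $V_3$.

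The modulus then comes from a dimension count: $\dim V_3 = \binom{m+2}{3} = m(m+1)(m+2)/6$ while $\dim GL(m) = m^2$, and one checks $\binom{m+2}{3} > m^2$ if and only if $(m-1)(m-2) > 0$, which holds for every $m \geq 3$. Hence every $GL(m)$-orbit in $V_3$ has codimension at least $\binom{m+2}{3} - m^2 > 0$. Choose the germ of a smooth transversal $T \subset V_3$ to the orbit of $c$ at $c$, of positive dimension: points of $T$ sufficiently close to $c$ lie in pairwise distinct $GL(m)$-orbits. Lifting this back, for each $c' \in T$ near $c$ set $g_{c'} := g - c + c' \in \mathcal E_m^{odd}$; this produces an uncountable family of odd function-germs in every neighborhood of $g$ whose $3$-jets are pairwise $GL(m)$-inequivalent. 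By the equivariance from the previous paragraph, the $g_{c'}$ are pairwise $\mathcal R^{odd}$-inequivalent, so $g$ is not $\mathcal R^{odd}$-simple.

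The step I expect to need most care is the local-slice claim, i.e. that the transversal $T$ can be chosen so that nearby points truly lie in distinct $GL(m)$-orbits even when the stabilizer of $c$ in $GL(m)$ is positive-dimensional. The orbit map $GL(m) \to V_3$, $A \mapsto c \circ A$, factors through an immersion of the homogeneous space $GL(m)/\mathrm{Stab}(c)$ onto the orbit, and the standard tubular-neighborhood argument yields a transversal of codimension equal to $\dim(GL(m) \cdot c) \leq m^2 < \dim V_3$, so a positive-dimensional $T$ always exists. Everything else is a routine jet-level computation.
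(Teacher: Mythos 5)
Your argument is correct and is essentially the paper's own proof: the paper also reduces to the action of $GL(m)$ on the space of $3$-jets of singular odd germs (the cubic forms, of dimension $m(m+1)(m+2)/6$) and concludes from the inequality $m(m+1)(m+2)/6 > m^2$ for $m\ge 3$. You have merely filled in the standard details (that the $3$-jet of $g\circ\Phi$ is $c\circ D\Phi(0)$ and the transversal/orbit-codimension argument) that the paper leaves implicit.
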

\begin{proof}
If $0$ is a singular point of $g$  then $g \in
 {\mathcal M}_m^{3(odd)}$. Dimension of the space of $3$-jets at $0$ of
singular odd function-germs is  ${(m+2)(m+1)m}/{6}$. We
act on this space with $GL(m)$, of dimension $m^2$. But, for $m\ge 3$,
 ${(m+2)(m+1)m}/{6}> m^2$.
\end{proof}
\begin{rem}\label{noquadratic} If $g\in\mathcal E_{2+n}^{odd}$, the usual procedure of adding quadratic forms in the remaining $n$ variables cannot be performed. \end{rem}

Thus, classification of  simple odd singularities must be performed only
in dimension one and two, as presented in the next subsection.

\subsection{Simple odd function-germs and their odd deformations} \label{subsec:classification}

Here we deduce the normal forms and their mini-versal deformations for the simple odd singularities of
function germs in one and two variables. We have chosen a particular notation for each.
We start with the cases in one-variable. The results are obtained straightforwardly and are given in the next theorem and
corollary. The following theorem and corollary deal with the cases in two variables.

\begin{thm}
Let $g\in \mathcal E_1^{odd}$. Then $g$ is $\mathcal R^{odd}$-simple if, and
only if, $g$ is $\mathcal R^{odd}$-equivalent to one of the following
function-germs at $0$:
$$
A_{2k/2}: \ x\mapsto x^{2k+1}, \ \text{for} \ k=1,2,\cdots
$$
\end{thm}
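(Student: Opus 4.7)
The plan is to exploit the fact that an odd function-germ in one variable has only odd-degree Taylor coefficients, so the classification reduces to tracking the order of the first non-vanishing coefficient. If the Taylor series of $g$ is identically zero, then $g$ is not $\mathcal{R}^{odd}$-equivalent to any $x^{2k+1}$ but is adjacent---via its truncations---to every orbit $A_{2k/2}$, so such a flat $g$ is non-simple. Otherwise I write $g(x) = x^{2k+1}\,u(x^2)$, where $2k+1$ is the order of $g$ at $0$ and $u(0)\neq 0$; the singular case corresponds to $k\geq 1$ (the regular case $k=0$ yielding a single orbit via the usual inverse function theorem restricted to odd maps). I would then show that every such $g$ is $\mathcal{R}^{odd}$-equivalent to $x^{2k+1}$.

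The normalization step is essentially cost-free: because $2k+1$ is odd, $t\mapsto t^{1/(2k+1)}$ is a smooth diffeomorphism of $\mathbb{R}\setminus\{0\}$, so $\psi(y) := u(y)^{1/(2k+1)}$ (unique real root) is a smooth even germ with $\psi(0)\neq 0$. Setting $\Phi(x) := x\,\psi(x^2)$ gives an odd smooth germ with $\Phi'(0) = \psi(0)\neq 0$, hence $\Phi\in \mathcal{D}_1^{odd}$, and a direct computation yields
\[
\Phi(x)^{2k+1} \;=\; x^{2k+1}\,\psi(x^2)^{2k+1} \;=\; x^{2k+1}\,u(x^2) \;=\; g(x),
\]
so $g = h\circ\Phi$ with $h(x)=x^{2k+1}$, establishing the equivalence.

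For simplicity, the tangent-space proposition above gives $L\mathcal{R}^{odd}(x^{2k+1}) = \mathcal{E}_1^{even}\cdot(2k+1)x^{2k+1} = x^{2k+1}\mathcal{E}_1^{even}$, which sits in $\mathcal{E}_1^{odd}$ with codimension $k$ and transversal spanned by $x, x^3,\ldots, x^{2k-1}$. Any odd germ in a small neighborhood of $x^{2k+1}$ therefore has one of its first $k+1$ odd Taylor coefficients nonzero, and by the normal-form step it is $\mathcal{R}^{odd}$-equivalent to $x^{2j+1}$ for some $0\leq j\leq k$---only finitely many orbits---so $x^{2k+1}$ is simple. Since the orbits are distinguished by the $\mathcal{R}^{odd}$-invariant order of vanishing, distinct values of $k$ give distinct orbits and the classification follows. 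The whole argument hinges on a single observation---that $2k+1$ is odd, hence admits a smooth real $(2k+1)$-th root of the non-vanishing function $u$---so there is no serious obstacle; everything else is formal.
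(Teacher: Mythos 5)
Your argument is correct, and since the paper states this one-variable classification as ``obtained straightforwardly'' without writing out a proof, your root-extraction normal form $\Phi(x)=x\,u(x^2)^{1/(2k+1)}$ together with the tangent-space computation $L\mathcal R^{odd}(x^{2k+1})=x^{2k+1}\mathcal E_1^{even}$ is exactly the standard argument the authors are invoking; it is also consistent with their Corollary on the miniversal deformation $x^{2k+1}+\sum_{j=1}^k\lambda_j x^{2j-1}$ and odd codimension $k$. Your treatment of the flat case (adjacent to infinitely many orbits, hence non-simple) correctly supplies the ``only if'' direction.
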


\begin{cor}\label{A-k-def}
For $k \geq1$,  $\mathcal R^{odd}$-miniversal  deformation of  $A_{2k/2}$ is
\[ G(x,\lambda_1,\cdots,\lambda_k) \ = \
x^{2k+1}+\sum_{j=1}^{k}\lambda_{j}x^{2j-1} \ . \]
\end{cor}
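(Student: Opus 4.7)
The plan is to apply Theorem~\ref{inf-versal}(b) directly: compute $L\mathcal{R}^{odd}g$ for $g(x) = x^{2k+1}$, find an explicit complement $W$ inside $\mathcal{E}_1^{odd}$, and read off the miniversal deformation together with its dimension.

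First, I would compute the tangent space. In one variable, the proposition preceding Definition~\ref{odddeform} gives
\[ L\mathcal{R}^{odd}g \;=\; \mathcal{E}_1^{even}\bigl\langle x\cdot g'(x)\bigr\rangle \;=\; \mathcal{E}_1^{even}\bigl\langle (2k+1)\,x^{2k+1}\bigr\rangle \;=\; x^{2k+1}\,\mathcal{E}_1^{even}. \]
Next I would describe $\mathcal{E}_1^{odd}$ as a module: every odd germ has the form $x\,\phi(x^2)$ with $\phi \in \mathcal{E}_1^{even}$, i.e., $\mathcal{E}_1^{odd} = x\cdot\mathcal{E}_1^{even}$. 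Under this identification, $L\mathcal{R}^{odd}g$ corresponds to $x\cdot (x^2)^{k}\mathcal{E}_1^{even}$, so the quotient $\mathcal{E}_1^{odd}/L\mathcal{R}^{odd}g$ is isomorphic to $\mathcal{E}_1^{even}/(x^2)^{k}\mathcal{E}_1^{even}$.

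By Borel's lemma (or by a direct Taylor expansion argument, since $(x^2)^{k}\mathcal{E}_1^{even}$ has finite codimension as an $\mathbb{R}$-vector space), the classes of $1, x^2, x^4, \ldots, x^{2(k-1)}$ form an $\mathbb{R}$-basis for $\mathcal{E}_1^{even}/(x^2)^{k}\mathcal{E}_1^{even}$. Multiplying by $x$, I obtain
\[ W \;=\; \mathrm{span}_{\mathbb{R}}\bigl\{x,\, x^3,\, x^5,\,\ldots,\, x^{2k-1}\bigr\} \]
as a complement, so $\mathcal{E}_1^{odd} = L\mathcal{R}^{odd}g \oplus W$. Theorem~\ref{inf-versal}(b) then yields
\[ G(x,\lambda_1,\ldots,\lambda_k) \;=\; x^{2k+1} + \sum_{j=1}^{k}\lambda_j\,x^{2j-1} \]
as an $\mathcal{R}^{odd}$-versal deformation. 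Since $\dim_{\mathbb{R}} W = k$ is precisely the odd codimension of $g$, this deformation is miniversal.

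The only mildly subtle step is the passage from formal power series to smooth germs when identifying the complement, and this is handled by the standard invocation of Borel's lemma together with the fact that the relevant modules have finite $\mathbb{R}$-codimension. The rest is a direct application of the general infinitesimal criterion already available to us from Theorem~\ref{inf-versal}.
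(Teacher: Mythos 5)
Your proof is correct and follows essentially the route the paper intends: the paper gives no explicit argument for this corollary (it states the one-variable results are ``obtained straightforwardly''), and the implied proof is exactly your application of Theorem~\ref{inf-versal}(b) after computing $L\mathcal R^{odd}g = x^{2k+1}\mathcal E_1^{even}$ and the complement $W=\mathrm{span}_{\mathbb R}\{x,x^3,\dots,x^{2k-1}\}$ inside $\mathcal E_1^{odd}=x\,\mathcal E_1^{even}$. One small point: the fact you need is Hadamard's lemma (Taylor's theorem with smooth remainder), not Borel's lemma, but your parenthetical ``direct Taylor expansion argument'' is the correct justification, so this is only a misattribution rather than a gap.
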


\begin{thm}\label{odd-class-2}
Let $g \in \mathcal E_2^{odd}$. Then $g$ is $\mathcal R^{odd}$-simple if, and
only if, $g$ is $\mathcal R^{odd}$-equivalent to one of the following
function-germs at $0$:
$$
D_{2k/2}^{\pm}: \ (x_1,x_2)\mapsto x_1^{2}x_2\pm x_2^{2k-1}, \
\text{for} \ k=2,3,\cdots
$$
$$
E_{8/2}: \ (x_1,x_2)\mapsto x_1^{3}+ x_2^{5},
$$
$$
J_{10/2}^{\pm}: \ (x_1,x_2)\mapsto x_1^{3}\pm x_1x_2^{4}.
$$
$$
E_{12/2}: \ (x_1,x_2)\mapsto x_1^{3}+x_2^{7}.
$$
\end{thm}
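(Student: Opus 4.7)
The plan is to pursue an equivariant Arnold-style classification stratified by successive jets of $g \in \mathcal{M}_2^{3(odd)}$, using Theorem \ref{inf-versal} at each stage to compute $L\mathcal{R}^{odd} g$ as an $\mathcal{E}_2^{even}$-module and read off the complement in $\mathcal{E}_2^{odd}$. Since $g$ is singular and odd, $j^2 g = 0$, so the $3$-jet is a real binary cubic. Under the action of $GL(2,\mathbb{R})\subset \mathcal{D}_2^{odd}$, such cubics fall into five orbits by real factorization type: $x_1^2 x_2 - x_2^3$, $x_1^2 x_2 + x_2^3$, $x_1^2 x_2$, $x_1^3$, and $0$. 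The orbit $\{0\}$ is not simple: the space of odd quintics has dimension $6$ whereas $GL(2)$ has dimension $4$, so any $g\in \mathcal{M}_2^{5(odd)}$ already carries a continuous modulus at the $5$-jet level.

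For the first two cubic normal forms I take $g = x_1^2 x_2 \pm x_2^3$ and compute the four generators $x_j\,\partial g/\partial x_i$, namely $x_1^2 x_2$, $x_1 x_2^2$, $x_1^3\pm 3 x_1 x_2^2$, $x_1^2 x_2\pm 3 x_2^3$; after $\mathcal{E}_2^{even}$-linear recombination these span every degree-$3$ odd monomial, so $\mathcal{M}_2^{3(odd)}\subset L\mathcal{R}^{odd} g$ and $g$ is $3$-$\mathcal{R}^{odd}$-determined with complement $\{x_1, x_2\}$, giving $D_{4/2}^{\pm}$ of odd codimension $2$. When the $3$-jet is $x_1^2 x_2$, the analogous calculation gives $L\mathcal{R}^{odd}(x_1^2 x_2) = \mathcal{E}_2^{even}\{x_1^3, x_1^2 x_2, x_1 x_2^2\}$, which is exactly the odd ideal of germs divisible by $x_1$; its complement is spanned by the pure powers $\{x_2^{2j-1}\}$. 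Taking $k\ge 3$ minimal with nonzero coefficient of $x_2^{2k-1}$ in $g$ and rescaling $x_2$, I bring $g$ to $x_1^2 x_2 \pm x_2^{2k-1}$ and verify $(2k-1)$-determinacy by a further tangent-space calculation, obtaining $D_{2k/2}^{\pm}$ with codimension $k$ from the complement $\{x_1, x_2, x_2^3, \ldots, x_2^{2k-3}\}$. If no such finite $k$ exists, $g$ is not finitely determined and hence not simple.

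The deepest case is $j^3 g = x_1^3$, where $L\mathcal{R}^{odd}(x_1^3) = \mathcal{E}_2^{even}\{x_1^3, x_1^2 x_2\}$ and the degree-$5$ complement is $\{x_1 x_2^4, x_2^5\}$, so the $5$-jet normalizes to $x_1^3 + \alpha x_1 x_2^4 + \beta x_2^5$. If $\beta\neq 0$, the odd substitution $x_1 \mapsto x_1 - (\alpha/(5\beta)) x_2^4$ followed by rescaling kills $\alpha$ and produces $E_{8/2}$; if $\beta=0\neq\alpha$, I obtain $J_{10/2}^{\pm}$; if $\alpha=\beta=0$, I pass to the $7$-jet, whose degree-$7$ complement is $\{x_1 x_2^6, x_2^7\}$, a nonzero coefficient of $x_2^7$ giving $E_{12/2}$. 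In every remaining stratum along this branch, a dimension count against the $GL(2)$-stabilizer of $x_1^3$ forces moduli, so no further simple germs arise. For each of $E_{8/2}, J_{10/2}^{\pm}, E_{12/2}$ I close with a finite-determinacy check via Theorem \ref{inf-versal}, reading off odd codimensions $4, 5, 6$ respectively.

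The chief obstacle is the bookkeeping in the $x_1^3$ branch: I must track precisely which higher-order odd terms can be absorbed by $\Phi \in \mathcal{D}_2^{odd}$ (whose defining constraint excludes translations and any even-degree directions present in full $\mathcal{R}$-equivalence) and exhibit an explicit continuous modulus in every non-listed stratum to certify non-simplicity. The tangent-space computations themselves are mechanical, but the weaker $\mathcal{E}_2^{even}$-module structure forces the complements to contain monomials that would be absorbable in the classical setting, so the classification genuinely diverges from Arnold's $D$--$E$--$J$ series and each step of the reduction must be redone within the equivariant framework.
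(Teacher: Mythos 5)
Your overall strategy---stratify by the $3$-jet (a real binary cubic), compute $L\mathcal R^{odd}g$ as an $\mathcal E_2^{even}$-module at each jet level, and induct---is essentially the paper's, which organizes the same computations via complete transversals; your treatment of the two $D$-branches is sound, and your dimension count disposing of the stratum $j^3_0g=0$ (which the paper leaves implicit) is a correct addition. The genuine gap is in the branch $j^3_0g=x_1^3$, exactly where the restriction to odd coordinate changes bites hardest. To reduce $x_1^3+\alpha x_1x_2^4+\beta x_2^5$ (with $\beta\neq 0$) to $E_{8/2}$ you invoke ``the odd substitution $x_1\mapsto x_1-(\alpha/(5\beta))x_2^4$.'' That map is not odd: since $x_2^4$ is even, its first component sends $(-x_1,-x_2)$ to $-x_1-(\alpha/(5\beta))x_2^4$ rather than to minus itself, so it does not lie in $\mathcal D_2^{odd}$. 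Worse, even ignoring equivariance it does not remove the offending term: the substitution returns $\alpha x_1x_2^4$ untouched and only creates new even-degree terms such as $x_1^2x_2^4$ and $x_2^8$. Indeed no change of coordinates with $1$-jet the identity can kill $x_1x_2^4$ at the $5$-jet level, because the only degree-$5$ contribution of such a change is $3x_1^2u_3$ with $u_3$ the cubic part of the perturbation of $x_1$, and $-\alpha x_1x_2^4$ is not divisible by $x_1^2$.

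The term is in fact removable, but by a different mechanism, which is what the paper uses: the tangent space of $g_b=x_1^3+x_2^5+bx_1x_2^4$ is $\mathcal E_2^{even}\{x_1^3,x_1^2x_2,x_1x_2^4,x_2^5\}$, of dimension independent of $b$ and containing the velocity $x_1x_2^4$ (it arises from $x_1\,\partial g_b/\partial x_2$, i.e.\ from the \emph{linear} shear $x_2\mapsto x_2+\epsilon x_1$, followed by odd corrections absorbing the resulting $x_1^2(\cdot)$ debris), so Mather's lemma identifies all the $5$-jets $g_b$. The same Mather-type argument---absent from your sketch---is what kills $x_2^7$ over $x_1^3\pm x_1x_2^4$ and $x_1x_2^6$ over $x_1^3+x_2^7$, and, crucially, what \emph{fails} for $x_2^9$ over $x_1^3\pm x_1x_2^6$, which is where the first modulus of this branch actually appears. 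Your proposed substitute, a dimension count against the $GL(2)$-stabilizer of $x_1^3$, does not detect this, because higher-order jets of the odd diffeomorphism also act on the relevant jet level; the non-membership $x_2^9\notin L\mathcal R^{odd}(x_1^3\pm x_1x_2^6+ax_2^9)$ must be checked directly, as the paper does.
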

\begin{proof} The procedure is the systematic usage of the complete transversal method (\cite{BKdP}, \cite{K}) at the level of jets and then usage of the finite determinacy theorem. In our context, the complete transversal  is a subspace $T$ of $\mathcal M_m^{2k+1(odd)}$ such that
\begin{equation}
\mathcal M_2^{2k+1(odd)}\subset L\mathcal R^{odd}_1\cdot g + T + \mathcal M_2^{2k+3(odd)},
\end{equation}
where $\mathcal R_1^{odd}$ is the subgroup of $\mathcal R^{odd}$ whose elements have $1$-jet equal to identity, and $L\mathcal R^{odd}_1\cdot g$ is the tangent space to the $\mathcal R_1^{odd}$-orbit of $g$ at $g$.

We start with the 3-jet of $g$, which is also the starting point of the classification without
symmetry. Since linear changes of coordinates are ${\mathbb Z}_2$-equivariant, it follows that,
at this level, the results here are precisely the same as in the context without symmetry. Therefore,
as it is well known,
a nonzero cubic polynomial in two variables is linearly equivalent to one of the following types:
\begin{eqnarray}\label{j3D4}
x_1^2x_2 \pm x_2^3 \\ \label{j3Dk}
x_1^2x_2 \\ \label{j3E8}
x_1^3
\end{eqnarray}

First, assuming that  $j^3_0 g$ is of form (\ref{j3D4}), the $\mathcal R_1^{odd}$ tangent space of the orbit of (\ref{j3D4}) is $\mathcal M_2^{5(odd)}$. The complete transversal is empty in this case and $g$ is finitely $\mathcal R^{odd}$-determined and $\mathcal R^{odd}$-equivalent to (\ref{j3D4}).

Now, assume that $j^3_0 g$ has form (\ref{j3Dk}), whose orbit has
\[ \mathcal E^{even}_2 \cdot \{x_1^5,x_1^4x_2,x_1^3x_2^2,x_1^2x_2^3, x_1x_2^4\} \]
as its $\mathcal R_1^{odd}$ tangent space. So the complete transversal is $T=\mathbb R\{x_2^5\}$.
Hence, $j^5_0 g$ is $\mathcal R_1^{odd}$- equivalent to $x_1^2x_2+ax_2^5$ and it is easy to see that  if \  $a>0$ then $j^5_0 f$ is $\mathcal R^{odd}$-equivalent to $x_1^2x_2+x_2^5$, and if  \ $a<0$ then
$j^5_0 f$ is $\mathcal R^{odd}$-equivalent to $x_1^2x_2-x_2^5$. In the next step we check that the $\mathcal R_1^{odd}$ tangent space to the orbit of both of these germs is $\mathcal M_2^{5(odd)}$. So the complete transversal is empty and $g$ is finitely $\mathcal R^{odd}$-determined and $\mathcal R^{odd}$-equivalent to
$x_1^2x_2\pm x_2^5$.
If $a=0$, then  $T = \mathbb R\{x_2^7\}$ and $j^7_0 g$ is $\mathcal R_1^{odd}$-equivalent to $x_1^2x_2+b x_2^7$. Proceeding inductively, we obtain that if $j^3g_0$ has the form (\ref{j3Dk}) and $g$ is finitely $\mathcal R^{odd}$-determined then $g$ is $\mathcal R^{odd}$-equivalent to $x_1^2x_2\pm x_2^{2k+1}$ for $k\ge 2$.

Finally, assume that $j^3_0 g$ has the form (\ref{j3E8}). In this case,  $T=\mathbb R\{x_1x_2^4,x_2^5\}$ and $j^5_0 g$ is $\mathcal R_1^{odd}$-equivalent to $x_1^3+a x_2^5+b x_1x_2^4$.

If $a\ne 0$, then $j^5_0g$ is $\mathcal R^{odd}$-equivalent to $x_1^3+x_2^5+b x_1x_2^4$ and
\[\mathcal E_2^{even} \cdot  \{x_2^5,x_1x_2^4,x_1^2x_2,x_1^3\} \]
is its tangent space. So its dimension does not depend on $b$ and it contains the germ of  $x_1 x_2^4$. It then
follows from Mather's lemma that $j^5_0 g$ is $\mathcal R^{odd}$-equivalent to $x_1^3+x_2^5$.
As  next step  we obtain that $g$ is  finitely $\mathcal R^{odd}$-determined. Then, $g$ is $\mathcal R^{odd}$-equivalent to $x_1^3+x_2^5$.

If $a=0$ and $b\ne 0$, then $j^5_0g=x_1^3\pm x_1 x_2^4$ and  $T=\mathbb R\{x_2^7\}$. Then $j^7_0 g$ is $\mathcal R_1^{odd}$-equivalent to $x_1^3\pm x_1 x_2^4+a x_2^7$. But $L\mathcal R^{odd}g$ is given by
\[ \mathcal E_2^{even} \cdot  \{x_2^7,x_1x_2^4,3x_1^2x_2\pm x_2^5,x_1^3\}.\]
Its dimension independs on $a$ and it contains $x_2^7$. By Mather's lemma, $j^7_0 g$ is $\mathcal R^{odd}$-equivalent to $x_1^3\pm x_1x_2^4$. As in the previous case, we  find  that $g$ is  finitely $\mathcal R^{odd}$-determined, so is $\mathcal R^{odd}$-equivalent to $x_1^3\pm x_1x_2^4$.

If $a=b=0$, then $j^5_0 g=x_1^3$. Thus, the complete transversal is $T=\mathbb R\{x_1x_2^6, x_2^7\}$. It means that $j^7_0 g$ is $\mathcal R^{odd}$-equivalent to $x_1^3+cx_1x_2^6+dx_2^7$. If $d\ne 0$ we may assume that $j^7_0 g=x_1^3+cx_1x_2^6+x_2^7$. But $L\mathcal R^{odd}g$ is
\[ \mathcal E_2^{even} \cdot  \{x_2^7,x_1x_2^6,x_1^2x_2,x_1^3\}. \]
 Its dimension independs on $c$ and it contains $x_1x_2^6$. By Mather's lemma,  $j^7_0 g$ is $\mathcal R^{odd}$-equivalent to $x_1^3+x_2^7$. As next step  we obtain that $g$ is  finitely $\mathcal R^{odd}$-determined, so is $\mathcal R^{odd}$-equivalent to $x_1^3+x_2^7$.

If $d=0$ and $c\ne 0$, we may assume $j^7_0g=x_1^3\pm x_1x_2^6$. The complete transversal is $T=\mathcal R\{x_2^9\}$. So $j^9_0g$ is $\mathcal R^{odd}_1$-equivalent to $x_1^3\pm x_1x_2^6+ax_2^9$.
But $x_2^9\notin L\mathcal R^{odd}j^9_0 g$. By Mather's lemma, $c$  is a modulus.
\end{proof}

From  Theorem \ref{inf-versal} and  the
proof of Theorem~\ref{odd-class-2}, we obtain:

\begin{cor}\label{D_k-versal}\label{E_8-versal}\label{J_10-versal}\label{E_12-versal}
The $\mathcal R^{odd}$-miniversal deformation of the odd-simple map-germs are given by:
$$D_{2k/2}^{\pm} \ : \ F(x_1,x_2,\lambda_1,\cdots,\lambda_k)\equiv \ x_1^{2}x_2\pm
x_2^{2k-1}+\lambda_1x_1+\sum_{i=2}^{k}\lambda_{i}x_2^{2i-3}.
$$
$$E_{8/2} \ : \ F(x_1,x_2,\lambda_1,\cdots,\lambda_4)\equiv \ x_1^{3}+
x_2^{5}+\lambda_1x_1+\lambda_2x_2+\lambda_3x_1 x_2^2+\lambda_4
x_2^3.
$$
$$J_{10/2}^{\pm} \  : \ F(x_1,x_2,\lambda_1,\cdots,\lambda_5)\equiv $$
$$x_1^{3}\pm
x_1x_2^{4}+\lambda_1x_1+\lambda_2x_2+\lambda_3x_1^2 x_2+\lambda_4
x_2^2x_1+\lambda_5 x_2^3.
$$
$$E_{12/2} \  : \ F(x_1,x_2,\lambda_1,\cdots,\lambda_6)\equiv $$
$$ x_1^{3}+
x_2^{7}+\lambda_1x_1+\lambda_2x_2+\lambda_3x_1 x_2^2+\lambda_4
x_2^3+\lambda_5x_1x_2^5+\lambda_6x_2^6.
$$
\end{cor}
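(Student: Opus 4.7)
The plan is to apply Theorem~\ref{inf-versal}(b) to each simple odd germ $g$ listed in Theorem~\ref{odd-class-2}. For each normal form, the task is to exhibit a finite-dimensional $\mathbb{R}$-subspace $W \subset \mathcal{E}_2^{odd}$ with $\mathcal{E}_2^{odd} = L\mathcal{R}^{odd} g \oplus W$ and to select a monomial basis $h_1,\dots,h_s$ of $W$; Theorem~\ref{inf-versal}(b) then automatically promotes $g + \sum_j \lambda_j h_j$ to an $\mathcal{R}^{odd}$-miniversal deformation, and it will remain only to verify that the monomials multiplying the $\lambda_\ell$ in the statement constitute exactly such a basis in each of the four families.

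The first concrete step is to write down the tangent space $L\mathcal{R}^{odd} g$, which is the $\mathcal{E}_2^{even}$-submodule of $\mathcal{E}_2^{odd}$ generated by the four elements $\{x_j\,\partial g/\partial x_i : i,j=1,2\}$. For instance, for $D_{2k/2}^{\pm}$ with $g=x_1^2 x_2 \pm x_2^{2k-1}$ these generators read
\[ x_1^2 x_2,\quad x_1 x_2^2,\quad x_1^3 \pm (2k-1) x_1 x_2^{2k-2},\quad x_1^2 x_2 \pm (2k-1) x_2^{2k-1}, \]
and entirely analogous explicit expressions appear in the other three cases. Since finite $\mathcal{R}^{odd}$-determinacy has already been established for each of these germs inside the proof of Theorem~\ref{odd-class-2}, some submodule $\mathcal{M}_2^{N(odd)}$ is contained in $L\mathcal{R}^{odd} g$, so the complement computation reduces to elementary linear algebra inside the finite-dimensional quotient $\mathcal{E}_2^{odd}/\mathcal{M}_2^{N(odd)}$.

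The second step is then to enumerate, in each of the four cases $D_{2k/2}^{\pm}$, $E_{8/2}$, $J_{10/2}^{\pm}$, $E_{12/2}$, the odd monomials up to degree $N$ and to identify those which are \emph{not} expressible as $\mathcal{E}_2^{even}$-combinations of the generators above. One verifies that exactly the monomials multiplying the parameters $\lambda_\ell$ in the statement survive modulo $L\mathcal{R}^{odd} g$, and in particular the number of survivors reproduces the odd codimensions $k$, $4$, $5$ and $6$ announced for the four families.

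The main obstacle is the case-by-case bookkeeping in the two ``exceptional'' families $J_{10/2}^{\pm}$ and $E_{12/2}$, where identifying the $\mathcal{E}_2^{even}$-module spanned by the partials requires a patient comparison against the full module of odd jets up to the determinacy degree. However, no new idea is needed: the relations that carve out $L\mathcal{R}^{odd} g$ in these degrees are precisely those already used in the complete-transversal arguments of the proof of Theorem~\ref{odd-class-2}, and one can essentially read that proof backwards to extract the desired monomial basis of $W$ in each case.
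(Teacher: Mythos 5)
Your proposal is correct and follows exactly the route the paper takes: the paper derives this corollary directly from Theorem~\ref{inf-versal}(b) together with the tangent-space computations already carried out in the proof of Theorem~\ref{odd-class-2}, which is precisely your plan of exhibiting a monomial basis for a complement $W$ of $L\mathcal{R}^{odd}g$ in $\mathcal{E}_2^{odd}$. Your explicit generators for $L\mathcal{R}^{odd}g$ and the resulting counts of surviving monomials (giving odd codimensions $k$, $4$, $5$, $6$) check out, so the verification is sound.
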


\begin{rem} \label{remark corank 2}
The notations for the odd-simple singularities presented above have been chosen by their resemblance with the classical notations \cite{AGV} for normal forms of $\mathcal R$-singularities. In fact, $A_{2k/2}$ has the same representative as $A_{2k}$, but while the latter has codimension $2k$, the former has odd codimension $k=2k/2$.  Similarly, for $D_{2k/2}$ and $E_{8/2}$, with odd codimensions $k$ and $4$, respectively, for which the corresponding $\mathcal R$-singularities $D_{2k}$ and $E_8$ have codimensions $2k$ and $8$, respectively.
The situation differs for the other odd-simple singularities. The germ of the odd codimension $6$ singularity $E_{12/2}$ is $\mathcal R$-equivalent to the codimension 12 singularity $E_{12}$, but
we stress that the latter is unimodal. Similarly for the odd codimension $5$ singularity $J^{\pm}_{10/2}$ in
comparison with codimension $10$ unimodal $\mathcal R$-singularity $J_{10}$.
\end{rem}

\section{Simple stable singularities of Wigner caustic on shell}

From classical results (\cite{AGV}) we know that Lagrangian equivalence of Lagrangian maps corresponds to stable fibred $\mathcal R^+$-equivalence of their generating families (see Remark \ref{Lag-stable}). Thus we introduce the following definition in the $\mathbb Z_2$-symmetric case.
\begin{defn}
Let $L$ and $\tilde L$ be germs at $(0,0)\in \mathbb R^{2m}$ of
Lagrangian submanifolds of the affine symplectic space. The germs at $(0,0)$ of
Wigner caustics on shell ${\bf E}_{{1}/{2}}(L)$ and ${\bf E}_{{1}/{2}}(\tilde L)$
are {\bf Lagrangian equivalent} if germs at $(0,0,0)\in  \mathbb
R^m\times \mathbb R^{2m}$ of the corresponding odd generating families
$F$ and $\tilde F$ are fibred $\mathcal R^{odd}$-equivalent.
\end{defn}

From Remark \ref{hidden}, this means equivalence of $\mathbb Z_2$-symmetric germs of Wigner caustics. The following definition specializes to this $\mathbb Z_2$-symmetric context  the well-known  fact (\cite{AGV}) that stability of Lagrangian maps corresponds to versality of generating families (Remark \ref{Lag-stable}).

\begin{defn} A germ of Wigner caustic on shell is {\bf stable} if its generating family is an $\mathcal R^{odd}$-versal deformation of an odd  function-germ, and it is  {\bf simple stable} if its generating family is an $\mathcal R^{odd}$-versal deformation of an $\mathcal R^{odd}$-odd simple function-germ.
\end{defn}

Notice that any odd function-germ $f\in \mathcal M_m^{3(odd)}$ can be written as $f(\beta)\equiv \frac{1}{2}\left(S(\beta)-S(-\beta)\right)$ for some  $S\in \mathcal M_m^3$, implying the following:
\begin{prop}
For any $f\in \mathcal M_m^{3(odd)}$ there exists $S\in \mathcal M_m^3$ such that the generating family $F$ of the form (\ref{gf}) is an odd deformation of $f$.
\end{prop}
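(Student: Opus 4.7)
The plan is to exhibit $S$ explicitly; the obvious candidate is to take $S := f$ itself. The key observation is that specializing (\ref{gf}) at $(p,q)=(0,0)$ yields
$$F(\beta,0,0) \;=\; \tfrac{1}{2}\bigl(S(\beta)-S(-\beta)\bigr),$$
which is precisely the antisymmetrization of $S$ in the variable $\beta$. Hence the assignment $S\mapsto F(\cdot,0,0)$ is just the projection of $\mathcal{E}_m$ onto its odd part, and every odd function-germ is a fixed point of this projection.

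The first step is to verify that this choice lies in $\mathcal{M}_m^3$. By the notation set before Proposition~3.10, $\mathcal{M}_m^{3(odd)}$ is the $\mathcal{E}_m^{even}$-submodule of $\mathcal{E}_m^{odd}$ generated by monomials of total degree $3$ in $x_1,\dots,x_m$; any such combination is a fortiori in the cube of the maximal ideal of $\mathcal{E}_m$. Thus $\mathcal{M}_m^{3(odd)}\subset\mathcal{M}_m^3$, so $S=f\in\mathcal{M}_m^3$, which is the regularity needed for $S$ to serve as a generator in (\ref{usual}).

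The second step is to check the two conditions of Definition~\ref{odddeform}. Using the oddness of $f$, compute
$$F(\beta,0,0) \;=\; \tfrac{1}{2}\bigl(f(\beta)-f(-\beta)\bigr) \;=\; f(\beta),$$
so $F|_{\mathbb{R}^m\times\{0\}}=f$. That each slice $F(\cdot,p,q)$ belongs to $\mathcal{E}_m^{odd}$ is exactly the identity (\ref{oddF}), already established in the excerpt for every $F$ of the form (\ref{gf}).

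There is no substantive obstacle: the content of the proposition reduces to the surjectivity of antisymmetrization from $\mathcal{M}_m^3$ onto $\mathcal{M}_m^{3(odd)}$, combined with the trivial inclusion $\mathcal{M}_m^{3(odd)}\subset\mathcal{M}_m^3$ that preserves the Lagrangian-generation hypothesis. The only thing to watch is to read off $\mathcal{M}_m^{3(odd)}\subset\mathcal{M}_m^3$ correctly from the definitions, since those sets sit in slightly different ambient rings ($\mathcal{E}_m^{odd}$ as a module over $\mathcal{E}_m^{even}$, versus $\mathcal{M}_m^3\subset\mathcal{E}_m$).
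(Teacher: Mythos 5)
Your proof is correct and matches the paper's (implicit) argument: the proposition is stated there as an immediate consequence of the observation that any $f\in\mathcal M_m^{3(odd)}$ can be written as $\frac{1}{2}(S(\beta)-S(-\beta))$ for some $S\in\mathcal M_m^3$, and your choice $S=f$ is exactly the canonical witness, with the oddness of the slices already guaranteed by (\ref{oddF}). Your extra care in checking the inclusion $\mathcal M_m^{3(odd)}\subset\mathcal M_m^3$ is a sensible addition that the paper leaves tacit.
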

By Theorem \ref{inf-versal} we obtain the following corollary.
\begin{cor}\label{versalF}
The germ of a generating family $F$ of the form (\ref{gf}) is an $\mathcal R^{odd}$-versal deformation if and only if
\begin{eqnarray}
\quad\quad \mathcal M_m^{3(odd)}&=&\mathcal E_m^{even}\left\{\beta_i\left(\frac{\partial S}{\partial q_j}(\beta)+\frac{\partial S}{\partial q_j}(-\beta)\right): \ i,j=1,\cdots m\right\}+ \label{versalF1} \\
&&\mathbb R\left\{\frac{\partial S}{\partial q_j}(\beta)-\frac{\partial S}{\partial q_j}(-\beta): \ j=1,\cdots m\right\}. \nonumber
\end{eqnarray}
\end{cor}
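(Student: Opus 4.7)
The plan is to apply Theorem~\ref{inf-versal}(a) to the function-germ $f(\beta)=F(\beta,0,0)\in \mathcal M_m^{3(odd)}$, regarding $F$ as a $2m$-parameter deformation of $f$ with deformation parameters $\lambda=(p,q)$. First I would compute the three ingredients that enter the versality criterion of Theorem~\ref{inf-versal}(a): the module generators $\beta_i\,\partial f/\partial\beta_j$, and the partial derivatives $\partial F/\partial p_j\big|_{(p,q)=0}$ and $\partial F/\partial q_j\big|_{(p,q)=0}$. Differentiating the explicit formula for $f$ in (\ref{f}) yields
\[
\beta_i\,\frac{\partial f}{\partial\beta_j}(\beta)=\tfrac{1}{2}\beta_i\!\left(\frac{\partial S}{\partial q_j}(\beta)+\frac{\partial S}{\partial q_j}(-\beta)\right),
\]
while differentiating (\ref{gf}) and setting $p=q=0$ gives $\partial F/\partial p_j|_0=-\beta_j$ and $\partial F/\partial q_j|_0=\tfrac{1}{2}\!\left(\partial S/\partial q_j(\beta)-\partial S/\partial q_j(-\beta)\right)$.

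Next I would record two filtration observations. Since $S\in\mathcal M_m^3$, the $\mathcal E_m^{even}$-module generated by the $\beta_i\,\partial f/\partial\beta_j$ lies in $\mathcal M_m^{3(odd)}$. Moreover, the odd combination $\partial S/\partial q_j(\beta)-\partial S/\partial q_j(-\beta)$ also lies in $\mathcal M_m^{3(odd)}$: a homogeneous term $S_k$ contributes $(1-(-1)^{k-1})\partial S_k/\partial q_j$, which vanishes for $k=3$ and for all odd $k$, and starts at total degree $3$ for the first non-vanishing contribution ($k=4$). Thus every generator appearing on the right-hand side of the versality condition lives in $\mathcal M_m^{3(odd)}$, except for the linear odd functions $-\beta_j$ coming from the $p$-partials.

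Then I would exploit the vector-space splitting $\mathcal E_m^{odd}=\mathbb R\{\beta_1,\ldots,\beta_m\}\oplus \mathcal M_m^{3(odd)}$. The elements $\partial F/\partial p_j|_0=-\beta_j$ span precisely the linear summand, while all other generators sit in $\mathcal M_m^{3(odd)}$. Consequently, the condition in Theorem~\ref{inf-versal}(a)
\[
\mathcal E_m^{odd}=\mathcal E_m^{even}\!\left\{\beta_i\tfrac{\partial f}{\partial\beta_j}\right\}+\mathbb R\!\left\{\tfrac{\partial F}{\partial p_j}\big|_0,\;\tfrac{\partial F}{\partial q_j}\big|_0\right\}
\]
holds if and only if its projection onto the summand $\mathcal M_m^{3(odd)}$ holds, which, after absorbing the harmless factors of $1/2$ into the $\mathcal E_m^{even}$-coefficients and the $\mathbb R$-span, is exactly identity (\ref{versalF1}).

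I do not expect a serious obstacle: the argument is a direct translation of Theorem~\ref{inf-versal}(a) for the particular deformation (\ref{gf}). The one point requiring care is the filtration check showing that both families of generators of degree $\ge 1$ land inside $\mathcal M_m^{3(odd)}$, so that the $-\beta_j$ generators can be cleanly separated off to account for the linear odd part of $\mathcal E_m^{odd}$; once that is done, the equivalence is immediate.
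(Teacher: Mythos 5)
Your proposal is correct and follows essentially the same route as the paper, which derives the corollary directly from Theorem~\ref{inf-versal}(a) applied to the deformation (\ref{gf}) of $f=\frac{1}{2}(S(\beta)-S(-\beta))$. Your explicit justification of the reduction from $\mathcal E_m^{odd}$ to $\mathcal M_m^{3(odd)}$ — using the splitting $\mathcal E_m^{odd}=\mathbb R\{\beta_1,\dots,\beta_m\}\oplus\mathcal M_m^{3(odd)}$ and the fact that the $p$-partials $-\beta_j$ exactly account for the linear summand while all other generators lie in $\mathcal M_m^{3(odd)}$ — is precisely the bookkeeping the paper leaves implicit.
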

From Corollary \ref{versalF} we get the following realization theorem.
\begin{thm}\label{realization}
Let $f\in \mathcal M_m^{3(odd)}$ be a finitely determined germ.
Then there exists $S\in \mathcal M_m^3$ such that the generating family $F$ of the form (\ref{gf}) is an $\mathcal R^{odd}$-versal deformation of $f$ if and only if there exist $h_1$,$\cdots$,$h_m$ in $\mathcal M_m^{3(odd)}$ such that
\begin{equation}\label{verf}
\mathcal M_m^{3(odd)}=L \mathcal R^{odd} f + \mathbb R \{h_1,\cdots,h_m\}
\end{equation}
and $\sum_{i=1}^m h_i(\beta_1,\cdots,\beta_m)d\beta_i$ is a germ of closed $1$-form.
\end{thm}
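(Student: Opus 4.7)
The plan is to translate the statement through Corollary \ref{versalF} and then exchange the role of $S$ with its antisymmetrized derivatives via the Poincar\'e lemma. For the ``only if'' direction, given $S\in\mathcal M_m^3$ making $F$ versal, set
$$h_j(\beta) := \frac{\partial S}{\partial q_j}(\beta) - \frac{\partial S}{\partial q_j}(-\beta), \quad j=1,\ldots,m.$$
Each $h_j$ lies in $\mathcal M_m^{3(odd)}$ because $\partial_{q_j} S$ has order at least $2$ and its antisymmetrization kills all even-degree terms, leaving terms of odd degree $\geq 3$. The versality of $F$ is then exactly condition (\ref{verf}) via (\ref{versalF1}). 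Closedness of $\sum_j h_j\, d\beta_j$ follows from a direct computation
$$\frac{\partial h_j}{\partial \beta_i}(\beta) = \frac{\partial^2 S}{\partial q_i \partial q_j}(\beta) + \frac{\partial^2 S}{\partial q_i \partial q_j}(-\beta),$$
where the extra sign from the chain rule for $\beta\mapsto -\beta$ combines with the minus sign in the definition of $h_j$; the right-hand side is symmetric in $(i,j)$ by equality of mixed partials of $S$.

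For the ``if'' direction, suppose $h_1,\ldots,h_m\in\mathcal M_m^{3(odd)}$ satisfy (\ref{verf}) and $\sum h_i\,d\beta_i$ is closed. The Poincar\'e lemma, applied in a contractible neighborhood of $0\in\mathbb R^m$, produces $H\in\mathcal E_m$ with $H(0)=0$ and $\partial_i H = h_i$. Since each $h_i$ is odd, $H$ is automatically even, and since each $h_i$ has order $\geq 3$, one has $H\in\mathcal M_m^4$. Define $S := H/2 \in \mathcal M_m^4\subset\mathcal M_m^3$. Then, using $h_j(-\beta)=-h_j(\beta)$,
$$\frac{\partial S}{\partial q_j}(\beta)-\frac{\partial S}{\partial q_j}(-\beta) = \tfrac{1}{2}h_j(\beta)-\tfrac{1}{2}h_j(-\beta) = h_j(\beta),$$
so (\ref{verf}) becomes exactly the identity in Corollary \ref{versalF}, and the generating family $F$ constructed from this $S$ via (\ref{gf}) is $\mathcal R^{odd}$-versal.

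The proof is really an interplay of two elementary facts: mixed partials of $S$ commute (giving closedness in the forward direction) and closed $1$-form germs are exact (Poincar\'e lemma, giving $S$ in the reverse). The only point requiring brief verification is that the primitive $H$ has the right parity and order, both of which follow automatically from the parity and order of the $h_j$, so there is no real obstacle. The finite-determinacy hypothesis on $f$ does not enter either implication of this specific correspondence; it merely ensures that the codimension in $\mathcal M_m^{3(odd)}$ is finite so that the statement is non-vacuous.
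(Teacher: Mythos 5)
Your ``only if'' direction is correct and is essentially the paper's own argument: writing $S=S^++S^-$ with $S^{\pm}(\beta)=\frac12(S(\beta)\pm S(-\beta))$, one has $f=S^-$ by (\ref{f}), the first summand of the versality criterion (\ref{versalF1}) becomes $L\mathcal R^{odd}f$, the second becomes $\mathbb R\{\partial_{\beta_j}S^+\}$, and the $1$-form $\sum_j\partial_{\beta_j}S^+\,d\beta_j=dS^+$ is exact, hence closed. Your mixed-partials computation is a correct unwinding of this, and your checks that $h_j\in\mathcal M_m^{3(odd)}$ and that the primitive $H$ is automatically even of order $\geq 4$ are fine, as is the remark that finite determinacy is not actually used.

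The ``if'' direction, however, contains a genuine error: you set $S:=H/2$, which is an \emph{even} germ. Then $F(\beta,0,0)=\frac12(S(\beta)-S(-\beta))\equiv 0$, so the generating family (\ref{gf}) built from this $S$ is a deformation of the zero germ, not of $f$. Concretely, since the partial derivatives of an even germ are odd, the first summand $\mathcal E_m^{even}\bigl\{\beta_i\bigl(\frac{\partial S}{\partial q_j}(\beta)+\frac{\partial S}{\partial q_j}(-\beta)\bigr)\bigr\}$ of (\ref{versalF1}) vanishes identically for your $S$, so the versality condition you would need is $\mathcal M_m^{3(odd)}=\mathbb R\{h_1,\dots,h_m\}$, which is false: the summand $L\mathcal R^{odd}f$ in (\ref{verf}) never materializes. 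The repair is immediate and is exactly what the paper does: take $S=f+g$ with $g$ the even primitive of $\alpha=\sum_i h_i\,d\beta_i$ (your $H$, up to a harmless factor of $2$). Then $S^-=f$, so $F$ deforms $f$, the first summand of (\ref{versalF1}) equals $L\mathcal R^{odd}f$, and the second spans $\mathbb R\{h_1,\dots,h_m\}$, recovering (\ref{verf}).
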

\begin{proof} First, notice that any fuction-germ $S\in \mathcal E_m$ can be decomposed into $S=S^++S^-$, where $S^+\in \mathcal E_m^{even}$, $S^-\in \mathcal E_m^{odd}$ are given in the following way $S^+(\beta)\equiv \frac{1}{2}(S(\beta)+S(-\beta))$, $S^-(\beta)\equiv \frac{1}{2}(S(\beta)-S(-\beta))$.
Then the versality condition (\ref{versalF1}) of $F$ given by (\ref{gf}) has the  form
$$
\mathcal M_m^{3(odd)}=L\mathcal R^{odd} S^-+
\mathbb R\left\{\frac{\partial S^+}{\partial \beta_j}(\beta): \ j=1,\cdots m\right\}.
$$
From the above, $f$ must be equal to $S^-$ and the germ of a $1$-form $\sum_{j=1}^m\frac{\partial S^+}{\partial \beta_j}(\beta)d\beta_j$ is closed since it is just $dS^+$. On the other hand if condition (\ref{verf}) is satisfied and $\alpha=\sum_{i=1}^m h_i(\beta_1,\cdots,\beta_m)d\beta_i$ is a germ of closed $1$-form then it is obvious that there exists such a function-germ $g \in \mathcal E_m^{even}$ such that $\alpha=dg$. So we take $S=f+g$.
\end{proof}

It follows from Theorem \ref{simpleless3} that simple singularities for the Wigner caustic on shell of a Lagrangian submanifold can be realized only for curves in $\mathbb R^2$ and surfaces in $\mathbb R^4$.
Thus, first we apply Theorem \ref{realization} to check which versal deformations of simple odd singularities are realizable as a generating family of the form (\ref{gf}).
\begin{cor}
$\mathcal R^{odd}$-versal  deformations of $A_{2/2}$, $A_{4/2}$ (for $m=1$) and $D_{4/2}^{\pm}$, $D_{6/2}^{\pm}$ $D_{8/2}^{\pm}$, $E_{8/2}$ (for $m=2$) are realizable as generating families of form (\ref{gf}).

 $\mathcal R^{odd}$-versal  deformations of $A_{2k/2}$ for $k>2$ (and for $m=1$) and $D_{2k/2}$ for $k>4$, $J_{10/2}^{\pm}$ and $E_{12/2}$ (for $m=2$) are not realizable as generating families of form (\ref{gf}).
\end{cor}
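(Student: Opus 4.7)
The plan is to apply Theorem~\ref{realization} case-by-case to each germ in the classification of Section~\ref{subsec:classification}. For $f \in \mathcal M_m^{3(odd)}$, the tangent space $L\mathcal R^{odd}f$ lies in $\mathcal M_m^{3(odd)}$, and since the linear monomials $\beta_1,\ldots,\beta_m$ provide a natural complement of $\mathcal M_m^{3(odd)}$ inside $\mathcal E_m^{odd}$, the odd codimension of $f$ equals $m + c(f)$, where $c(f) := \dim_{\mathbb R} \bigl(\mathcal M_m^{3(odd)} / L\mathcal R^{odd} f\bigr)$. By the Poincar\'e lemma for germs, the closedness of $\sum_i h_i\,d\beta_i$ is equivalent to $h_i = \partial g/\partial \beta_i$ for some even germ $g \in \mathcal E_m^{even}$ of order $\geq 4$. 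Thus, by Theorem~\ref{realization}, realizability of an $\mathcal R^{odd}$-versal deformation of $f$ as a generating family of the form~(\ref{gf}) is equivalent to the existence of such a $g$ with $\{[\partial g/\partial \beta_i]\}_{i=1}^m$ spanning $\mathcal M_m^{3(odd)}/L\mathcal R^{odd} f$; in that case $S = f + g$ realizes it.

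The non-realizable direction is a purely dimensional obstruction. Reading off odd codimensions from Corollaries~\ref{A-k-def} and~\ref{E_12-versal} yields $c(A_{2k/2}) = k-1$, $c(D_{2k/2}^{\pm}) = k-2$, $c(J_{10/2}^{\pm}) = 3$, $c(E_{12/2}) = 4$. Whenever $c(f) > m$, no family of $m$ germs can span the complement regardless of closedness, so realizability fails. This covers $A_{2k/2}$ for $k \geq 3$ (with $m=1$), and $D_{2k/2}^{\pm}$ for $k \geq 5$, $J_{10/2}^{\pm}$, $E_{12/2}$ (with $m=2$).

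For the realizable cases one exhibits the required $g$ explicitly. For $A_{2/2}$ and $D_{4/2}^{\pm}$, a direct tangent-space computation shows $L\mathcal R^{odd} f = \mathcal M_m^{3(odd)}$, so $c(f)=0$ and $g=0$ suffices. For $A_{4/2}$ and $D_{6/2}^{\pm}$, where $c(f)=1$ and the complement is spanned respectively by $\beta^3$ and $x_2^3$, one takes $g = \beta^4/4$ and $g = x_2^4/4$. For $E_{8/2}$, the complement is $\mathbb R\{x_1 x_2^2, x_2^3\}$, and $g = x_1 x_2^3$ gives $h_1 = x_2^3$, $h_2 = 3 x_1 x_2^2$, whose classes form the basis. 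The delicate case is $D_{8/2}^{\pm}$ ($f = x_1^2 x_2 \pm x_2^7$), where the complement $\mathbb R\{x_2^3, x_2^5\}$ depends only on $x_2$, yet the closedness constraint forces $(h_1,h_2)$ to come from a single potential: a purely $x_2$-dependent $g$ would yield $h_1 = 0$ and span only one class. Overcoming this requires a cross-term, for instance $g = x_1 x_2^3 + x_2^6/6$, giving $h_1 = x_2^3$ and $h_2 = 3 x_1 x_2^2 + x_2^5$; since $x_1 x_2^2 \in L\mathcal R^{odd} f$ at cubic level, the classes $[h_1]=[x_2^3]$ and $[h_2]=[x_2^5]$ furnish the required basis. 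I expect this reconciliation of the concentration of the complement in $x_2$ with the exactness constraint to be the main technical point; the remaining case verifications are routine.
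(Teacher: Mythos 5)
Your argument is correct and essentially the same as the paper's: the non-realizability half is the identical dimensional obstruction (odd codimension exceeding $2m$, i.e.\ $c(f)>m$ in your notation), and your explicit potentials $g$ yield exactly the germs $h_1,h_2$ that the paper exhibits directly, e.g.\ $h_1=\beta_2^3$, $h_2=3\beta_1\beta_2^2+\beta_2^5$ for $D_{8/2}^{\pm}$, with the same observation that $\beta_1\beta_2^2\in L\mathcal R^{odd}f$. The only cosmetic difference is that for $m=1$ the paper dispenses with explicit potentials by noting that every $1$-form on $\mathbb R$ is closed.
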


\begin{proof}
First notice that if the codimension of the singularity is greater than $2m$ then the $\mathcal R^{odd}$-versal deformation of it is not realizable by a generating family of the form (\ref{gf}). This proves the second statement.
Since any smooth $1$-form on $\mathbb R$ is closed this is the only restriction for $m=1$.
The realization of $D_{4/2}^{\pm}$ is obvious. For the others singularities we apply Theorem \ref{realization} in the following way: for $D^{\pm}_{6/2}$ take $h_1(\beta)\equiv 0$ and $h_2(\beta)\equiv \beta_2^3$, for $D^{\pm}_{8/2}$ take $h_1(\beta)\equiv \beta_2^3$ and $h_2(\beta)\equiv \beta_2^5+3\beta_1\beta_2^2$,
and for $E_{8/2}$ take $h_1(\beta)\equiv \beta_2^3$ and $h_2(\beta)\equiv 3\beta_1\beta_2^2.$
\end{proof}

\subsection{The Wigner caustic on shell of a Lagrangian curve.}

Let $L$ be the germ at $(0,0)$ of a  curve on
symplectic affine plane $(\mathbb R^2, \omega=dp\wedge dq)$ and, without loss of generality,  assume that
$L$ is generated by a  function-germ $S\in \mathcal M_1^3 \subset \mathcal E_1$ in the usual way given by  (\ref{usual}), $i=1$.
\begin{thm}\label{A2-A4} Let $F$ of form (\ref{gf}) be the generating family of
$\mathcal L$.

If $\frac{d^3 S}{d q^3}(0)\ne 0$, $F$ is fibred $\mathcal R^{odd}$-equivalent to the $\mathcal
R^{odd}$-versal deformation of $A_{2/2}$ : $(\beta,p,q)\mapsto
\beta^3+p\beta$.

If $\frac{d^3 S}{d q^3}(0)=0$, $\frac{d^4 S}{d q^4}(0)\ne 0$ and
$\frac{d^5 S}{d q^5}(0)\ne 0$, $F$ is $\mathcal R^{odd}$-equivalent to the $\mathcal
R^{odd}$-versal deformation of $A_{4/2}$ :
$(\beta,p,q)\mapsto \beta^5+q\beta^3+p\beta$.
\end{thm}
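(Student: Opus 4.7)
The plan is to identify $f(\beta):=F(\beta,0,0)=\tfrac{1}{2}(S(\beta)-S(-\beta))$ with the claimed odd normal form, verify that $F$ itself is an $\mathcal{R}^{odd}$-versal odd deformation of $f$ by the infinitesimal criterion of Theorem~\ref{inf-versal} (equivalently Corollary~\ref{versalF}), and then invoke the uniqueness of $\mathcal{R}^{odd}$-versal deformations in the fibred category---valid because $\mathcal{D}_m^{odd}$ is a geometric subgroup in Damon's sense---to reduce $F$ to the stated canonical form after an appropriate diffeomorphism of the base $(p,q)$. Expanding $S(q)=c_3q^3+c_4q^4+c_5q^5+\cdots$, one has $f(\beta)=c_3\beta^3+c_5\beta^5+\cdots$ and
\[
\partial_pF(\beta,0,0)=-\beta,\qquad \partial_qF(\beta,0,0)=\tfrac{1}{2}\bigl(S'(\beta)-S'(-\beta)\bigr)=4c_4\beta^3+O(\beta^5),
\]
so all subsequent calculations reduce to elementary Taylor bookkeeping.

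In the first case $c_3\ne 0$, the unit $c_3+c_5\beta^2+\cdots\in\mathcal{E}_1^{even}$ is absorbed by an odd cube-root change of variable, yielding $f\sim_{\mathcal{R}^{odd}}\beta^3$, the $A_{2/2}$ germ. Then $L\mathcal{R}^{odd}f=\mathcal{E}_1^{even}\{\beta^3\}$ has one-dimensional complement $\mathbb{R}\{\beta\}$ in $\mathcal{E}_1^{odd}$, and $\partial_pF|_0=-\beta$ already generates it, so $F$ is $\mathcal{R}^{odd}$-versal. Uniqueness presents $F$ as $\beta^3+\mu(p,q)\beta$ up to fibred $\mathcal{R}^{odd}$-equivalence, and since $\partial_p\mu(0,0)=-1\ne 0$, the map $(p,q)\mapsto(-\mu(p,q),q)$ is a base diffeomorphism that transports $F$ to $\beta^3+p\beta$.

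In the second case $c_3=0$ but $c_4\ne 0$ and $c_5\ne 0$, so $f(\beta)=c_5\beta^5+\cdots$ is $\mathcal{R}^{odd}$-equivalent to $\beta^5$ by an odd fifth-root rescaling---the $A_{4/2}$ germ---while $L\mathcal{R}^{odd}f=\mathcal{E}_1^{even}\{\beta^5\}$ has two-dimensional complement $\mathbb{R}\{\beta,\beta^3\}$. The technically crucial step is verifying that $\partial_pF|_0$ and $\partial_qF|_0$ together span that complement modulo $L\mathcal{R}^{odd}f$: the former contributes the $\beta$-direction, while the latter contributes $4c_4\beta^3+O(\beta^5)$, so this is precisely where the auxiliary hypothesis $S^{(4)}(0)\ne 0$ enters, separately from the singularity-type condition $S^{(5)}(0)\ne 0$; note that the even coefficient $c_4$ of $S$ does not appear in $f$ at all but surfaces only through the $q$-derivative of $F$. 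Granting $c_4\ne 0$, versality follows, and fibred uniqueness presents $F$ as $\beta^5+\mu_1(p,q)\beta+\mu_2(p,q)\beta^3$ with leading Jacobian $\mathrm{diag}(-1,4c_4)$ at the origin; invertibility of this Jacobian makes $(p,q)\mapsto\bigl(-\mu_1(p,q),\mu_2(p,q)/(4c_4)\bigr)$ a local diffeomorphism of the base bringing $F$ to $\beta^5+q\beta^3+p\beta$.
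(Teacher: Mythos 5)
Your proposal is correct and follows essentially the same route as the paper: identify $f(\beta)=\tfrac{1}{2}(S(\beta)-S(-\beta))$ as $A_{2/2}$ or $A_{4/2}$ from the stated derivative conditions, compute $\partial F/\partial p|_0=-\beta$ and $\partial F/\partial q|_0$ (whose $\beta^3$-coefficient is where $S^{(4)}(0)\ne 0$ enters), and apply the infinitesimal versality criterion of Theorem~\ref{inf-versal} together with Corollary~\ref{A-k-def}. Your explicit appeal to uniqueness of versal deformations and the base diffeomorphism is just a spelled-out version of the paper's final step.
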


\begin{proof}
From (\ref{gf}),
$\frac{\partial^k F}{\partial
\beta^k}(\beta,0,0)=\frac{1}{2}\left(\frac{d^k S}{d
q^k}(\beta)+(-1)^{k+1}\frac{d^k S}{d q^k}(-\beta)\right).$
Thus $\frac{d^3 S}{d q^3}(0)\ne 0$ implies that $\frac{d^3 F}{d
\beta^3}(0,0,0)\ne 0$ and $F|_{\mathbb R\times \{0\}\times \{0\}}
\in \mathcal M_1^3$, since $S\in \mathcal M_1^3$.
Therefore, $F$ is an odd deformation of $A_{2/2}$. By Theorems
\ref{inf-versal} and \ref{A-k-def} we obtain that $F$ is $\mathcal
R^{odd}$-equivalent to $\mathcal R^{odd}$-versal deformation of $A_{2/2}$:
$(\beta,p,q)\mapsto \beta^3+p\beta$, since $\frac{\partial
F}{\partial p}(\beta,0,0)=-\beta$.

If $S\in \mathcal M_1^4$ and $\frac{d^5 S}{d q^5}(0)\ne 0$ then
 $\frac{\partial^k F}{\partial \beta^k}(0,0,0)=0$
for $k<5$ and $\frac{\partial^5 F}{\partial \beta^5}(0,0,0)\ne 0$
and consequently $F$ is an odd deformation of $A_{4/2}$.
By direct calculation,
$\frac{\partial^{k+1} F}{\partial \beta^k
\partial q}(\beta,0,0)=\frac{1}{2}\left(\frac{d^{k+1} S}{d
q^{k+1}}(\beta)+(-1)^{k+1}\frac{d^{k+1} S}{d
q^{k+1}}(-\beta)\right).$

Then $\frac{\partial^{k+1} F}{\partial \beta^k
\partial q}(0,0,0)=0$ for $k<3$ and $\frac{\partial^{4} F}{\partial
\beta^{3} \partial q}(0,0,0)=\frac{d^4 S}{d q^4}(0)$. But
$\frac{\partial F}{\partial p}(\beta,0,0)=-\beta$. So if
$\frac{d^4 S}{d q^4}(0)\ne 0$  we obtain by  Theorem
\ref{inf-versal} and Corollary \ref{A-k-def} that $F$ is $\mathcal
R^{odd}$-equivalent to $\mathcal R^{odd}$-miniversal deformation of $A_{4/2}$:
$(\beta,p,q)\mapsto \beta^5+q\beta^3+p\beta$.
\end{proof}

\begin{cor}\label{GI} {\emph ({\bf Geometric interpretation})}
If the curvature of the germ of a Lagrangian curve $L$ does not
vanish at $(p_0,q_0)\in L$, then the germ at $(p_0,q_0)$ of the Wigner caustic
on shell consists of $L$ only and is Lagrangian stable. All germs of
Wigner caustics of Lagrangian curves at such points are Lagrangian
equivalent.

If, at  $(p_0,q_0)\in L$, the curvature of the germ of a Lagrangian curve
$L$ vanishes but the first and the second derivatives of the curvature
do not vanish, then the germ at $(p_0,q_0)$ of the Wigner caustic on shell consists
of two components: $L$ and the germ at $(p_0,q_0)$ of a
$1$-dimensional smooth submanifold with boundary $(p_0,q_0)$, which is
$1$-tangent to $L$ at $(p_0,q_0)$
and is simple
stable. Any germ of the Wigner caustic in such a point is Lagrangian
equivalent to the following germ at $0$:
$$
\left\{(p,q)\in \mathbb R^2: p=0 \right\}\cup \left\{(p,q)\in
\mathbb R^2: p=-\frac{27}{50}q^2, q\le 0 \right\}.
$$
The germs of the Wigner caustics of $L$ at points of $L$ which do
not satisfy the above conditions are not stable.
\end{cor}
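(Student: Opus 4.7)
The plan is to apply Theorem \ref{A2-A4} to reduce the generating family $F$ of (\ref{gf}) to one of two normal forms, compute the Wigner caustic on shell in each case, and dispose of the remaining points by a versality failure argument. First I would translate the analytic conditions on $S$ into geometric ones on $L$. Parametrizing $L$ by $q \mapsto (S'(q), q)$ and using $S \in \mathcal M_1^3$, the signed curvature of $L$ equals $\kappa(q) = -S'''(q)/(1+S''(q)^2)^{3/2}$; a direct differentiation together with $S''(0) = 0$ gives $\kappa(0) = -S'''(0)$ and $\kappa'(0) = -S^{(4)}(0)$, and when $S'''(0) = 0$ it further gives $\kappa''(0) = -S^{(5)}(0)$. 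Thus the two sets of hypotheses in the corollary match those of Theorem \ref{A2-A4}.

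For the first case, Theorem \ref{A2-A4} gives $F \sim_{\mathcal R^{odd}} \beta^3 + p\beta$. The caustic equations $F_\beta = 3\beta^2 + p = 0$ and $F_{\beta\beta} = 6\beta = 0$ force $\beta = p = 0$, so in the normal form the caustic is the smooth line $\{p = 0\}$. Since $L \subset \mathbf{E}_{1/2}(L)$ by the proposition immediately following Theorem \ref{genfam}, and both are $1$-dimensional, they coincide locally, so the Wigner caustic on shell is exactly $L$. Lagrangian stability follows from the $\mathcal R^{odd}$-versality of $F$, and uniqueness up to Lagrangian equivalence from uniqueness of the $\mathcal R^{odd}$-miniversal deformation of $A_{2/2}$.

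For the second case, Theorem \ref{A2-A4} gives $F \sim_{\mathcal R^{odd}} \beta^5 + q\beta^3 + p\beta$. I would then solve $F_{\beta\beta} = 2\beta(10\beta^2 + 3q) = 0$, yielding two branches: the branch $\beta = 0$, which together with $F_\beta = 0$ produces the line $\{p=0\}$ (the image of $L$); and the branch $10\beta^2 + 3q = 0$, which forces $q \leq 0$ and, after eliminating $\beta$ from $F_\beta = 0$, produces a half-parabola $p = c\, q^2$ for a nonzero constant $c$. A fibre-preserving linear rescaling of the base normalizes $c$ to the displayed value $-27/50$. Both branches pass through the origin with tangent $\partial/\partial q$ but have distinct second-order jets, so the half-parabola is $1$-tangent to $L$ at $(p_0,q_0)$. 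Simple stability is again the versality assertion of Theorem \ref{A2-A4}.

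Finally, for a point violating both sets of hypotheses, either $S^{(5)}(0) = 0$, in which case the odd germ $f(\beta) = \tfrac{1}{2}(S(\beta)-S(-\beta))$ lies in $\mathcal M_1^{7(odd)}$ and has odd codimension at least $3$, which cannot be $\mathcal R^{odd}$-versally deformed by the two parameters $p$ and $q$; or $S^{(4)}(0) = 0$ with $S'''(0)=0$ and $S^{(5)}(0) \neq 0$, in which case $f$ is of type $A_{4/2}$ but the computation in the proof of Theorem \ref{A2-A4} yields $\partial^4 F/\partial\beta^3 \partial q (0,0,0) = S^{(4)}(0) = 0$, so the $q$-parameter fails to supply the $\beta^3$ direction missing from $L\mathcal R^{odd} f$, breaking versality of $F$. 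In either subcase the germ is unstable. The main obstacle I expect will be the caustic computation in the $A_{4/2}$ case together with the sign and scale bookkeeping needed to reach the displayed representative $p = -27q^2/50$; the remaining steps are direct applications of Theorems \ref{A2-A4} and \ref{inf-versal}.
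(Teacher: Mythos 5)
Your proposal is correct and follows essentially the same route as the paper: the paper's proof consists precisely of the curvature computation $\kappa(p_0,q_0)=\pm S'''(q_0)$, $\kappa'(p_0,q_0)=\pm S^{(4)}(q_0)$, $\kappa''(p_0,q_0)=\pm S^{(5)}(q_0)$ (valid since $S''(q_0)=0$ and, in the degenerate case, $S'''(q_0)=0$), after which it declares the rest an "obvious" consequence of Theorem \ref{A2-A4}. Your write-up simply supplies the details the paper suppresses — the caustic computation for the two normal forms (the constant in $p=cq^2$ being immaterial up to the base diffeomorphism allowed by fibred $\mathcal R^{odd}$-equivalence) and the versality-failure argument at degenerate points — all of which are sound.
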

\begin{proof} This is an obvious corollary of Theorem \ref{A2-A4}, because the curvature of a curve $L$ described by (\ref{usual}), $i=1$, is given by
$
\kappa\left(\frac{dS}{dq}(q),q\right)={\frac{d^3S}{dq^3}(q)}\Big{/}{\left(1+\left(\frac{d^2S}{dq^2}(q)\right)^2\right)^{3/2}}
$
Thus  $\kappa(p_0,q_0)=\frac{d^3S}{dq^3}(q_0)$ since $\frac{d^2S}{dq^2}(q_0)=0$. If $\kappa(p_0,q_0)=0$ then
$\frac{d\kappa}{dq}(p_0,q_0)=\frac{d^4S}{dq^4}(q_0)$ and $\frac{d^2\kappa}{dq^2}(p_0,q_0)=\frac{d^5S}{dq^5}(q_0)$
\end{proof}
\begin{rem}
Although the curvature of a plane curve is not an affine invariant, the vanishing or not vanishing of the curvature is an affine invariant. Also, where the curvature is zero, the vanishing or not vanishing of its first two derivatives is also an affine invariant. Thus, Corollary \ref{GI} provides coordinate-free affine-symplectic invariant conditions for the realization of the singularities of the Wigner caustic on shell of a Lagrangian curve on the affine symplectic plane. Similar results for curves on a affine plane without a symplectic structure can be found in \cite{GWZ}, where bifurcations of affine equidistants  were studied.
\end{rem}

\subsection{The Wigner caustic on shell of a Lagrangian surface.}

Let $L$ be the germ at $0$ of a Lagrangian surface in
symplectic affine space $(\mathbb R^4, \omega=dp_1\wedge dq_1+dp_2\wedge dq_2)$ and, without loss of generality, assume that
$L$ is generated by a  function-germ $S\in \mathcal M_2^3 \subset \mathcal E_2$ by (\ref{usual}), $i=2$,
and that $F$ of form (\ref{gf}) is the generating family of $\mathcal L$.

\begin{notation}\label{partialS} To simplify the equations, we use the following:
$$
S_{i,j}=\frac{\partial^{i+j} S}{\partial q_1^i \partial q_2^j}(0,0) \ , \ S_{i,j}(q)=\frac{\partial^{i+j} S}{\partial q_1^i \partial q_2^j}(q_1,q_2).
$$
Then, the $3$-jet of $S$ at $0$ has the form
$$j^3_0S=\frac{1}{6}
S_{3,0}q_1^3+\frac{1}{2} S_{2,1}q_1^2q_2+\frac{1}{2}S_{1,2}q_1q_2^2+\frac{1}{6}S_{0,3}q_2^3$$
and the {\bf discriminant} of $j^3_0S$ has the following form $\Delta(j^3_0S)=$
$$
\frac{1}{48} \left(3S_{1,2}^2 S_{2,1}^2-4 S_{0,3}S_{2,1}^3
-4 S_{1,2}^3 S_{3,0} -S_{0,3}^2S_{3,0}^2+6 S_{0,3}S_{1,2}S_{2,1}S_{3,0}\right)
$$
\end{notation}

\begin{thm}\label{thmD4-}
If $\Delta(j^3_0S)>0$, $F$ is $\mathcal R^{odd}$-equivalent to the $\mathcal
R^{odd}$-versal deformation of  $D_{4/2}^-$ : $(\beta_1,\beta_2,p,q)\mapsto
\beta_1^2\beta_2-\beta_2^3+p_1\beta_1+p_2\beta_2$.

If $\Delta(j^3_0S)<0$, $F$ is $\mathcal R^{odd}$-equivalent to the $\mathcal
R^{odd}$-versal deformation of $D_{4/2}^+$ : $(\beta_1,\beta_2,p,q)\mapsto
\beta_1^2\beta_2+\beta_2^3+p_1\beta_1+p_2\beta_2$.
\end{thm}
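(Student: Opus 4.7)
The plan is to reduce the problem to the classification of real binary cubic forms, then apply the finite-determinacy and versality criteria already established in the paper. First, write $S = S^+ + S^-$ with $S^{\pm}$ the even and odd parts of $S$, so that $f(\beta) := F(\beta,0,0) = S^-(\beta)$. Because $S \in \mathcal{M}_2^3$, its cubic terms are odd and $j^3_0 f = j^3_0 S$, a binary cubic form whose discriminant is exactly $\Delta(j^3_0 S)$ as defined in Notation \ref{partialS}. The classical real classification of binary cubics says that a nonzero cubic with $\Delta > 0$ splits into three distinct real linear factors and is $GL(2,\mathbb{R})$-equivalent to $\beta_1^2\beta_2 - \beta_2^3$, while $\Delta < 0$ corresponds to one real and two complex conjugate linear factors, giving $GL(2,\mathbb{R})$-equivalence to $\beta_1^2\beta_2 + \beta_2^3$. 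The sign matching is easily verified on the normal forms themselves: applying the formula in Notation \ref{partialS} to $\beta_1^2\beta_2 \mp \beta_2^3$ yields $\pm 4$. Since $GL(2,\mathbb{R}) \subset \mathcal{D}_2^{odd}$, this gives the required $\mathcal{R}^{odd}$-equivalence of 3-jets.

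Next, I invoke the 3-determinacy of $D_{4/2}^{\pm}$ established in the proof of Theorem \ref{odd-class-2}: at $g = \beta_1^2\beta_2 \pm \beta_2^3$ the tangent space $L\mathcal{R}_1^{odd} g$ already fills $\mathcal{M}_2^{5(odd)}$, so the complete transversal is empty and $g$ is finitely $\mathcal{R}^{odd}$-determined. Hence $f$ is $\mathcal{R}^{odd}$-equivalent to $\beta_1^2\beta_2 \mp \beta_2^3$, and after applying the corresponding odd change of $\beta$-coordinates (fibred trivially in $(p,q)$) I may assume $f$ itself equals the stated normal form.

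For the versality step I apply Theorem \ref{inf-versal}(a). A direct computation with $f = \beta_1^2\beta_2 \mp \beta_2^3$ shows that $\beta_j\,\partial f/\partial \beta_i$ for $i,j = 1,2$ generate $\mathcal{M}_2^{3(odd)}$ over $\mathcal{E}_2^{even}$, so $L\mathcal{R}^{odd} f = \mathcal{M}_2^{3(odd)}$ and the complement in $\mathcal{E}_2^{odd}$ is $\mathbb{R}\{\beta_1,\beta_2\}$. From $\partial F/\partial p_i|_{(0,0)} = -\beta_i$ the two $p$-derivatives already cover this complement, while $\partial F/\partial q_j|_{(0,0)} = \partial S^+/\partial q_j$ lies in $\mathcal{M}_2^{3(odd)}$ (because $S^+ \in \mathcal{M}_2^4$, so its derivatives are odd and of order at least $3$). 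Hence $F$ is $\mathcal{R}^{odd}$-versal, and the two redundant $q$-parameters can be absorbed into the $p$-parameters via the standard uniqueness of versal unfoldings in Damon's geometric-subgroup framework, producing the claimed fibred $\mathcal{R}^{odd}$-equivalence between $F$ and the versal deformation of $D_{4/2}^{\pm}$ displayed in Corollary \ref{D_k-versal}.

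The main obstacle I anticipate is the sign bookkeeping that identifies the $\pm$ in $D_{4/2}^{\pm}$ with the sign of $\Delta$: the exponent and sign conventions in the normal forms interact with the classical sign in the discriminant formula in a way that can easily be inverted, so the correspondence must be checked explicitly on the representatives $\beta_1^2\beta_2\mp\beta_2^3$. The elimination of the redundant $q$-parameters is routine but implicitly relies on Damon's framework cited just before Theorem \ref{inf-versal}, which guarantees that the versal unfolding theorem applies in the $\mathcal{R}^{odd}$ context.
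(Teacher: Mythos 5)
Your proposal is correct and follows essentially the same route as the paper's proof: identify $j^3_0 f = j^3_0 S$, reduce the cubic $3$-jet by a linear (hence odd) change of coordinates according to the sign of $\Delta(j^3_0S)$, invoke the $3$-determinacy of $D^{\pm}_{4/2}$ from the proof of Theorem \ref{odd-class-2}, and verify versality via the infinitesimal criterion (the paper cites Corollary \ref{versalF}, which is just Theorem \ref{inf-versal}(a) specialized to families of the form (\ref{gf})). Your added details --- the explicit real classification of binary cubics, the sign check $\Delta = \pm 4$ on the normal forms $\beta_1^2\beta_2 \mp \beta_2^3$, and the computation showing $L\mathcal R^{odd}f = \mathcal M_2^{3(odd)}$ with complement $\mathbb R\{\beta_1,\beta_2\}$ covered by $\partial F/\partial p_i = -\beta_i$ --- are all consistent with what the paper leaves implicit.
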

\begin{proof}
 By (\ref{f}) we get that $j^3_0f=j^3_0S$. If $\Delta(j^3_0S)>0$, by linear change of coordinates we can reduce $j^3_0f$ to  $\beta_1^2\beta_2-\beta_2^3$. Then repeating the arguments in the proof of Theorem \ref{odd-class-2} it is easy to see that  $f$ is $\mathcal R^{odd}$-equivalent to $D_{4/2}^-$ singularity. By Theorem \ref{versalF} it is easy to see that (\ref{gf}) is an $\mathcal R^{odd}$-versal deformation of $f$.
By Corollary \ref{D_k-versal} we get the result. The case $\Delta(j^3_0S)<0$ is analogous.
\end{proof}
\begin{lem}\label{lem-non-pos}
$S_{3,0}S_{1,2}-S_{2,1}^2\le 0$ and $S_{0,3}S_{2,1}-S_{1,2}^2\le 0$, if $\Delta(j^3_0S)=0$.
\end{lem}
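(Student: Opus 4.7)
The plan is to identify the two expressions $S_{3,0}S_{1,2}-S_{2,1}^2$ and $S_{0,3}S_{2,1}-S_{1,2}^2$ as values, at the standard basis vectors, of the \emph{Hessian} of the binary cubic form $f(q_1,q_2):=j^3_0S$, and to show that when $\Delta(j^3_0S)=0$ this Hessian is pointwise non-positive on $\mathbb{R}^2$. This reduces the whole lemma to a single polynomial identity which can be checked by direct differentiation.

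First I would compute $H(q_1,q_2):=f_{q_1q_1}f_{q_2q_2}-f_{q_1q_2}^2$. Since $f$ is homogeneous of degree three, each second derivative is linear in $(q_1,q_2)$, so $H$ is a binary quadratic form. Using
\begin{equation*}
f_{q_1q_1}=S_{3,0}q_1+S_{2,1}q_2,\quad f_{q_1q_2}=S_{2,1}q_1+S_{1,2}q_2,\quad f_{q_2q_2}=S_{1,2}q_1+S_{0,3}q_2,
\end{equation*}
expansion gives
\begin{equation*}
H(q_1,q_2)=(S_{3,0}S_{1,2}-S_{2,1}^2)q_1^2+(S_{3,0}S_{0,3}-S_{1,2}S_{2,1})q_1q_2+(S_{0,3}S_{2,1}-S_{1,2}^2)q_2^2,
\end{equation*}
so the two quantities in the lemma are exactly $H(1,0)$ and $H(0,1)$; it therefore suffices to prove $H\le 0$ on $\mathbb{R}^2$.

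Next I would use the classical fact that, up to a positive scalar, $\Delta(j^3_0S)$ is the discriminant of the binary cubic $f$, so $\Delta=0$ forces $f$ to have a real repeated linear factor (the conjugate-pair case is ruled out by degree). Writing $f=L^2M$ with $L=\alpha q_1+\beta q_2$ and $M=\gamma q_1+\delta q_2$ real (and the cases $f\equiv 0$ and $M\parallel L$ handled trivially), I would substitute into the formula for $H$ and verify the identity
\begin{equation*}
H(q_1,q_2)=-4(\alpha\delta-\beta\gamma)^2\,L(q_1,q_2)^2.
\end{equation*}
This is manifestly non-positive; evaluating at $(1,0)$ and $(0,1)$ gives $S_{3,0}S_{1,2}-S_{2,1}^2=-4(\alpha\delta-\beta\gamma)^2\alpha^2\le 0$ and $S_{0,3}S_{2,1}-S_{1,2}^2=-4(\alpha\delta-\beta\gamma)^2\beta^2\le 0$.

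The main obstacle is purely computational: verifying the Hessian identity $H=-4(\alpha\delta-\beta\gamma)^2L^2$. The cleanest way is to compute $f_{q_1q_1}=2\alpha^2M+4\alpha\gamma L$, $f_{q_2q_2}=2\beta^2M+4\beta\delta L$, $f_{q_1q_2}=2\alpha\beta M+2(\alpha\delta+\beta\gamma)L$; the $M^2$ and $LM$ contributions then cancel in $H$, leaving the $L^2$ coefficient $16\alpha\beta\gamma\delta-4(\alpha\delta+\beta\gamma)^2=-4(\alpha\delta-\beta\gamma)^2$, as desired. Conceptually, the identity simply expresses the well-known fact that the Hessian of a binary cubic vanishes on the zero locus of its repeated factor, with multiplicity and sign dictated by the Wronskian-type determinant $\alpha\delta-\beta\gamma$.
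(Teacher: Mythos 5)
Your proof is correct, and it reaches the same two numbers as the paper by a recognizably different computation. Both arguments pivot on the same key fact — $\Delta(j^3_0S)=0$ forces the binary cubic $j^3_0S$ to have a repeated \emph{real} linear factor — but the endgames differ. The paper dehomogenizes: it observes that $w(t)=\frac{1}{6}S_{3,0}t^3+\frac{1}{2}S_{2,1}t^2+\frac{1}{2}S_{1,2}t+\frac{1}{6}S_{0,3}$ (and its reflection $v$) then has a real multiple root, so by Rolle the quadratic $w'$ has a real root, hence nonnegative discriminant $S_{2,1}^2-S_{3,0}S_{1,2}\ge 0$; this is a two-line argument. You instead work homogeneously with the Hessian covariant $H$ of the cubic and verify the identity $H=-4(\alpha\delta-\beta\gamma)^2L^2$ for $f=L^2M$, then evaluate at $(1,0)$ and $(0,1)$. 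The two are linked by the identity $\operatorname{disc}_t\bigl(w'(t)\bigr)=-H(1,0)$ and $\operatorname{disc}_t\bigl(v'(t)\bigr)=-H(0,1)$, so you are literally bounding the same quantities. What your route buys: a pointwise statement ($H\le 0$ on all of $\mathbb R^2$, not just at the two basis vectors), and a cleanly homogeneous treatment that sidesteps the mild degeneracy in the paper's phrasing when $S_{3,0}=0$ or $S_{0,3}=0$ (where $w$ or $v$ drops degree and ``real root of multiplicity greater than $1$'' requires interpreting a root at infinity, although the final inequality is then automatic). What it costs is the explicit verification of the Hessian identity, which you carry out correctly, and the brief case analysis ($f\equiv 0$, repeated factor non-real, $M\parallel L$), all of which you dispose of properly.
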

\begin{proof}
The condition $\Delta(j^3_0S)=0$ implies that $w(t)=\frac{1}{6}
S_{3,0}t^3+\frac{1}{2} S_{2,1}t^2+\frac{1}{2}S_{1,2}t+\frac{1}{6}S_{0,3}$ and $v(t)=\frac{1}{6}
S_{3,0}+\frac{1}{2} S_{2,1}t+\frac{1}{2}S_{1,2}t^2+\frac{1}{6}S_{0,3}t^3$ have real roots of multiplicity greater than $1$. Thus polynomials $\frac{dw}{dt}(t)=\frac{1}{2}
S_{3,0}t^2+S_{2,1}t+\frac{1}{2}S_{1,2}$ and $\frac{dv}{dt}(t)=\frac{1}{2}
S_{2,1}+S_{1,2}t+\frac{1}{2}S_{0,3}t^2$ have real roots. So their discriminants are nonnegative.
\end{proof}
\begin{notation} Now we introduce the following abbreviations:
$$r_1=\frac{S_{2,1}S_{1,2}-S_{3,0}S_{0,3}}{2(S_{3,0}S_{1,2}-S_{2,1}^2)}, \ \
r_2=\frac{S_{3,0}^2S_{0,3}-S_{3,0}S_{2,1}S_{1,2}+3S_{2,1}^3}{S_{3,0}S_{1,2}-S_{2,1}^2}$$
$$\sigma_{0,n}=\frac{\sum_{k=0}^n \left(^n_k\right)S_{k,n-k}r_1^k}{(S_{3,0}r_1-r_2)^n} \ \ \text{for} \ \ n=5,7$$
$$\tilde{r}_1=\frac{S_{2,1}S_{1,2}-S_{3,0}S_{0,3}}{2(S_{0,3}S_{2,1}-S_{1,2}^2)}, \ \
\tilde{r}_2=\frac{S_{0,3}^2S_{3,0}-S_{0,3}S_{1,2}S_{2,1}+3S_{1,2}^3}{S_{0,3}S_{2,1}-S_{1,2}^2}$$
$$\sigma_{n,0}=\frac{\sum_{k=0}^n \left(^n_k\right)S_{n-k,k}\tilde{r}_1^k}{(S_{0,3}\tilde{r}_1-\tilde{r}_2)^n} \ \ \text{for} \ \ n=5,7$$
\end{notation}
\begin{thm}\label{thmD6+}
Assume  $S$ satisfies condition (\ref{versalF1}) and $\Delta(j^3_0S)=0$.
Consider the following pair of conditions:
\begin{equation}\label{dp}
S_{3,0}S_{1,2}-S_{2,1}^2<0,
\end{equation}
\begin{equation}\label{dp'}
S_{0,3}S_{2,1}-S_{1,2}^2<0.
\end{equation}

If (\ref{dp}) is satisfied and $\sigma_{0,5}>0$, or
(\ref{dp'}) is  satisfied and $\sigma_{5,0}>0$, then
 $F$ is $\mathcal R^{odd}$-equivalent to the $\mathcal
R^{odd}$-versal deformation of  $D_{6/2}^+$ : $(\beta_1,\beta_2,p,q)\mapsto
\beta_1^2\beta_2+\beta_2^5+p_1\beta_1+p_2\beta_2+q_1\beta_2^3$.

If (\ref{dp}) is satisfied and $\sigma_{0,5}<0$, or
(\ref{dp'}) is  satisfied and $\sigma_{5,0}<0$, then
$F$ is $\mathcal R^{odd}$-equivalent to the $\mathcal
R^{odd}$-versal deformation of  $D_{6/2}^-$ : $(\beta_1,\beta_2,p,q)\mapsto
\beta_1^2\beta_2-\beta_2^5+p_1\beta_1+p_2\beta_2+q_1\beta_2^3$.
\end{thm}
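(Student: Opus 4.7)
The plan is to reduce $j^3_0 f = j^3_0 S$ to the normal form $\tilde\beta_1^2\tilde\beta_2$ via a linear (hence odd) change of coordinates, then compute the coefficient of $\tilde\beta_2^5$ surviving the $\mathcal R_1^{odd}$-complete transversal at degree $5$, identify its sign with that of $\sigma_{0,5}$ or $\sigma_{5,0}$, and finally invoke versality. Since $f(\beta) = \frac12(S(\beta) - S(-\beta))$, one has $j^3_0 f = j^3_0 S$. The hypothesis $\Delta(j^3_0 S) = 0$ together with the strict inequality (\ref{dp}) (or (\ref{dp'})) and Lemma~\ref{lem-non-pos} excludes the triple-root (cube) case, so $j^3_0 S$ factors as $c L_1^2 L_2$ with independent linear forms $L_1, L_2$. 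In case (\ref{dp}) I would take $L_1 = \beta_1 - r_1 \beta_2$ (verifying that $r_1$ equals the double root $t^*$ of the auxiliary cubic $w(t)$) and choose the simple factor so that, after rescaling $\tilde\beta_2$ by the factor $S_{3,0} r_1 - r_2$, the cubic becomes exactly $\tilde\beta_1^2 \tilde\beta_2$. Case (\ref{dp'}) is handled symmetrically, interchanging the roles of $\beta_1$ and $\beta_2$.

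In the new coordinates, $j^5_0 f = \tilde\beta_1^2 \tilde\beta_2 + P_5$, where $P_5$ is the odd degree-$5$ homogeneous component of $f$ in the rescaled variables. As in the $j^3_0 g = x_1^2 x_2$ branch of the proof of Theorem~\ref{odd-class-2}, the complete transversal to the $\mathcal R_1^{odd}$-orbit of $\tilde\beta_1^2 \tilde\beta_2$ in $\mathcal M_2^{5(odd)}$ is $\mathbb R\{\tilde\beta_2^5\}$, so a $\mathcal R_1^{odd}$-diffeomorphism brings $j^5_0 f$ to $\tilde\beta_1^2 \tilde\beta_2 + a\, \tilde\beta_2^5$, with $a$ the coefficient of $\tilde\beta_2^5$ in $P_5$. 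Direct substitution of the linear change into the degree-$5$ Taylor polynomial $\sum_{k=0}^5 \tfrac{1}{k!(5-k)!} S_{k,5-k}\beta_1^k \beta_2^{5-k}$ shows that $a = \sigma_{0,5}$ up to a positive multiplicative factor: the numerator of $\sigma_{0,5}$ is the fifth directional derivative of $S$ at $0$ along the line $(\beta_1, \beta_2) = (r_1, 1)\, t$ that parametrizes the $\tilde\beta_2$-axis, while the denominator $(S_{3,0} r_1 - r_2)^5$ is the fifth power of the $\tilde\beta_2$-scaling factor. The analogous computation in case (\ref{dp'}) gives $a \propto \sigma_{5,0}$.

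If $\sigma_{0,5} > 0$ (resp.\ $<0$), rescaling $\tilde\beta_2$ brings $j^5_0 f$ to $\tilde\beta_1^2 \tilde\beta_2 + \tilde\beta_2^5$ (resp.\ $\tilde\beta_1^2 \tilde\beta_2 - \tilde\beta_2^5$), and analogously in case (\ref{dp'}) with $\sigma_{5,0}$. The computation of $L \mathcal R^{odd}$ at $\tilde\beta_1^2 \tilde\beta_2 \pm \tilde\beta_2^5$ that appears in the proof of Theorem~\ref{odd-class-2} fills $\mathcal M_2^{5(odd)}$, yielding $5$-$\mathcal R^{odd}$-determinacy and hence $f \sim_{\mathcal R^{odd}} D_{6/2}^{\pm}$. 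The hypothesis (\ref{versalF1}) and Theorem~\ref{inf-versal}(a) identify $F$ as an $\mathcal R^{odd}$-versal deformation of $f$; applying the $\mathbb Z_2$-equivariant uniqueness of versal deformations together with Corollary~\ref{D_k-versal} for $k=3$, $F$ is fibred $\mathcal R^{odd}$-equivalent to the stated normal form $\beta_1^2\beta_2 \pm \beta_2^5 + p_1\beta_1 + p_2\beta_2 + q_1\beta_2^3$, the redundant fourth parameter among $(p_1, p_2, q_1, q_2)$ being absorbed by a fibre-preserving reparametrization of the base.

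The main obstacle is the bookkeeping in the second step: the explicit linear change must be written out carefully, and the coefficient $a$ must be shown to agree in sign with $\sigma_{0,5}$ (respectively $\sigma_{5,0}$). Because the denominator $(S_{3,0} r_1 - r_2)^5$ is an odd power and could flip the sign, one must verify that the definition of $\sigma_{0,5}$ given in the preamble already encodes the correct overall sign, a check that uses the strict inequality (\ref{dp}). When both (\ref{dp}) and (\ref{dp'}) hold, the two formulas $\sigma_{0,5}$ and $\sigma_{5,0}$ must be shown to yield consistent sign information, which is a compatibility check one would perform last.
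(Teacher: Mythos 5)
Your proposal is correct and follows essentially the same route as the paper's own proof: the linear change $(\tilde\beta_1,\tilde\beta_2)=(\beta_1-r_1\beta_2,\,S_{3,0}\beta_1-r_2\beta_2)$ (legitimate precisely because (\ref{dp}) gives $r_1\neq r_2/S_{3,0}$), the identification of the sign of $\sigma_{0,5}$ with that of $\partial^5 f/\partial\tilde\beta_2^5(0)$, reduction to $D_{6/2}^{\pm}$ via the classification in Theorem \ref{odd-class-2}, and versality of $F$ from condition (\ref{versalF1}). The extra checks you flag (the sign bookkeeping in the denominator $(S_{3,0}r_1-r_2)^5$, and consistency of $\sigma_{0,5}$ with $\sigma_{5,0}$ when both (\ref{dp}) and (\ref{dp'}) hold) are sound but automatic, the latter because $D_{6/2}^+$ and $D_{6/2}^-$ are not $\mathcal R^{odd}$-equivalent and both criteria compute the same germ $f$.
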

\begin{proof}
First we assume $\Delta(j^3_0S)=0$ and condition (\ref{dp}), with $\sigma_{0,5}>0$.  Then
we get $j^3_0f=j^3_0S=(\beta_1-r_1\beta_2)^2(S_{3,0}\beta_1-r_2\beta_2)=\tilde{\beta}_1^2\tilde{\beta}_2$, where $(\tilde{\beta}_1,\tilde{\beta}_2)=(\beta_1-r_1\beta_2,S_{3,0}\beta_1-r_2\beta_2)$ forms the coordinate system on $\mathbb R^2$, since by condition (\ref{dp}) $r_1\ne r_2/S_{3,0}$. $\sigma_{0,5}>0$ is equivalent to $\frac{\partial^5 f}{\partial \tilde{\beta}_2^5}(0)>0$. Thus, $f$ is $\mathcal R^{odd}$-equivalent to $D_{6/2}^+$. By Theorem \ref{versalF} we obtain  that $F$ is an $\mathcal R^{odd}$-versal deformation of $f$ since $S$ satisfies (\ref{versalF1}).
If $\Delta(j^3_0S)=0$ and (\ref{dp'}) is satisfied with $\sigma_{5,0}>0$, then we repeat in the same way using the coordinate system $(\tilde{\beta}_1,\tilde{\beta}_2)=(\beta_2-\tilde{r}_1\beta_1,S_{0,3}\beta_2-\tilde{r}_2\beta_1)$. The cases  (\ref{dp}) and $\sigma_{0,5}<0$, or (\ref{dp'}) and $\sigma_{5,0}<0$, are analogous.
\end{proof}
\begin{thm}\label{thmD8+}
Assume  $S$ satisfies condition (\ref{versalF1}) and $\Delta(j^3_0S)=0$.

If (\ref{dp}) holds, $\sigma_{0,5}=0$ and $\sigma_{0,7}>0$, or, if
(\ref{dp'}) holds, $\sigma_{5,0}=0$ and $\sigma_{7,0}>0$, then
$F$ is $\mathcal R^{odd}$-equivalent to the $\mathcal
R^{odd}$-versal deformation of  $D_{8/2}^+$ : $(\beta_1,\beta_2,p,q)\mapsto
\beta_1^2\beta_2+\beta_2^7+p_1\beta_1+p_2\beta_2+q_1\beta_2^3+q_2\beta_2^5$.

If (\ref{dp}) holds, $\sigma_{0,5}=0$ and $\sigma_{0,7}<0$, or, if (\ref{dp'}) holds, $\sigma_{5,0}=0$ and $\sigma_{7,0}<0$, then
$F$ is $\mathcal R^{odd}$-equivalent to the $\mathcal
R^{odd}$-versal deformation of $D_{8/2}^-$ : $(\beta_1,\beta_2,p,q)\mapsto
\beta_1^2\beta_2-\beta_2^7+p_1\beta_1+p_2\beta_2+q_1\beta_2^3+q_2\beta_2^5$.
\end{thm}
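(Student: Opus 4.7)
The plan is to extend the argument of Theorem \ref{thmD6+} by one more step of the complete transversal method. Under the standing hypotheses $\Delta(j^3_0 S)=0$ and (\ref{dp}), the linear change of coordinates $(\tilde\beta_1,\tilde\beta_2)=(\beta_1-r_1\beta_2,\,S_{3,0}\beta_1-r_2\beta_2)$ already used in the proof of Theorem \ref{thmD6+} is nondegenerate by (\ref{dp}) and brings $j^3_0 f = j^3_0 S$ to the normal form $\tilde\beta_1^2\tilde\beta_2$. Restricting to $\tilde\beta_1=0$ gives the parametrization $(\beta_1,\beta_2)=(r_1,1)\,\tilde\beta_2/(S_{3,0}r_1-r_2)$, so the directional derivative formula together with the identity $\partial^{i+j} f/\partial\beta_1^i\partial\beta_2^j(0)=S_{i,j}$ for $i+j$ odd yields
\begin{equation*}
\frac{\partial^n f}{\partial \tilde\beta_2^n}(0,0) \ = \ \sigma_{0,n} \quad \text{for} \quad n=5,7,
\end{equation*}
so the hypothesis $\sigma_{0,5}=0$ kills the $\tilde\beta_2^5$-coefficient of $f$ in the new coordinates.

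By the proof of Theorem \ref{odd-class-2}, the $\mathcal R_1^{odd}$-complete transversal of $\tilde\beta_1^2\tilde\beta_2$ at the $5$-jet level is $\mathbb R\{\tilde\beta_2^5\}$, so $j^5_0 f$ is $\mathcal R_1^{odd}$-equivalent to $\tilde\beta_1^2\tilde\beta_2$; at the next step the complete transversal is $\mathbb R\{\tilde\beta_2^7\}$, producing $j^7_0 f \sim_{\mathcal R_1^{odd}} \tilde\beta_1^2\tilde\beta_2 + c\,\tilde\beta_2^7$ with $\operatorname{sgn}(c)=\operatorname{sgn}(\sigma_{0,7})$. A rescaling of $\tilde\beta_2$ together with a compensating even rescaling of $\tilde\beta_1$ absorbs $|c|$ and yields $\tilde\beta_1^2\tilde\beta_2\pm\tilde\beta_2^7$. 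Finite $\mathcal R^{odd}$-determinacy, checked exactly as for the $D_{2k/2}^\pm$ series in the proof of Theorem \ref{odd-class-2}, then gives $f\sim_{\mathcal R^{odd}} D_{8/2}^\pm$ with the sign prescribed by $\sigma_{0,7}$. The case under (\ref{dp'}) is handled symmetrically by using the companion coordinate change $(\tilde\beta_1,\tilde\beta_2)=(\beta_2-\tilde r_1\beta_1,\,S_{0,3}\beta_2-\tilde r_2\beta_1)$, with $\sigma_{n,0}$ playing the role of $\sigma_{0,n}$.

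Since $S$ satisfies the versality condition (\ref{versalF1}), Corollary \ref{versalF} ensures that $F$ is an $\mathcal R^{odd}$-versal deformation of $f$, and Corollary \ref{D_k-versal} then identifies $F$ with the displayed $\mathcal R^{odd}$-miniversal deformation of $D_{8/2}^\pm$. The only step that goes beyond a direct transcription of the proofs of Theorems \ref{odd-class-2} and \ref{thmD6+} is tracking the $\tilde\beta_2^7$-coefficient through the $\mathcal R_1^{odd}$-reduction of the $5$-jet tangent-space terms: the degree-$3$ diffeomorphism used to kill those terms generates degree-$7$ corrections via the Hessian of $\tilde\beta_1^2\tilde\beta_2$, and the main technical point I anticipate is verifying that these corrections do not flip the sign of $\sigma_{0,7}$, so that $\operatorname{sgn}(c)=\operatorname{sgn}(\sigma_{0,7})$ as claimed.
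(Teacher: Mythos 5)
Your proposal follows the paper's own route step for step --- the same linear change of coordinates, the same identification $\sigma_{0,n}=\frac{\partial^n f}{\partial\tilde\beta_2^n}(0)$, the same appeal to the reduction procedure of Theorem \ref{odd-class-2} and then to Corollary \ref{versalF} and Corollary \ref{D_k-versal} --- so up to your last paragraph there is nothing to compare. But the point you flag at the end and leave unverified is not a routine check: it is a genuine gap, and carrying out the verification shows that the sign criterion, as you (and, for that matter, the paper's own proof) state it, fails in general. Write $c$ for the coefficient of $\tilde\beta_1\tilde\beta_2^4$ and $\lambda$ for the coefficient of $\tilde\beta_2^7$ in the Taylor expansion of $f$ in the coordinates $(\tilde\beta_1,\tilde\beta_2)$; the hypotheses $\Delta(j^3_0S)=0$, (\ref{dp}) and $\sigma_{0,5}=0$ impose nothing on $c$. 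The odd diffeomorphism with identity $1$-jet that removes the term $c\,\tilde\beta_1\tilde\beta_2^4$ is $(\tilde\beta_1,\tilde\beta_2)\mapsto(\tilde\beta_1-\tfrac{c}{2}\tilde\beta_2^3,\tilde\beta_2)$, and direct substitution gives
\[
\Bigl(\tilde\beta_1-\tfrac{c}{2}\tilde\beta_2^3\Bigr)^2\tilde\beta_2+c\Bigl(\tilde\beta_1-\tfrac{c}{2}\tilde\beta_2^3\Bigr)\tilde\beta_2^4+\lambda\tilde\beta_2^7=\tilde\beta_1^2\tilde\beta_2+\Bigl(\lambda-\tfrac{c^2}{4}\Bigr)\tilde\beta_2^7,
\]
while the eliminators of the remaining degree-$5$ terms (all divisible by $\tilde\beta_1^2$) only disturb monomials divisible by $\tilde\beta_1$. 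So the $7$-jet reduces to $\tilde\beta_1^2\tilde\beta_2+(\lambda-c^2/4)\tilde\beta_2^7$, not to $\tilde\beta_1^2\tilde\beta_2+\lambda\tilde\beta_2^7$. Since $D^+_{8/2}$ and $D^-_{8/2}$ are genuinely inequivalent (the zero set of $\beta_1^2\beta_2-\beta_2^7$ has three real branches, that of $\beta_1^2\beta_2+\beta_2^7$ only one), the relevant invariant is $\operatorname{sgn}(\lambda-c^2/4)$, which need not agree with $\operatorname{sgn}(\lambda)=\operatorname{sgn}(\sigma_{0,7})$. For instance $f=\tilde\beta_1^2\tilde\beta_2+\tilde\beta_1\tilde\beta_2^4+\tfrac18\tilde\beta_2^7$ has $\sigma_{0,5}=0$ and $\sigma_{0,7}>0$ but is $\mathcal R^{odd}$-equivalent to $D^-_{8/2}$; and when $\lambda=c^2/4$ the germ leaves the $D_{8/2}$ class altogether, although that degeneration is excluded by the versality hypothesis (\ref{versalF1}).

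To close the gap you must replace $\sigma_{0,7}$ by the corrected quantity, i.e.\ subtract from $\lambda$ the square of the (suitably normalized) mixed derivative $\frac{\partial^5 f}{\partial\tilde\beta_1\partial\tilde\beta_2^4}(0)$ --- a further $\sigma$-type expression in the $S_{i,j}$ and $r_1,r_2$ --- or else add the hypothesis that this mixed derivative vanishes. Be aware that the paper's proof is no more careful here: it passes directly from $\frac{\partial^7 f}{\partial\tilde\beta_2^7}(0)>0$ to ``$f$ is $\mathcal R^{odd}$-equivalent to $D^+_{8/2}$'' without performing the jet reduction, so you have in fact put your finger on a defect of the published argument rather than merely of your own. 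Note also that no such correction is needed in Theorem \ref{thmD6+}: at the $5$-jet level the tangent directions of $\tilde\beta_1^2\tilde\beta_2$ are all divisible by $\tilde\beta_1$ and the quadratic (Hessian) contributions first appear in degree $7$, so $\sigma_{0,5}$ really is the normal-form coefficient there.
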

\begin{proof}
First assume $\Delta(j^3_0S)=0$,   condition (\ref{dp}) is satisfied and $\sigma_{0,5}=0$, $\sigma_{0,7}>0$. As in the proof of Theorem \ref{thmD6+}
we get $j^3_0f=j^3_0S=\tilde{\beta}_1^2\tilde{\beta}_2$, where $(\tilde{\beta}_1,\tilde{\beta}_2)=(\beta_1-r_1\beta_2,S_{3,0}\beta_1-r_2\beta_2)$ and $\frac{\partial^5 f}{\partial \tilde{\beta}_2^5}(0)=0$ and $\frac{\partial^7 f}{\partial \tilde{\beta}_2^7}(0)>0$, since $\sigma_{0,5}=0$ and $\sigma_{0,7}>0$. Thus $f$ is $\mathcal R^{odd}$-equivalent to $D_{8/2}^+$.  By Theorem \ref{versalF}, $F$ is an $\mathcal R^{odd}$-versal deformation of $f$ since $S$ satisfies (\ref{versalF1}).
If $\Delta(j^3_0S)$ vanishes,   condition (\ref{dp'}) is satisfied and $\sigma_{5,0}=0$, $\sigma_{7,0}>0$, we repeat  using the coordinate system $(\tilde{\beta}_1,\tilde{\beta}_2)=(\beta_2-\tilde{r}_1\beta_1,S_{0,3}\beta_2-\tilde{r}_2\beta_1)$. The case
(\ref{dp}), $\sigma_{0,5}=0$ and $\sigma_{0,7}<0$, and the case  (\ref{dp'}), $\sigma_{5,0}=0$ and $\sigma_{7,0}<0$, are worked out analogously.
\end{proof}
\begin{thm}\label{thmE8}
Assume $S$ satisfies condition (\ref{versalF1}) and $\Delta(j^3_0S)=0$.
If either of the following two  conditions are satisfied,
\begin{equation}\label{dp0}
S_{3,0}S_{1,2}-S_{2,1}^2=0, \  S_{3,0}\ne 0, \  \sum_{k=0}^5\left(^5_k\right)S_{k,5-k}\left(-S_{2,1}\right)^k\left(S_{3,0}\right)^{5-k}\ne0,
\end{equation}
\begin{equation}\label{dp0'}
S_{0,3}S_{2,1}-S_{1,2}^2=0, \ S_{0,3}\ne 0, \  \sum_{k=0}^5\left(^5_k\right)S_{5-k,k}\left(-S_{1,2}\right)^k\left(S_{0,3}\right)^{5-k}\ne0,
\end{equation}
then $F$ is $\mathcal R^{odd}$-equivalent to the $\mathcal
R^{odd}$-versal deformation of $E_{8/2}$ : $(\beta_1,\beta_2,p,q)\mapsto
\beta_1^3+\beta_2^5+p_1\beta_1+p_2\beta_2+q_1\beta_1\beta_2^2+q_2\beta_2^3$.
\end{thm}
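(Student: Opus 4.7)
The plan is to follow the same strategy as in the proofs of Theorems \ref{thmD6+} and \ref{thmD8+}: reduce $j^3_0 f = j^3_0 S$ to the normal form (\ref{j3E8}) by a linear change of coordinates, then check the higher-order jet condition that distinguishes $E_{8/2}$ in the classification of Theorem \ref{odd-class-2}, and finally invoke Corollary \ref{versalF} together with Corollary \ref{E_8-versal} to conclude.

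First I would verify that, under hypothesis (\ref{dp0}), the cubic $j^3_0 S$ is a perfect cube. Concretely, $\Delta(j^3_0 S) = 0$ together with $S_{3,0} S_{1,2} - S_{2,1}^2 = 0$ and $S_{3,0} \neq 0$ forces the cubic to have a triple root; a direct computation with $r = -S_{2,1}/S_{3,0}$ gives $j^3_0 S = \frac{S_{3,0}}{6}(\beta_1 - r\beta_2)^3$, the identities $S_{1,2} = r^2 S_{3,0}$ and $S_{0,3} = -r^3 S_{3,0}$ following from $S_{3,0}S_{1,2} = S_{2,1}^2$ and from the vanishing of $\Delta$, respectively. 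After the linear change $\tilde\beta_1 = \lambda(\beta_1 - r\beta_2)$, $\tilde\beta_2 = \beta_2$, with $\lambda$ chosen so that $j^3_0 f = \tilde\beta_1^3$, we are exactly at the starting point of the $E_{8/2}$ branch in the proof of Theorem \ref{odd-class-2}.

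Second, I would compute the coefficient of $\tilde\beta_2^5$ in $j^5_0 f$. Since $\partial/\partial\tilde\beta_2 = r\,\partial/\partial\beta_1 + \partial/\partial\beta_2$ in the new coordinates, and since $f$ and $S$ have the same odd-order partials at $0$, the binomial expansion yields
\[ \frac{\partial^5 f}{\partial \tilde\beta_2^5}(0) \ = \ \sum_{k=0}^5 \binom{5}{k} r^k S_{k,5-k} \ = \ S_{3,0}^{-5} \sum_{k=0}^5 \binom{5}{k} S_{k,5-k}(-S_{2,1})^k (S_{3,0})^{5-k}, \]
which is nonzero by the third clause of (\ref{dp0}). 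Following the $E_{8/2}$ branch of the proof of Theorem \ref{odd-class-2} (complete transversal $T = \mathbb R\{x_1 x_2^4, x_2^5\}$ and Mather's lemma to absorb the $x_1 x_2^4$ term once the coefficient of $x_2^5$ is nonzero), $f$ is $\mathcal R^{odd}$-equivalent to $E_{8/2}$. Since $S$ satisfies (\ref{versalF1}) by hypothesis, Corollary \ref{versalF} asserts that $F$ is an $\mathcal R^{odd}$-versal deformation of $f$, so Theorem \ref{inf-versal} together with Corollary \ref{E_8-versal} yields the stated normal form.

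Finally, the case (\ref{dp0'}) is handled by the symmetric argument, using $\tilde r = -S_{1,2}/S_{0,3}$ and the coordinate change $\tilde\beta_1 = \beta_1$, $\tilde\beta_2 = \lambda(\beta_2 - \tilde r\beta_1)$, which exchanges the roles of the two indices. The main technical step is the first one, namely verifying the polynomial identity that forces $j^3_0 S$ to be a pure cube under (\ref{dp0}) and tracking the fifth directional derivative through the linear change of coordinates so that it matches the given algebraic expression. None of this is deep, but the bookkeeping is where most of the work lies.
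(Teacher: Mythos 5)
Your proof follows essentially the same route as the paper's: reduce $j^3_0 f=j^3_0S$ to $\tilde\beta_1^3$ by the linear change of coordinates determined by the triple root of the cubic, observe that the third clause of (\ref{dp0}) is precisely the nonvanishing of $\frac{\partial^5 f}{\partial\tilde\beta_2^5}(0)$, conclude that $f$ is $\mathcal R^{odd}$-equivalent to $E_{8/2}$ via the $E_{8/2}$ branch of Theorem \ref{odd-class-2}, and then apply Corollary \ref{versalF} together with Corollary \ref{E_8-versal}. You merely make explicit the bookkeeping the paper leaves implicit; in fact your sign convention $r=-S_{2,1}/S_{3,0}$ (so that the cube is $(\beta_1+\tfrac{S_{2,1}}{S_{3,0}}\beta_2)^3$) is the one consistent with the factor $(-S_{2,1})^k$ appearing in (\ref{dp0}).
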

\begin{proof}
First we assume that $\Delta(j^3_0S)=0$  and  condition (\ref{dp0}) is satisfied. It implies that
we get $j^3_0f=j^3_0S=\tilde{\beta}_1^3$, where $(\tilde{\beta}_1,\tilde{\beta}_2)=\left(\left(\frac{S_{3,0}}{6}\right)^{1/3}\left(\beta_1-\frac{S_{2,1}}{S_{3,0}}\beta_2\right),\beta_2\right)$ and $\frac{\partial^5 f}{\partial \tilde{\beta}_2^5}(0)\ne 0$. Thus $f$ is $\mathcal R^{odd}$-equivalent to $E_{8/2}$.  By Theorem \ref{versalF} we obtain  that $F$ is an $\mathcal R^{odd}$-versal deformation of $f$ since $S$ satisfies (\ref{versalF1}).
If $\Delta(j^3_0S)=0$ and   condition (\ref{dp0'}) is satisfied, we repeat with $(\tilde{\beta}_1,\tilde{\beta}_2)=\left(\left(\frac{S_{0,3}}{6}\right)^{1/3}\left(\beta_2-\frac{S_{1,2}}{S_{0,3}}\beta_1\right),\beta_1\right)$.
\end{proof}
\begin{rem}
These are all odd-simple singularities that can be realized as singularities of on-shell Wigner caustics of Lagrangian submanifolds in affine-symplectic space. The odd-simple singularities $J_{10/2}^{\pm}$ and $E_{12/2}$ cannot be realized in this way because their codimensions are too big for a Lagrangian surface in affine-symplectic $4$-space. On the other hand, for higher dimensional Lagrangian submanifolds in affine-symplectic space, the necessary number of variables for the generating families of on-shell Wigner caustics is at least $3$ (see Remark \ref{noquadratic}).
\end{rem}

\subsection{Geometric interpretation} Finally, we provide the geometric interpretation of each realization condition for simple stable Lagrangian singularities of on-shell Wigner caustics of Lagrangian surfaces. This also provides affine-invariant descriptions for such realization conditions, which were presented in a particular coordinate system, in Theorems \ref{thmD4-}-\ref{thmE8}. Similar results for surfaces on a affine $4$-space without a symplectic structure can be found in \cite{GJ}, where geometry of surfaces through the contact map was studied.
Background for extrinsic geometry of surfaces in euclidean $4$-space can be found in \cite{MRR}.  Here, we merely adapt it to the case of Lagrangian surfaces in affine-symplectic $4$-space. Recall Notation \ref{partialS}.

Then, for the canonical euclidean metric in $\mathbb R^4$, the matrix of the second fundamental form at $(p,q)$  of $L$ can be written as follows:
$$II_{(p,q)}=
\left[
\begin{array}{ccc}
S_{3,0}(q)&S_{2,1}(q)&S_{1,2}(q)\\
S_{2,1}(q)&S_{1,2}(q)&S_{0,3}(q)\\
\end{array}
\right]
$$
from which is defined the following determinant:
$$
\Delta_L(p,q)=
\frac{1}{4}\det\left[
\begin{array}{cccc}
S_{3,0}(q)&2S_{2,1}(q)&S_{1,2}(q)&0\\
0&S_{3,0}(q)&2S_{2,1}(q)&S_{1,2}(q)\\
S_{2,1}(q)&2S_{1,2}(q)&S_{0,3}(q)&0\\
0&S_{2,1}(q)&2S_{1,2}(q)&S_{0,3}(q)\\
\end{array}
\right]
$$
and it is easy to see that
\begin{equation}\label{Deltadiscrim}
\Delta_L(p,q)=-16\Delta(j^3_qS).
\end{equation}
Also, the Gaussian curvature at $(p,q)\in L$ is given by the  formula
\begin{equation}\label{gauss}
\kappa(p,q)=S_{3,0}(q)S_{1,2}(q)-(S_{2,1}(q))^2+S_{2,1}(q)S_{0,3}(q)-(S_{1,2}(q))^2.
\end{equation}

In extrinsic geometry of surfaces in euclidean $\mathbb R^4$, $\Delta_L$ and $\kappa$ are both invariant under the action of the euclidean group of isometries on  $\mathbb R^4$, but neither is invariant under the action of the whole affine group on $\mathbb R^4$.  The same is true if we restrict to the action of the affine-symplectic group on symplectic $\mathbb R^4$. However, although neither  $\Delta_L$ nor $\kappa$ are affine-symplectic invariants, the following propositions allow us to use them for classifying points in a Lagrangian surface of  symplectic $\mathbb R^4$.

\begin{prop}\label{signdelta} The sign ($ \ >0 \ ,\ <0 \ , \ =0 \ $) of $\Delta_L$  is an affine (and therefore affine-symplectic) invariant. \end{prop}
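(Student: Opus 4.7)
The plan is to exhibit $\Delta_L(p,q)$, up to a positive universal constant, as a classical $GL_2\times GL_2$-invariant of the second fundamental form $II_{(p,q)}$ of $L$, viewed purely as a surface in the affine space $\mathbb{R}^4$; since $II_{(p,q)}$ is intrinsic and the relevant $GL_2(\mathbb{R})$-representations pick up only even (hence positive) powers of determinants, the sign of $\Delta_L$ will inherit affine invariance automatically.

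First I would record the intrinsic interpretation. The second fundamental form
\[ II_{(p,q)} : T_{(p,q)}L \otimes T_{(p,q)}L \to \mathbb{R}^4/T_{(p,q)}L \]
depends only on the affine structure of $\mathbb{R}^4$, and any affine isomorphism $A:\mathbb{R}^4\to\mathbb{R}^4$ intertwines $II_{(p,q)}$ with $II_{A(p,q)}$ via the induced isomorphisms on tangent and normal quotient spaces. In the bases $\{\partial/\partial q_i\}$ of $T_{(p,q)}L$ and the classes of $\{\partial/\partial p_j\}$ in the quotient (both natural because $L$ is a graph over the $q$-plane), a straightforward computation shows that $II_{(p,q)}$ is encoded by two symmetric $2\times 2$ matrices $B_1,B_2$ whose independent entries are the two rows of the matrix $II_{(p,q)}$ as displayed in the excerpt.

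Second I would identify $\Delta_L$ as a classical invariant of the pair $(B_1,B_2)$. The $4\times 4$ determinant defining $\Delta_L$ is precisely the Sylvester matrix for the resultant of the binary quadratic forms $Q_i(v)=v^T B_iv$, so $\Delta_L$ equals, up to a positive constant, $\mathrm{Res}(Q_1,Q_2)$; equivalently, via identity (\ref{Deltadiscrim}) and the classical fact that the discriminant of a binary cubic is a scalar multiple of the resultant of its partial derivatives, $\Delta_L$ is a positive multiple of the discriminant of the binary cubic $j^3_q S$.

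Finally I would invoke the standard $GL_2\times GL_2$ transformation laws: a change of tangent basis by $M\in GL_2(\mathbb{R})$ multiplies $\mathrm{Res}(Q_1,Q_2)$ by $(\det M)^4$, and a change of normal basis by $N\in GL_2(\mathbb{R})$ (which replaces the pair $(Q_1,Q_2)$ by $N\cdot(Q_1,Q_2)$) multiplies it by $(\det N)^2$; both factors are strictly positive. Combined with the affine-equivariance of $II_{(p,q)}$ from the first step, this forces $\mathrm{sign}(\Delta_L(p,q))$ to depend only on the affine-equivalence class of $(L,(p,q))$. The main technical nuisance will be the constant-matching in the second step --- pinning down the universal positive scalar between the $4\times 4$ determinant and the Sylvester resultant, and calibrating the discriminant-resultant identity for binary cubics against (\ref{Deltadiscrim}); once those book-keeping issues are handled, the affine invariance of $\mathrm{sign}(\Delta_L)$ reduces to elementary classical invariant theory.
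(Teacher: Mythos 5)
Your argument is correct, but it takes a genuinely different route from the paper's. The paper proves Proposition \ref{signdelta} indirectly: it invokes Lemma 3.2 of \cite{MRR}, which characterizes the sign of $\Delta_L(p,q)$ by the number of unit normal directions $v$ for which $(p,q)$ is a degenerate critical point of the height function $h[\iota,v]$, and then Proposition A.4 of \cite{BGT}, which identifies singularities of height functions with contact with hyperplanes and shows that this stratification is affine invariant. You instead argue directly by classical invariant theory: $\Delta_L$ is $\tfrac14$ times the Sylvester resultant of the two binary quadratics encoding the normal-quotient-valued second fundamental form (an affinely equivariant tensor), and the resultant is a relative invariant of weights $(\det M)^4$ and $(\det N)^2$ under changes of tangent and normal frames, both strictly positive; the transformation laws you quote are correct (the second follows, for instance, from $\mathrm{Res}(Q_1,Q_2)=\mathrm{disc}_{\lambda,\mu}\det(\lambda B_1+\mu B_2)$). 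Your proof is more elementary and self-contained, avoids the height-function machinery, and actually establishes the stronger fact that $\Delta_L$ itself transforms by an explicit positive factor; the paper's citation-based proof has the advantage that the very same references immediately yield Proposition \ref{parabolic}, the affine invariance of the finer classification of parabolic points, which your resultant computation alone does not provide. One small correction: by (\ref{Deltadiscrim}) we have $\Delta_L=-16\,\Delta(j^3_qS)$, so $\Delta_L$ is a \emph{negative}, not positive, multiple of the discriminant of the binary cubic; this is harmless for sign-invariance, since any nonzero universal constant preserves the trichotomy, but the sign should be fixed if you retain that side remark.
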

\begin{proof} The proof follows from the following two statements:\\ (i) The sign of $\Delta_L$ stratifies the singularities of height functions $h[\iota]:L\times S^3\to\mathbb R \ , \ (m,v)\mapsto\langle \iota(m),v\rangle$, where $\iota: L\to\mathbb R^4$ is an embedding, $S^3\subset\mathbb R^4$ is the unit sphere, and $\langle\cdot , \cdot\rangle$ is the euclidean inner product in $\mathbb R^4$ (see \cite{MRR}, Lemma 3.2, which relates the sign of $\Delta_L(p,q)$ to the number of unit vectors $v$ normal to $L$ at $(p,q)\in L$ for which $(p,q)$ is a degenerate critical point of the height function $h[\iota,v]:L\to\mathbb R$).\\ (ii) The stratification of the singularities of height functions $h[\iota]$ is invariant under affine transformations (see \cite{BGT}, Proposition A.4, which relates singularities of height functions to contact with hyperplanes and shows that the stratification of these contacts is affine invariant).
\end{proof}

Recall that a point $(p,q)\in L$ is called: \
(i)  {\bf parabolic} if $\Delta_L(p,q)=0$, \\
(ii) {\bf elliptic} if $\Delta_L(p,q)>0$, \
(iii) {\bf hyperbolic} if $\Delta_L(p,q)<0$, \ see \cite{MRR}.

\

From Proposition \ref{signdelta}, such a classification of points on $L\subset\mathbb R^4$ (with symplectic structure) is affine (and therefore affine-symplectic) invariant and, from equation (\ref{Deltadiscrim}),  we obtain the following immediate corollary of Theorem \ref{thmD4-}, which gives a geometrical characterization of singularities $D_{4/2}^{\pm}$ of the Wigner caustic on shell.
\begin{cor}\label{D4real}
Let $L$ be a Lagrangian surface. Iff $(p,q)\in L$ is a hyperbolic point, the germ of  Wigner caustic on shell at $(p,q)$ is generated by function-germ of type $D_{4/2}^-$  and it consists of $L$ only, being simple stable.
 Iff $(p,q)\in L$ is an elliptic point,  the germ of  Wigner caustic on shell  is generated by function-germ  of type $D_{4/2}^+$  and is
Lagrangian equivalent to the following simple stable germ at $0$:
\begin{equation}
{\bf E}_{{1}/{2}}(L)=\left\{(p,q)\in \mathbb R^4: 3 p_1^2=p_2^2, p_2\le 0\right\}. \nonumber
\end{equation}
\end{cor}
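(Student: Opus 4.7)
The plan is to combine Theorem \ref{thmD4-} with the discriminant formula (\ref{Deltadiscrim}) to match the hyperbolic and elliptic cases of $(p,q)\in L$ with the singularity types $D_{4/2}^-$ and $D_{4/2}^+$ respectively, and then compute the caustic of each normal form from Corollary \ref{D_k-versal} directly. Using $\Delta_L(p,q) = -16\Delta(j^3_q S)$, the sign of $\Delta_L$ translates into the hypothesis of Theorem \ref{thmD4-}: hyperbolic ($\Delta_L < 0$) becomes $\Delta(j^3_q S) > 0$ and yields the $D_{4/2}^-$ normal form, while elliptic ($\Delta_L > 0$) becomes $\Delta(j^3_q S) < 0$ and yields $D_{4/2}^+$. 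For the converse direction of the ``iff,'' I will observe that fibred $\mathcal R^{odd}$-equivalence acts on the cubic $3$-jet of $S$ by an invertible linear change of the $\beta$-variables, which rescales $\Delta(j^3_0 S)$ by the square of a determinant and so preserves its sign; hence the normal forms $D_{4/2}^{\pm}$ are mutually exclusive and each forces the corresponding sign of $\Delta_L$, which is itself an affine-symplectic invariant by Proposition \ref{signdelta}.

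For the explicit caustic descriptions I will apply $\partial F/\partial \beta_i = 0$ together with the degeneracy condition $\det[\partial^2 F/\partial\beta_i\partial\beta_j] = 0$ to the two normal forms from Corollary \ref{D_k-versal}. In the $D_{4/2}^-$ case the Hessian determinant is $-4(\beta_1^2 + 3\beta_2^2)$, which vanishes only at $\beta = 0$, forcing $p_1 = p_2 = 0$; hence the caustic is the zero section, identified with $L$ in the normal form. In the $D_{4/2}^+$ case the Hessian determinant $4(3\beta_2^2 - \beta_1^2)$ vanishes on the pair of lines $\beta_1 = \pm\sqrt{3}\,\beta_2$, and substituting into the critical-point equations parametrizes the caustic by $p_1 = \mp 2\sqrt{3}\,\beta_2^2$, $p_2 = -6\beta_2^2$, which eliminates to exactly $\{3p_1^2 = p_2^2,\ p_2 \le 0\}$ trivially extended in $q$. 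Simple stability in both cases follows from $D_{4/2}^{\pm}$ being simple and their generating families being $\mathcal R^{odd}$-versal, via Remark \ref{Lag-stable} in the $\mathbb Z_2$-symmetric context.

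The main obstacle I anticipate is not the caustic calculations, which are routine algebra, but articulating the ``iff'' cleanly: pairing the realization half supplied by Theorem \ref{thmD4-} with the invariance half (sign of the cubic discriminant under the $\mathcal R^{odd}$-action) and invoking Proposition \ref{signdelta} so that the hyperbolic/elliptic characterization is manifestly independent of the generating function $S$ chosen to describe $L$.
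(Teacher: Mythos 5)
Your proposal is correct and follows exactly the route the paper takes: the paper presents this as an ``immediate corollary'' of Theorem \ref{thmD4-} via the identity $\Delta_L(p,q)=-16\Delta(j^3_qS)$ and Proposition \ref{signdelta}, which is precisely your argument. Your explicit caustic computations from the $D_{4/2}^{\pm}$ normal forms and your observation that the sign of the cubic discriminant is preserved under linear changes of the $\beta$-variables (settling the ``iff'') are correct details that the paper leaves implicit.
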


\

We also recall \cite{MRR} that a {\it parabolic} point $m\in M^2\subset\mathbb R^4$ is called

\noindent (i-i) an {\it inflection point of imaginary type}, if $\kappa(m)>0$,

\noindent (i-ii) an {\it inflection point of real type}, if $\kappa(m)<0$, $rank\{II_{(m)}\}=1$,

\noindent (i-iii) a {\it point of nondegenerate ellipse},  if $\kappa(m)<0$, $rank\{II_{(m)}\}=2$,

\noindent (i-iv) an {\it inflection point of flat type}, if $\kappa(m)=0$.

\

Again, we refer to \cite{MRR} where the above classification of parabolic points on $M^2\subset\mathbb R^4$ is related to the classification of singularities of height functions, which, from Proposition $A.4$ in \cite{BGT} implies:
\begin{prop}\label{parabolic}
When $\Delta(p,q)=0$, the classification of the parabolic point $m=(p,q)\in M^2\subset\mathbb R^4$ (with symplectic structure)  given by (i-i)-(i-iv) above is affine (and therefore affine-symplectic) invariant.
\end{prop}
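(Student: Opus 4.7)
The plan is to mimic exactly the two-step strategy used to prove Proposition~\ref{signdelta}. Namely: (a) show that the four types (i-i)--(i-iv) of parabolic points are characterised by $\mathcal R$-equivalence classes of singularities of the height functions $h[\iota,v]:L\to \mathbb R$, where $v$ ranges over a distinguished set of unit normals to $L$ at $(p,q)$; and then (b) apply Proposition A.4 of \cite{BGT} to conclude that this singularity-type stratification is preserved by the affine group acting on $\mathbb R^4$, and in particular by the affine-symplectic subgroup.

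For step (a), I would recall from \cite{MRR} that at a parabolic point $(p,q)\in L$ the set of normal directions $v$ for which $(p,q)$ is a degenerate critical point of $h[\iota,v]$ collapses to a single (oriented) line, giving a distinguished normal $v_0$ along which $h[\iota,v_0]$ has a critical point whose $2$-jet has rank at most $1$. The four subtypes (i-i)--(i-iv) are then read off from the higher-order jets of $h[\iota,v_0]$ together with data from nearby normals: the sign of $\kappa(p,q)$ distinguishes imaginary versus real degeneracy of the binary cubic form built from the third jet, the rank of $II_{(p,q)}$ separates (i-ii) from (i-iii), and $\kappa(p,q)=0$ singles out the flat case (i-iv). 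Each refinement is an $\mathcal R$-invariant condition on the family $\{h[\iota,v]\}$, so the partition (i-i)--(i-iv) is truly a stratification by $\mathcal R$-singularity types of height functions.

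For step (b), Proposition A.4 of \cite{BGT} identifies the $\mathcal R$-singularity type of $h[\iota,v]$ at $(p,q)$ with the contact of $L$ with the hyperplane $\{x:\langle x-\iota(p,q),v\rangle=0\}\subset\mathbb R^4$, and shows that this contact type is invariant under affine transformations of the ambient space. Consequently any affine map $A:\mathbb R^4\to\mathbb R^4$ sends a parabolic point of type (i-j) on $L$ to a parabolic point of type (i-j) on $A(L)$, and restricting to affine-symplectomorphisms yields the stated invariance. The main (and essentially only non-mechanical) obstacle will be carrying out step (a) carefully enough to pin down, for each of (i-i)--(i-iv), the exact $\mathcal R$-singularity type of $h[\iota,v_0]$ to which it corresponds; this is a bookkeeping exercise extracted from the classification in \cite{MRR}, but requires some care because the distinction between (i-ii) and (i-iii) is governed by the rank of $II_{(p,q)}$ rather than merely by the sign of $\kappa$.
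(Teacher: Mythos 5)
Your proposal follows exactly the argument the paper uses: the types (i-i)--(i-iv) are identified in \cite{MRR} with a stratification by singularity types of the height functions $h[\iota,v]$, and Proposition A.4 of \cite{BGT} then gives affine invariance of that stratification via contact with hyperplanes, hence affine-symplectic invariance a fortiori. The paper states this in one sentence by direct citation, whereas you spell out the bookkeeping in step (a); the substance is the same.
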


And thus, finally, we obtain the other geometric characterizations.

\begin{cor}\label{last} If $L$ is a Lagrangian surface, then $L$ has no inflection points of real or imaginary types.
The germ of the Wigner caustic on shell at $(p,q)\in L$ has simple stable $\mathbb Z_2$-symmetric singularity generated by function-germ of  type $D_{6/2}^{\pm}$ or $D_{8/2}^{\pm}$ only if $(p,q)$ is a parabolic point of nondegenerate ellipse, and by  function-germ of type $E_{8/2}$  only if $(p,q)$ is an inflection point of flat type.
\end{cor}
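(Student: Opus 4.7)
My plan is to translate each realisation condition from Theorems \ref{thmD4-}--\ref{thmE8} into a statement about the Gaussian curvature $\kappa$ and the rank of the second fundamental form $II_{(p,q)}$, via the formulas (\ref{Deltadiscrim}) and (\ref{gauss}). Two observations make the translation routine: (i) Lemma \ref{lem-non-pos}, although stated at $0$, applies at any parabolic point $q$, since its proof only uses that $\Delta(j^3_q S)=0$ forces the auxiliary one-variable cubics to have a real double root; (ii) the three $2\times 2$ minors of the $2\times 3$ matrix $II_{(p,q)}$ are exactly
\[
S_{3,0}S_{1,2}-S_{2,1}^{2},\qquad S_{3,0}S_{0,3}-S_{2,1}S_{1,2},\qquad S_{2,1}S_{0,3}-S_{1,2}^{2}
\]
(evaluated at $q$), and these vanish simultaneously if and only if $j^3_q S$ is a perfect cube, i.e.\ $\mathrm{rank}\,II_{(p,q)}\le 1$.

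For the first assertion, an imaginary or real inflection point is parabolic, so $\Delta_L(p,q)=0$. By Lemma \ref{lem-non-pos} both parenthesised summands in (\ref{gauss}) are non-positive, so $\kappa(p,q)\le 0$, excluding imaginary inflections. For a real inflection one further needs $\mathrm{rank}\,II_{(p,q)}=1$, which forces all three minors above to vanish, so (\ref{gauss}) gives $\kappa(p,q)=0$, contradicting $\kappa<0$.

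For the second assertion, suppose the germ is simple stable of type $D_{6/2}^{\pm}$ or $D_{8/2}^{\pm}$. By Theorem \ref{odd-class-2}, the $3$-jet of the odd generating function-germ $f$ given by (\ref{f}) is linearly equivalent to $\beta_1^{2}\beta_2$ (and not to $\beta_1^{2}\beta_2\pm\beta_2^{3}$ nor to $\beta_1^{3}$); since $j^3_0 f=j^3_q S$ and $\mathrm{GL}(2)\subset\mathcal R^{odd}$, this says $j^3_q S$ has a double but not triple linear factor, equivalently $\Delta(j^3_q S)=0$ while at least one of the three minors above is non-zero. Lemma \ref{lem-non-pos} then forces that minor to be strictly negative, so (\ref{gauss}) yields $\kappa(p,q)<0$; the non-vanishing minor also gives $\mathrm{rank}\,II_{(p,q)}=2$, so $(p,q)$ is a parabolic point of non-degenerate ellipse. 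If instead the type is $E_{8/2}$, the normal form $\beta_1^{3}+\beta_2^{5}$ shows $j^3_q S$ is a perfect cube, so all three minors vanish and (\ref{gauss}) immediately yields $\kappa(p,q)=0$, making $(p,q)$ an inflection point of flat type. The main conceptual point, which I would verify last, is that the vanishing pattern of the three minors above is an $\mathcal R^{odd}$-invariant of $f$, so that one may pass freely between the normal forms of Theorem \ref{odd-class-2} and the $S_{i,j}(q)$ in the original coordinates; this is immediate because $\mathcal R^{odd}$ acts on $3$-jets of elements of $\mathcal M_2^{3(odd)}$ through its linear part $\mathrm{GL}(2)$ only, higher-order odd terms in the coordinate change contributing nothing to $j^3$ when $f\in\mathcal M_2^{3(odd)}$.
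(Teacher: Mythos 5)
Your argument follows the paper's own route: Lemma \ref{lem-non-pos} together with formula (\ref{gauss}) for the first assertion, and the translation of each singularity type into conditions on $\Delta(j^3_qS)$ and on the minors of $II_{(p,q)}$ for the second. Your explicit remark that $\mathcal R^{odd}$ acts on $3$-jets of elements of $\mathcal M_2^{3(odd)}$ only through $GL(2)$ is a good way of making rigorous the ``only if'' direction, which the paper obtains more tersely by invoking the hypotheses of Theorems \ref{thmD6+}--\ref{thmE8}. There is, however, one small hole in the $D_{6/2}^{\pm}$, $D_{8/2}^{\pm}$ case: you infer $\kappa(p,q)<0$ from ``some minor is nonzero'' plus Lemma \ref{lem-non-pos}, but the nonzero minor could a priori be the middle one, $S_{3,0}S_{0,3}-S_{2,1}S_{1,2}$, which Lemma \ref{lem-non-pos} does not control and which does not appear in (\ref{gauss}); if both outer minors vanished while the middle one did not, you would get $\kappa=0$, i.e.\ an inflection point of flat type rather than a point of nondegenerate ellipse, and the claim would fail. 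This configuration is in fact impossible when $\Delta(j^3_qS)=0$ with a double, non-triple, linear factor: the three minors are, up to a common positive constant, the coefficients of the Hessian covariant of the binary cubic $j^3_qS$, which for a cubic of the form $\ell^2m$ (with $\ell,m$ independent) is a nonzero multiple of $\ell^2$, hence a rank-one semidefinite quadratic form, and so cannot have only its mixed coefficient nonzero; equivalently, $\kappa$ is a positive multiple of the trace of that nonzero negative semidefinite form and is therefore strictly negative. With that half-line added, your proof is complete and coincides in substance with the paper's.
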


\begin{proof}
(\ref{gauss}) and Lemma \ref{lem-non-pos}  imply that if $(p,q)$ is a parabolic point of a Lagrangian surface $L$ then the Gaussian curvature $\kappa(p,q)$ is nonpositive. Thus, $L$ has no inflection points of imaginary type. The simple observation, that, in the Lagrangian case (only), if $rank\{II_{(p,q)}\}=1$ then $\kappa(p,q)=0$, implies $L$ has no inflection points of real type. The second statement follows from Theorems \ref{thmD6+}-\ref{thmD8+}-\ref{thmE8} and the fact that $\Delta(j_0^3S)=0$ together with one of the conditions $S_{3,0}S_{1,2}-S^2_{2,1}=0, S_{3,0}\neq 0$, or $S_{0,3}S_{2,1}-S^2_{1,2}=0, S_{0,3}\neq 0$, imply $\kappa(p,q)=0$.
\end{proof}

\begin{rem} Simply saying that $(p,q)\in L$ is a parabolic point of nondegenerate ellipse is not enough to characterize the type of singularity of the Wigner caustic on shell at $(p,q)$. Therefore, for a Lagrangian surface, the type of singularity of the Wigner caustic on shell at a parabolic point of nondegenerate ellipse is a further affine-symplectic invariant that allows for a finer classification of the point. \end{rem}

\end{document}